
\documentclass[journal]{IEEEtran}
\bibliographystyle{IEEEtran}
\usepackage{amssymb}
\usepackage{comment}
\usepackage[usenames,dvipsnames]{pstricks}
\usepackage{epsfig}
\usepackage{amsmath}
\usepackage{setspace}
\usepackage{subfigure}
\usepackage{enumerate}
\usepackage{bm}
\usepackage{cite}

\allowdisplaybreaks

\newtheorem{definition}{Definition}
\newtheorem{theorem}{Theorem}
\newtheorem{lemma}{Lemma}

\long\def\symbolfootnote[#1]#2{\begingroup%
 \def\thefootnote{\fnsymbol{footnote}}\footnote[#1]{#2}\endgroup}

\def\Z {{\mathbb {Z}}}

\def\R {{\mathbb {R}}}
\def\bx {{\bf x}}	\def\bv {{\bf v}}	\def\bz {{\bf z}}
\def\by {{\bf y}} 
 
\begin{document}

\title{Telescoping Recursive Representations and Estimation of Gauss-Markov Random Fields}
\author{Divyanshu Vats,~\IEEEmembership{Student Member,~IEEE,} and Jos\'{e} M. F. Moura,~\IEEEmembership{Fellow,~IEEE}%
\thanks{To appear in the Transactions on Information Theory}
\thanks{The authors are with the Department of Electrical and Computer Engineering, Carnegie Mellon University, Pittsburgh, PA, 15213, USA (email: dvats@andrew.cmu.edu, moura@ece.cmu.edu, ph: (412)-268-6341, fax: (412)-268-3980)} 
}
\date{}
\maketitle
\begin{abstract}
We present \emph{telescoping} recursive representations for both continuous and discrete indexed noncausal Gauss-Markov random fields.  Our recursions start at the boundary (a hypersurface in $\R^d$, $d \ge 1$) and telescope inwards.  For example, for images, the telescoping representation reduce recursions from $d = 2$ to $d = 1$, i.e., to recursions on a single dimension.  Under appropriate conditions, the recursions for the random field are linear stochastic differential/difference equations driven by white noise, for which we derive recursive estimation algorithms, that extend standard algorithms, like the Kalman-Bucy filter and the Rauch-Tung-Striebel smoother, to noncausal Markov random fields.
\end{abstract}

\begin{IEEEkeywords}
Random Fields, Gauss-Markov Random Fields, Gauss-Markov Random Processes, Kalman Filter, Rauch-Tung-Striebel Smoother, Recursive Estimation, Telescoping Representation
\end{IEEEkeywords}

\section{Introduction}

We consider the problem of deriving recursive representations for spatially distributed signals, such as temperature in materials, concentration of components in process control, intensity of images, density of a gas in a room, stress level of different locations in a structure, or pollutant concentration in a lake \cite{Van1983,Reu2005,PiciiCarli2008}.  These signals are often modeled using \emph{random fields}, which are random signals indexed over $\R^d$ or $\Z^d$, for $d \ge 2$.  For random \emph{processes}, which are indexed over $\R$, recursive algorithms are recovered by assuming causality.  In particular, for Markov random processes, the future states depend only on the present state given both the past and present states.  When modeling spatial distributions by random fields, it is more appropriate to assume noncausality as opposed to causality.  This leads to noncausal\footnote{When referring to causal or noncausal Markov random fields or processes, we really mean they admit recursive or nonrecursive representations.} Markov random fields (MRFs): the field inside a domain is independent of the field outside the domain given the field on (or near) the domain boundary.  The need for recursive algorithms for noncausal MRFs arises to reduce the increased computational complexity due to the noncausality and the multidimensionality of the index set. 
The assumption of noncausality presents problems in developing recursive algorithms such as the Kalman-Bucy filter for noncausal MRFs.
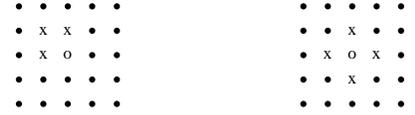
\begin{figure}
\begin{center}
\subfigure[Causal Structure]{
\scalebox{0.65}{
\begin{pspicture}[unit=0.5cm](-3,-1)(2,2)
\psdots(-2,2)(-1,2)(0,2)(1,2)(2,2)
(-2,1)(1,1)(2,1)(-2,0)(1,0)(2,0)
(-2,-1)(-1,-1)(0,-1)(1,-1)(2,-1)
(-2,-2)(-1,-2)(0,-2)(1,-2)(2,-2)
\rput[a](0,0){o} 
\rput[a](-1,0){x}
\rput[a](-1,1){x}
\rput[a](0,1){x}
\end{pspicture}
}
\label{fig:causal}
}
\subfigure[Noncausal Structure]{
\scalebox{0.65}{
\begin{pspicture}[unit=0.5cm](-3,-1)(2,2)
\psdots(-2,2)(-1,2)(0,2)(1,2)(2,2)
(-2,1)(1,1)(2,1)(-2,0)(2,0)
(-2,-1)(-1,-1)(-1,1)(1,-1)(2,-1)
(-2,-2)(-1,-2)(0,-2)(1,-2)(2,-2)
\rput[a](0,0){o} 
\rput[a](-1,0){x}
\rput[a](1,0){x}
\rput[a](0,1){x}
\rput[a](0,-1){x}
\end{pspicture}
\label{fig:noncausal}
}}
\caption{Causal and Noncausal models for random fields}
\vspace{-0.5cm}
\label{fig:folding}
\end{center}
\end{figure}

Instead, to derive recursive algorithms, many authors make causal approximations to random fields over $\R^2$ or $\Z^2$, see \cite{Abend1965, Habibi1972, Woods1977, Jain1977,Pickard1977, Marzetta1980, OgierWong1981, Goutsias1989}.  An example of a random field with causal structure is shown in Fig. \ref{fig:causal}.  It is assumed that the site indicated by `o' depends on the neighbors indicated by `x'.  Such fields do not capture fully the spatial dependence, as for example, when the field at a spatial location depends on its neighbors.  More appropriate representations are noncausal models, an example of which is the nearest neighbor model shown in Fig. \ref{fig:noncausal}.  In \cite{LevyAdamsWillsky1990}, the authors derive recursive estimation equations for \emph{nearest neighbor models} over $\Z^2$ by stacking two rows (or columns) at a time of the lattice into one vector and thus converting the two-dimensional (2-D) estimation problem into a one-dimensional (1-D) estimation problem with state of dimension $2n$, for an $n \times n$ lattice.  However, the algorithm in \cite{LevyAdamsWillsky1990} is restricted to nearest neighbor models with boundary conditions being local, i.e., they involve only neighboring points along the boundary.
In \cite{MouraBalram1992}, the authors derive a recursive representation for general noncausal Gauss-Markov random fields (GMRFs) over $\Z^2$ by stacking the field in each row (or column) and factoring the field covariance to get 1-D state-space models.  However, the models in 
\cite{MouraBalram1992} are only valid when the boundary conditions are assumed to be zero.  Further, since we can not stack columns or rows over a continuous index space, it is not clear how the methods of \cite{LevyAdamsWillsky1990} and \cite{MouraBalram1992} can be extended to derive recursive representations for noncausal GMRFs over $\R^d$ for $d \ge 2$.

For noncausal \emph{isotropic} GMRFs over $\R^2$, the authors in \cite{TewfikLevyWillsky1991} derived recursive representations, and subsequently recursive estimators, by transforming the 2-D problem into a countably infinite number of 1-D problems.  This transformation was possible because of the isotropy assumption since isotropic fields over $\R^2$, when expanded in a Fourier series in terms of the polar coordinate angle, the Fourier coefficient processes of different orders are uncorrelated \cite{TewfikLevyWillsky1991}.  In this way, the authors derived recursive representations for the Fourier coefficient process.  The recursions in \cite{TewfikLevyWillsky1991} are with respect to the radius when the field is represented in polar coordinate form.  The algorithm is an approximate recursive estimation algorithm since it requires solving a set of countably \emph{infinite} number of 1-D estimation problems \cite{TewfikLevyWillsky1991}.  For random fields with discrete indices, \emph{nonrecursive} approximate estimation algorithms can be found in the literature on estimation of graphical models, e.g., \cite{WainwrightJordan2008}.

In this paper, we present a telescoping recursive representation for general noncausal Gauss-Markov random fields defined on a closed continuous index set in $\R^d$, $d \ge 2$, or on a closed discrete index set in $\Z^d$, $d \ge 2$.  The telescoping recursions initiate at the boundary of the field and recurse inwards.  For example, in Fig.~\ref{fig:recursions}(a), for a GMRF defined on a unit disc, we derive telescoping representations that recurse radially inwards to the center of the field. 
For the same field, we derive an equivalent representation where the telescoping surfaces are not necessarily symmetric about the center of the disc, see Fig. \ref{fig:recursions}(b).  Further, the telescoping surfaces, under appropriate conditions, can be arbitrary as shown in Fig.~\ref{fig:recursions}(c).  In general, for a field indexed in $\R^d$, $d \ge 2$, the corresponding telescoping surfaces will be hypersurfaces in $\R^d$.
We parametrize the field using two parameters: $\lambda \in [0,1]$ and $\theta \in \Theta \subset \R^{d-1}$.  The parameter $\lambda$ indicates the position of the telescoping surface and the set $\Theta$ parameterizes the boundary of the index set.  For example, for the unit disc with recursions as in 
Fig.~\ref{fig:recursions}(a), the telescoping surfaces are circles, and we can use polar coordinates to parameterize the field: radius $\lambda$ and angle $\theta \in \Theta = [-\pi,\pi]$.  The telescoping surfaces are represented using a \emph{homotopy} from the boundary of the field to a point within the index set (which is not on the boundary).  The net effort for $d = 2$ is to represent the field by a recursion in $\lambda$, i.e., a single parameter (or dimension) rather than multiple dimensions.

The key idea in deriving the telescoping representation is to establish a notion of ``time" for Markov random fields.  We show that the parameter $\lambda$, which corresponds to the telescoping surface, acts as time.  In our telescoping representation, we define the state to be the field values at the telescoping surfaces.  The telescoping recursive representation we derive is a linear stochastic differential equation in the parameter $\lambda$ and is driven by Brownian motion.  For a certain class of \emph{homogeneous isotropic} GMRFs over $\R^2$, for which the covariance is a function of the Euclidean distance between points, we show that the driving noise is 2-D white Gaussian noise.  For the Whittle field \cite{Whittle1954} defined over a unit disc, we show that the driving noise is zero and the field is uniquely determined using the boundary conditions.

Using the telescoping recursive representation, we promptly recover recursive algorithms, such as the Kalman-Bucy filter \cite{KalmanBucy1961} and the Rauch-Tung-Striebel (RTS) smoother \cite{RauchTungStriebel1965}.  For the Kalman-Bucy filter, we sweep the observations over the telescoping surfaces starting at the boundary and recursing inwards.  For the smoother, we sweep the observations starting from the inside and recursing outwards.  Although, we use the RTS smoother in this paper, other known smoothing algorithms can be used as well, see \cite{KailathSayed,BadawiLP1979,FerrantePicci2000}.

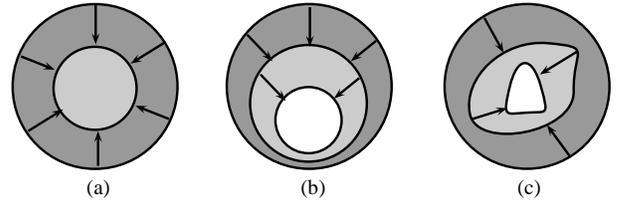
\begin{figure}
\begin{center}
\scalebox{0.8} 
{
\begin{pspicture}(0,-1.39)(9.96,1.39)
\definecolor{color2918b}{rgb}{0.6,0.6,0.6}
\definecolor{color66b}{rgb}{0.8,0.8,0.8}
\rput{-270.0}(8.57,-8.57){\pscircle[linewidth=0.04,dimen=outer,fillstyle=solid,fillcolor=color2918b](8.57,0.0){1.39}}
\rput{-270.0}(4.95,-4.95){\pscircle[linewidth=0.04,dimen=outer,fillstyle=solid,fillcolor=color2918b](4.95,0.0){1.39}}
\pscircle[linewidth=0.04,dimen=outer,fillstyle=solid,fillcolor=color2918b](1.39,0.0){1.39}
\pscircle[linewidth=0.04,dimen=outer,fillstyle=solid,fillcolor=color66b](1.39,-0.03){0.71}
\psline[linewidth=0.04cm,arrowsize=0.05291667cm 2.0,arrowlength=1.4,arrowinset=0.4]{->}(1.4,1.36)(1.4,0.7)
\psline[linewidth=0.04cm,arrowsize=0.05291667cm 2.0,arrowlength=1.4,arrowinset=0.4]{->}(2.62,-0.54)(2.06,-0.32)
\psline[linewidth=0.04cm,arrowsize=0.05291667cm 2.0,arrowlength=1.4,arrowinset=0.4]{->}(0.28,-0.74)(0.86,-0.38)
\psline[linewidth=0.04cm,arrowsize=0.05291667cm 2.0,arrowlength=1.4,arrowinset=0.4]{->}(1.44,-1.32)(1.44,-0.74)
\psline[linewidth=0.04cm,arrowsize=0.05291667cm 2.0,arrowlength=1.4,arrowinset=0.4]{<-}(0.72,0.28)(0.16,0.5)
\psline[linewidth=0.04cm,arrowsize=0.05291667cm 2.0,arrowlength=1.4,arrowinset=0.4]{<-}(1.98,0.38)(2.56,0.74)
\rput{-270.0}(4.66,-5.22){\pscircle[linewidth=0.04,dimen=outer,fillstyle=solid,fillcolor=color66b](4.94,-0.28){0.99}}
\rput{-270.0}(4.38,-5.5){\pscircle[linewidth=0.04,dimen=outer,fillstyle=solid](4.94,-0.56){0.57}}
\psline[linewidth=0.04cm,arrowsize=0.05291667cm 2.0,arrowlength=1.4,arrowinset=0.4]{->}(4.96,1.32)(4.96,0.66)
\psline[linewidth=0.04cm,arrowsize=0.05291667cm 2.0,arrowlength=1.4,arrowinset=0.4]{<-}(5.64,0.42)(6.06,0.76)
\psline[linewidth=0.04cm,arrowsize=0.05291667cm 2.0,arrowlength=1.4,arrowinset=0.4]{<-}(4.34,0.42)(3.92,0.86)
\psline[linewidth=0.04cm,arrowsize=0.05291667cm 2.0,arrowlength=1.4,arrowinset=0.4]{<-}(5.36,-0.2)(5.78,0.14)
\psline[linewidth=0.04cm,arrowsize=0.05291667cm 2.0,arrowlength=1.4,arrowinset=0.4]{<-}(4.56,-0.24)(4.14,0.2)
\psbezier[linewidth=0.04,fillstyle=solid,fillcolor=color66b](7.62,-0.46)(7.745445,-1.06)(9.357111,-0.78)(9.343739,-0.08)(9.330366,0.62)(9.68,0.64846694)(8.921322,0.72)(8.162643,0.79153305)(7.4945555,0.14)(7.62,-0.46)
\psbezier[linewidth=0.04,fillstyle=solid](8.280875,-0.0055799484)(8.406465,0.52)(8.655096,0.5196132)(8.797548,0.0057087666)(8.94,-0.50819564)(8.90061,-0.4120821)(8.556434,-0.43455112)(8.212258,-0.45702016)(8.155286,-0.5311599)(8.280875,-0.0055799484)
\psline[linewidth=0.04cm,arrowsize=0.05291667cm 2.0,arrowlength=1.4,arrowinset=0.4]{->}(7.86,1.14)(8.18,0.56)
\psline[linewidth=0.04cm,arrowsize=0.05291667cm 2.0,arrowlength=1.4,arrowinset=0.4]{->}(9.28,-1.16)(8.9,-0.64)
\psline[linewidth=0.04cm,arrowsize=0.05291667cm 2.0,arrowlength=1.4,arrowinset=0.4]{->}(9.4,0.58)(8.78,0.2)
\psline[linewidth=0.04cm,arrowsize=0.05291667cm 2.0,arrowlength=1.4,arrowinset=0.4]{->}(7.66,-0.54)(8.24,-0.36)
\usefont{T1}{ptm}{m}{n}
\rput(8.605156,-1.7){(c)}
\usefont{T1}{ptm}{m}{n}
\rput(5.0151563,-1.7){(b)}
\usefont{T1}{ptm}{m}{n}
\rput(1.4451562,-1.7){(a)}
\end{pspicture} 
}
\end{center}
\caption{Different kinds of telescoping recursions for a GMRF defined on a disc.}
\label{fig:recursions}
\end{figure}

We derive the telescoping representation in an abstract setting over index sets in $\R^d$, $d \ge 2$.  We can easily specialize this to index sets over $\Z^d$, $d \ge 2$.  We show an example of this for GMRFs defined over a lattice.  We see that, unlike the continuous index case that admits many equivalent telescoping recursions, the telescoping recursion for discrete index GMRFs is unique.


The organization of the paper is as follows.  Section \ref{sec:gmrf} reviews the theory of GMRFs.  Section \ref{sec:telescoping_cont} introduces the telescoping representation for GMRFs indexed on a unit disc.  Section \ref{sec:telescoping_cont_arbit} generalizes the telescoping representations to arbitrary domains.  Section \ref{sec:rec_estimation_gmrf} derives recursive estimation algorithms using the telescoping representation.  Section \ref{sec:tel_rep_gmrf} derives telescoping recursions for GMRFs with discrete indices.  Section \ref{sec:summary} summarizes the paper.

\section{Gauss-Markov Random Fields}
\label{sec:gmrf}

\subsection{Continuous Indices}

For a random process $x(t)$, $t \in \R$, the notion of Markovianity corresponds to the assumption that the past $\{x(s): s<t\}$, and the future $\{x(s): s>t\}$ are conditionally independent given the present $x(s)$.  Higher order Markov processes can be considered when the past is independent of the future given the present and information near the present.  The extension of this definition to random fields, \emph{i.e.}, a random process indexed over $\R^d$ for $d\ge 2$, was introduced in \cite{Levy1956}.  Specifically, a random field $x(t)$, $t \in T \subset \R^d$, is Markov if for any smooth surface $\partial G$ separating $T$ into complementary domains, the field inside is independent of the field outside conditioned on the field on (and near) $\partial G$.  To capture this definition in a mathematically precise way, we use the notation introduced in \cite{McKean1963}.  On the probability space $(\Omega, {\cal F}, {\cal P})$, let\footnote{For ease in notation, we assume $x(t) \in \R$, however our results remain valid for $x(t) \in \R^n$, when $n \ge 2$.} $x(t) \in \R$  be a zero mean random field for $t \in T \subset \R^d$, where $d \ge 2$ and let $\partial T \subset T$ be the smooth boundary of $T$.  For any set $A \subset T$, denote $x(A)$ as
\begin{equation}
x(A) = \{x(t):t \in A\} \,. 
\end{equation}
Let $G_{-} \subset T$ be an open set with smooth boundary $\partial G$ and let $G_{+}$ be the complement of $G_{-} \cup \partial G$ in $T$.  Together, $G_{-}$ and $G_{+}$ are called \emph{complementary sets}.  Fig. \ref{fig:notation}(a) shows an example of the sets $G_{-}$, $G_{+}$, and $\partial G$ on a domain $T \subset \R^2$.  \
\begin{figure}
\centering
\scalebox{0.8} 
{
\begin{pspicture}(0,-1.7984375)(10.14,1.7984375)
\psbezier[linewidth=0.04](0.21241651,0.45999998)(0.42483303,1.4371792)(1.7434434,1.5553567)(2.6924167,1.24)(3.6413898,0.9246431)(3.9810684,0.831175)(3.6724164,-0.12000002)(3.3637645,-1.0711751)(2.5910704,-1.0055151)(1.6124164,-0.8)(0.63376254,-0.5944849)(0.0,-0.51717925)(0.21241651,0.45999998)
\psbezier[linewidth=0.04](1.0746561,0.17423652)(1.1656368,-0.5365875)(1.8273737,-0.64000005)(2.289895,-0.0981443)(2.7524164,0.44371143)(2.6667325,0.3643695)(1.9924164,0.65999997)(1.3181002,0.9556305)(0.98367554,0.8850605)(1.0746561,0.17423652)
\usefont{T1}{ptm}{m}{n}
\rput(1.628979,0.30999997){$G_{-}$}
\usefont{T1}{ptm}{m}{n}
\rput(3.0153852,-0.07){$G_{+}$}
\psline[linewidth=0.04cm,arrowsize=0.05291667cm 2.5,arrowlength=1.5,arrowinset=0.4]{->}(4.572417,0.98)(3.7724166,0.58)
\usefont{T1}{ptm}{m}{n}
\rput(4.983354,1.13){$\partial T$}
\psline[linewidth=0.04cm,arrowsize=0.05291667cm 2.5,arrowlength=1.5,arrowinset=0.4]{->}(3.0524166,1.5)(2.2724166,0.56)
\usefont{T1}{ptm}{m}{n}
\rput(3.2853854,1.61){$\partial G$}
\psline[linewidth=0.04cm,dotsize=0.07055555cm 2.0]{*-*}(6.22,0.3184375)(10.12,0.3184375)
\psline[linewidth=0.04cm](7.76,0.4584375)(7.76,0.1784375)
\usefont{T1}{ptm}{m}{n}
\rput(7.685385,-0.69){$\partial G$}
\usefont{T1}{ptm}{m}{n}
\rput(6.928979,0.71){$G_{-}$}
\usefont{T1}{ptm}{m}{n}
\rput(8.975385,0.71){$G_{+}$}
\psline[linewidth=0.04cm,arrowsize=0.05291667cm 2.0,arrowlength=1.4,arrowinset=0.4]{->}(7.64,-0.4015625)(7.74,0.1184375)
\usefont{T1}{ptm}{m}{n}
\rput(1.9251562,-1.5715625){(a)}
\usefont{T1}{ptm}{m}{n}
\rput(8.095157,-1.5515625){(b)}
\end{pspicture} 
}\caption{(a) An example of complementary sets on a random field defined on $T$ with boundary $\partial T$. (b) Corresponding notion of complementary sets for a random process.}
\label{fig:notation}
\end{figure}
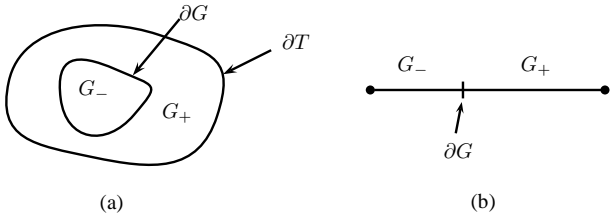
For $\epsilon > 0$, define the set of points from the boundary $\partial G$ at a distance less than $\epsilon$
\begin{equation}
\partial G_{\epsilon} = \{t \in T : d(t,\partial G) < \epsilon \} \,,
\end{equation}
where $d(t,\partial G)$ is the distance of a point $t \in T$ to the set of points $\partial G$.  On $G_{\pm}$ and $\partial G$, define the sets 
\begin{align}
\Sigma_x(G_{\pm}) &= \sigma(x(G_{\pm})) \\
\overline{\Sigma}_x(\partial G) &= \bigcap_{\epsilon > 0} 
\sigma(x(\partial G_{\epsilon})) \,, \label{eq:germ_sigma}
\end{align}
where $\sigma(A)$ stands for the $\sigma$-algebra generated by the set $A$.  If $x(t)$ is a \emph{Markov random field}, the conditional expectation of $x(s)$, $s \notin G_{-}$, given $\Sigma_x(G_{-})$ is the conditional expectation of $x(s)$ given $\overline{\Sigma}_x(\partial G)$, i.e., \cite{McKean1963, Ldpitt1971}
\begin{equation}
E[x(s) | \Sigma_x(G_{-})] = E[x(s) | \overline{\Sigma}_x(\partial G)] \,, \quad s \notin G_{-}  \,. \label{eq:def_mrf}
\end{equation}
Equation (\ref{eq:def_mrf}) also holds for Markov random processes for complementary sets defined as in Fig. \ref{fig:notation}(b).  In the context of Markov processes, the set $G_{-}$ in 
Fig.~\ref{fig:notation}(b) is called the ``past", $G_{+}$ is called the ``future", and $\partial G$ is called the ``present".  The equivalent notions of past, present, and future for random fields is clear from the definition of $G_{-}$, $G_{+}$, and $\partial G$ in Fig.~\ref{fig:notation}(a).

In this paper, we assume $x(t)$ is zero mean Gaussian, giving us a Gauss-Markov random field (GMRF), so the conditional expectation in (\ref{eq:def_mrf}) becomes a linear projection.  Following \cite{Ldpitt1971}, the key assumptions we make throughout the paper are as follows.
\begin{enumerate}[{A}1.]
\item We assume the index set $T \subset \R^d$ is a connected\footnote{A set is connected if it can not be divided into disjoint nonempty closed set.} open set with smooth boundary $\partial T$.
\item The zero mean GMRF $x(t) \in L^2(\Omega, {\cal F}, {\cal P})$, which means that $x(t)$ has finite energy.
\item The covariance of $x(t)$ is $R(t,s)$, where $t,s \in T \subset \R^d$.  The function space of $R(t,s)$ is associated with the uniformly strongly elliptic inner product
\begin{align}
<u,v> &= \left\langle D^{\alpha} u, a_{\alpha,\beta} D^{\beta}v\right\rangle_T \\
&= \sum_{|\alpha|\le m, |\beta|\le m}\int_T D^{\alpha} u(s) a_{\alpha,\beta}(s) D^{\beta}v(s) ds \,,
\label{eq:bilinear_form}
\end{align}
where $a_{\alpha,\beta}$ are bounded, continuous, and infinitely differentiable, $\alpha = [\alpha_1,\cdots,\alpha_d]$ is a multi-index of order $|\alpha| = \alpha_1 + \cdots + \alpha_d$ and the operator $D^{\alpha}$ is the partial derivative operator 
\begin{equation}
D^{\alpha} = D_1^{\alpha_1} \cdots D_d^{\alpha_d} \,,
\end{equation}
where $D_i^{\alpha_i} = \partial^{\alpha_i} / \partial t_i^{\alpha_i}$ for $t = [t_1,\ldots,t_d]$.
\item Since the inner product in (\ref{eq:bilinear_form}) is uniformly strongly elliptic, it follows as a consequence of A3 that $R(t,s)$ is jointly continuous, and thus $x(t)$ can be modified to have continuous sample paths.  We assume that this modification is done, so the GMRF $x(t)$ has continuous sample paths.
\end{enumerate}
Under Assumptions A1-A4, we now review results on GMRFs we use in the paper.

\noindent
\textbf{Weak normal derivatives:}
Let $\partial G$ be a boundary separating complementary sets $G_{-}$ and $G_{+}$.  Whenever we refer to normal derivatives, they are to be interpreted in the following weak sense: For every smooth $f(t)$,
\begin{align}
y(s) &= \frac{\partial}{\partial n} x(s) \nonumber \\
\Rightarrow \int_{\partial G} f(s) y(s) dl
&= \lim_{h \rightarrow 0} \frac{\partial }{\partial h}\int_{\partial G} f(s) x(s+h\dot{s}) dl \,, \label{eq:normalized_deriv}
\end{align}
where $dl$ is the surface measure on $\partial G$ and $\dot{s}$ is the unit vector normal to $\partial G$ at the point $s$.

\noindent
\textbf{GMRFs with order $m$:}  Throughout the paper, unless mentioned otherwise, we assume that the GMRF has order $m$, which can have multiple different equivalent interpretation: (i) the GMRF $x(t)$ has $m-1$ normal derivatives, defined in the weak sense, for each point $t \in \partial G$ for all possible surfaces $\partial G$, (ii) the $\sigma$-algebra $\overline{\Sigma}_x(\partial G)$ in (\ref{eq:germ_sigma}), called the germ $\sigma$-algebra, contains information about $m-1$ normal derivatives of the field on the boundary $\partial G$ \cite{Ldpitt1971}, or (iii) there exists a symmetric and positive strongly elliptic differential operator ${\cal L}_t$ with order $2m$ such that \cite{MouraGoswami1997}
\begin{equation}
{\cal L}_t R(t,s) = \delta(t-s) \label{eq:lr=d}\,,
\end{equation}
where the differential operator has the form,
\begin{equation}
{\cal L}_t u(t) = \sum_{|\alpha|,|\beta| \le m}
(-1)^{|\alpha|} D^{\alpha}[ a_{\alpha,\beta}(t) D^{\beta}(u(t))] \,.
\label{eq:l_t}
\end{equation}

\noindent
\textbf{Prediction:}  The following theorem, proved in \cite{Ldpitt1971}, gives us a closed form expression for the conditional expectation in (\ref{eq:def_mrf}).

\begin{theorem}[$\!\!$\cite{Ldpitt1971}]
\label{thm:prediction}
Let $x(t)$, $t \in T \subset \R^d$, be a zero mean GMRF of order $m$ and covariance $R(t,s)$.  Consider complementary sets $G_{-}$ and $G_{+}$ with common boundary $\partial G$.  For $s \notin G_{-}$, the conditional expectation of $x(s)$ given $\Sigma_x(G_{-})$ is 
\begin{equation}
E[x(s) | \Sigma_x(G_{-})] 
= \sum_{j=0}^{m-1} \int_{\partial G}  b_{j}(s,r) \frac{\partial^j}{\partial n^j} x(r) d l \,, \label{eq:prediction_2}
\end{equation}
where $\partial^j/\partial n^j$ is the normal derivative, defined in (\ref{eq:normalized_deriv}), $dl$ is a surface measure on the boundary $\partial G$, and the functions $b_j(s,r)$, $s \notin G_{-}$ and $r \in \partial G$, are smooth.
\end{theorem}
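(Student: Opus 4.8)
\emph{Step 1 (reduction to a projection onto boundary data).} The plan is to combine the Markov property with the Hilbert-space geometry of Gaussian fields and the elliptic structure encoded in (\ref{eq:lr=d}). Markovianity and (\ref{eq:def_mrf}) let me replace conditioning on $\Sigma_x(G_-)$ by conditioning on the germ $\sigma$-algebra $\overline{\Sigma}_x(\partial G)$. Because $x$ has order $m$, interpretation (ii) of the excerpt identifies $\overline{\Sigma}_x(\partial G)$ as the $\sigma$-algebra generated by the Gaussian family $\{\partial^j x(r)/\partial n^j:r\in\partial G,\ 0\le j\le m-1\}$, the normal derivatives read in the weak sense (\ref{eq:normalized_deriv}). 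Since $x$ is zero-mean and jointly Gaussian, $E[x(s)\mid\overline{\Sigma}_x(\partial G)]$ is then the orthogonal projection, in $L^2(\Omega,{\cal F},{\cal P})$, of $x(s)$ onto the closed linear span ${\cal H}_{\partial G}$ of that family; it remains to show this projection has the integral form (\ref{eq:prediction_2}) with smooth kernels.

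\emph{Step 2 (transfer to the reproducing-kernel space of $R$).} I use the canonical isometry $x(t)\mapsto R(\cdot,t)$ between the Gaussian Hilbert space $\overline{\mathrm{span}}\{x(t):t\in T\}$ and the function space of $R$ equipped with the uniformly strongly elliptic inner product $\langle\cdot,\cdot\rangle$ of (\ref{eq:bilinear_form}); by (\ref{eq:lr=d}), $R(\cdot,s)$ is the representer of evaluation at $s$ and ${\cal L}$ plays the role of the inverse covariance. Under this isometry ${\cal H}_{\partial G}$ corresponds to $H_{\partial G}:=\overline{\mathrm{span}}\{\partial^j_{n_r}R(\cdot,r):r\in\partial G,\ 0\le j\le m-1\}$, and the conditional expectation corresponds to $\psi:=P_{H_{\partial G}}R(\cdot,s)$. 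The reproducing property gives $\langle\phi,\partial^j_{n_r}R(\cdot,r)\rangle=\partial^j\phi(r)/\partial n^j$, so $\phi\perp H_{\partial G}$ exactly when $\phi$ and its first $m-1$ normal derivatives vanish on $\partial G$; and applying ${\cal L}$ to a representer shows, via (\ref{eq:lr=d}), that every element of $H_{\partial G}$ satisfies ${\cal L}\phi=0$ on $G_-$ and on $G_+$. Hence, on $G_-$, $\psi$ is the unique solution of the $2m$-order Dirichlet problem ${\cal L}\psi=0$ with Cauchy data $\partial^k\psi/\partial n^k=\partial^k R(\cdot,s)/\partial n^k$, $k=0,\dots,m-1$, on $\partial G$ (and on $G_+$ the analogous problem carrying the point source at $s$), which is well posed because ${\cal L}$ is strongly elliptic of order $2m$ (A3) and $\partial G$ is smooth.

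\emph{Step 3 (the kernels and their smoothness).} Writing the solution of this Dirichlet problem through Green's second identity for the symmetric operator ${\cal L}$ on $G_-$ exhibits $\psi$ as a sum of layer potentials $\sum_{j=0}^{m-1}\int_{\partial G}b_j(s,r)\,\partial^j_{n_r}R(\cdot,r)\,dl(r)$, the densities $b_j(s,\cdot)$ being fixed by demanding that these potentials reproduce the prescribed Cauchy data --- equivalently, on the probability side, by the normal equations
\begin{equation}
\frac{\partial^k}{\partial n^k}R(s,q)=\sum_{j=0}^{m-1}\int_{\partial G}b_j(s,r)\,\frac{\partial^j}{\partial n_r^j}\frac{\partial^k}{\partial n_q^k}R(r,q)\,dl(r),
\end{equation}
for $q\in\partial G$ and $0\le k\le m-1$. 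The boundary integral operator appearing here is the Calder\'{o}n-type operator of ${\cal L}$; strong ellipticity makes it an elliptic system of pseudodifferential operators on the smooth compact hypersurface $\partial G$, invertible on the appropriate Sobolev scale, so the $b_j(s,\cdot)$ exist. Their joint smoothness in $(s,r)$ follows by elliptic regularity: the data $\partial^k R(s,q)/\partial n^k$ is smooth for $s\notin G_-$, $q\in\partial G$ since ${\cal L}$ is hypoelliptic and hence $R$ is smooth off the diagonal, and inverting an elliptic operator preserves smoothness. Transporting $\psi$ back through the isometry yields (\ref{eq:prediction_2}).

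\emph{Main obstacle.} The load-bearing part is the $2m$-order elliptic theory used in Steps 2--3: identifying $H_{\partial G}^{\perp}$ with the fields vanishing to order $m$ on $\partial G$, the well-posedness of the $2m$-order Dirichlet problem on the smooth domain $G_-$, and the invertibility plus smoothing of the associated boundary operator --- in essence the Agmon--Douglis--Nirenberg and layer-potential machinery. Once these analytic facts are granted, the Markov reduction, Gaussianity, and the reproducing-kernel dictionary are purely formal.
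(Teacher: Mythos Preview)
Your proposal is correct and follows essentially the approach of Pitt that the paper cites and partially unpacks in Appendix~\ref{app:example_gmrf}: pass to the reproducing-kernel Hilbert space attached to the elliptic form~(\ref{eq:bilinear_form}), identify the projection of $R(\cdot,s)$ with a function that is ${\cal L}$-harmonic on each side of $\partial G$ and matches the Cauchy data of $R(\cdot,s)$ on $\partial G$, and then read off the boundary-integral form via integration by parts. The paper's presentation is slightly more concrete --- it writes down the projected function $h_s$ directly as $R(s,\cdot)$ on $G_-$ together with the ${\cal L}$-harmonic extension to $G_+$, and then computes $\langle u,h_s\rangle$ by splitting the integral over $G_\pm$ and integrating by parts --- whereas you arrive at the same object by first characterizing $H_{\partial G}^\perp$ and then invoking layer-potential and Calder\'on-operator machinery. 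Your route makes the role of elliptic regularity in the smoothness of $b_j(s,r)$ more explicit; the paper's route makes the actual formula for $b_j$ more visible, and it separately notes that the case $s\in\partial G$ (which Pitt does not treat) follows by the joint continuity of $R$ and uniform integrability.
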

\begin{IEEEproof}
A detailed proof of Theorem \ref{thm:prediction} can be found in \cite{Ldpitt1971}, where the result is proved for the case when $s \in G_{+}$.  To include the case when $s \in \partial G$, we use the fact that $R(t,s)$ is jointly continuous (consequence of A3) and the uniform integrability of the Gaussian measure (see \cite{Anandkumar2009ISIT}).
\end{IEEEproof}

Theorem~\ref{thm:prediction} says that for each point outside $G_{-}$, the conditional expectation given all the points in $G_{-}$ depends only the field defined on or near the boundary.
This is not surprising since, as stated before, $E[x(s)|\Sigma_x(G_{-})] = E[x(s) | \overline{\Sigma}_x(\partial G)]$, and we mentioned before that $\overline{\Sigma}_x(\partial G)$ has information about the $m-1$ normal derivatives of $x(t)$ on the surface $\partial G$. Appendix \ref{app:example_gmrf} shows how the smooth functions $b_j(s,r)$ can be computed and outlines an example of the computations in the context of a Gauss-Markov process.  In general, Theorem~\ref{thm:prediction} extends the notion of a Gauss-Markov \emph{process} of order $m$ (or an autoregressive process of order $m$) to random fields.

A simple consequence of Theorem \ref{thm:prediction} is that we get the following characterization for the covariance of a GMRF of order $m$.

\begin{theorem}
\label{thm:property_rts}
If $t \in G_{-}$ and $s \notin G_{-}$, the covariance $R(t,s)$ can be written as,
\begin{equation}
R(s,t) =  \sum_{j=0}^{m-1} \int_{\partial G}  b_{j}(s,r) \frac{\partial^j}{\partial n^j} R(r,t) d l \,, \label{eq:property_rts}
\end{equation}
where the normal derivative in (\ref{eq:property_rts}) is with respect to the variable $r$.
\end{theorem}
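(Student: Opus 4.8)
The plan is to feed the closed-form prediction formula of Theorem~\ref{thm:prediction} into the tower property of conditional expectation, exploiting the fact that $x(t)$ is itself $\Sigma_x(G_{-})$-measurable when $t \in G_{-}$.

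First I would note that, since $t \in G_{-}$, the random variable $x(t)$ is one of the generators of $\Sigma_x(G_{-}) = \sigma(x(G_{-}))$ and hence is $\Sigma_x(G_{-})$-measurable. Therefore, writing $R(s,t) = E[x(s)x(t)]$ and conditioning on $\Sigma_x(G_{-})$,
\[
R(s,t) = E\big[\, E[x(s)x(t) \mid \Sigma_x(G_{-})]\,\big] = E\big[\, x(t)\, E[x(s) \mid \Sigma_x(G_{-})]\,\big],
\]
where in the last step the $\Sigma_x(G_{-})$-measurable factor $x(t)$ is pulled out of the inner conditional expectation. Substituting (\ref{eq:prediction_2}) for $E[x(s)\mid\Sigma_x(G_{-})]$, valid since $s \notin G_{-}$, gives
\[
R(s,t) = E\Bigg[\, x(t) \sum_{j=0}^{m-1} \int_{\partial G} b_{j}(s,r)\, \frac{\partial^j}{\partial n^j} x(r)\, dl \,\Bigg].
\]

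Next I would push the expectation through the finite sum (by linearity) and through the surface integral. The latter interchange is a Fubini argument: Assumption A2 gives $x(t),\,x(r) \in L^2$, Assumptions A3--A4 give joint continuity of $R(\cdot,\cdot)$, and the $b_{j}(s,\cdot)$ are smooth on the compact surface $\partial G$, so $\big|E[x(t)\,\partial^j x(r)/\partial n^j]\big|$ is bounded uniformly in $r \in \partial G$ and the integrand is integrable. This yields
\[
R(s,t) = \sum_{j=0}^{m-1} \int_{\partial G} b_{j}(s,r)\, E\!\left[\, x(t)\, \frac{\partial^j}{\partial n^j} x(r) \,\right] dl .
\]
The final step is to identify $E[x(t)\,\partial^j x(r)/\partial n^j]$ with $\partial^j R(r,t)/\partial n^j$, the normal derivative being taken in $r$. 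For $j=1$ this is exactly the content of the weak definition (\ref{eq:normalized_deriv}): the relevant difference quotient of $x$ along $\dot r$ converges in the sense of (\ref{eq:normalized_deriv}); multiplying by the fixed $L^2$ variable $x(t)$ and taking expectations commutes with this limit by Cauchy--Schwarz; and the expectation of that difference quotient is the corresponding difference quotient of $R(\cdot,t)$, whose limit is $\partial R(r,t)/\partial n$ by A3--A4. Iterating $j$ times (a GMRF of order $m$ admits $m-1$ well-defined weak normal derivatives on $\partial G$) gives the claim, hence (\ref{eq:property_rts}).

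I expect the only delicate point to be the simultaneous justification of the two interchanges — expectation with the surface integral, and expectation with the limit defining the weak normal derivative — which calls for the uniform-integrability estimates for the Gaussian measure already invoked in the proof of Theorem~\ref{thm:prediction}; everything else is bookkeeping.
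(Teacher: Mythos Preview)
Your argument is correct and is essentially the paper's own proof: the paper simply observes that $x(s)-E[x(s)\mid\Sigma_x(G_-)]\perp x(t)$ for $t\in G_-$ and then substitutes (\ref{eq:prediction_2}), which is exactly the tower-property computation you wrote out. Your version is more explicit about the Fubini step and the passage of the expectation through the weak normal derivative, both of which the paper leaves implicit.
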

\begin{IEEEproof}
Since $x(s) - E[x(s)|\Sigma_x(G_{-})] \perp x(t)$ for $t \in G_{-}$, using (\ref{eq:prediction_2}), we can easily establish (\ref{eq:property_rts}).
\end{IEEEproof}
Theorem~\ref{thm:property_rts} says that the covariance $R(s,t)$ of a GMRF can be written in terms of the covariance of the field defined on a boundary dividing $s$ and $t$.  Both Theorems \ref{thm:prediction} and \ref{thm:property_rts} will be used in deriving the telescoping recursive representation.

\subsection{Discrete Indices}
\label{sec:discrete_indices}

Discrete index Markov random fields, also known as undirected graphical models, are characterized by interactions of an index point with its neighbors.  In this paper, we only consider GMRFs defined on a lattice $T_0 = [0, N+1] \times [0,M+1]$.  An index $(i,j) \in T_0$ will be called a \emph{node}.  If two nodes are neighbors of each other, we represent this relationship by connecting them with an edge.  A \emph{path} is the set of distinct nodes visited when hopping from node $(i_1,j_1)$ to a node $(i_2,j_2)$ where the hops are only along edges.  A subset of sites $C$ separates two sites $(i_1,j_1) \notin C$ and $(i_2,j_2) \notin C$ if every path from $(i_1,j_1)$ to $(i_2,j_2)$ contains at least one node in C.  Two disjoint sets $ A,B \subset T \backslash C$ are separated by $C$ if every pair of sites, one in $A$ and the other in $B$, are separated by $C$.

We denote the discrete index random field by $x(i,j) \in \R$.  Let ${\cal N}$ denote the neighborhood structure for the random field, then $x(i,j)$ is a GMRF if $x(i,j)$ is independent of $x\left(T_0 \backslash \{{\cal N}\cup(i,j)\}\right)$ given $x({\cal N})$ for $(i,j) \in T_0 \backslash \partial T_0$, where $\partial T_0$ denotes the boundary nodes of $T_0$.  An equivalent way to define GMRFs is using the global Markov property:

\begin{theorem}[Global Markov property\cite{SpeedKiiveri1986}]
\label{thm:global_markov_property}
For a GMRF $x(i,j)$ for $(i,j) \in T_0 = [0,N+1]\times [0,M+1]$, for all disjoint sets $A$, $B$, and $C$ in $T_0$, where $A$ and $B$ are non-empty and $C$ separates $A$ and $B$, $x(A)$ is independent of $x(B)$ given $x(C)$.
\end{theorem}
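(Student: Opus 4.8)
The plan is to deduce the global Markov property from the local, neighborhood-based definition of the GMRF by first rephrasing the latter as a sparsity statement about the precision (inverse-covariance) matrix, and then running a purely graph-theoretic separation argument. Stack the field into the jointly Gaussian vector $x = x(T_0)$ and, supposing for the moment that its covariance $R$ is nonsingular, let $J = R^{-1}$, indexed by the nodes of $T_0$. Step one combines two facts: (i) for a nondegenerate Gaussian, $x(u)$ and $x(v)$ are conditionally independent given all the other variables $x(T_0\setminus\{u,v\})$ exactly when $J_{uv}=0$; and (ii) the local definition, through the weak-union property of conditional independence, entails the pairwise Markov property $x(u)\perp x(v)\mid x(T_0\setminus\{u,v\})$ for non-adjacent $u,v$. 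Together these give $J_{uv}=0$ whenever $u$ and $v$ are not neighbors; equivalently one may invoke the Hammersley--Clifford theorem (its positivity hypothesis being automatic here) to factor the density into potentials supported on the cliques of the neighborhood graph. The boundary nodes $\partial T_0$ need a word of care, since the local property is only posited on $T_0\setminus\partial T_0$; the clean fix is to keep the boundary values inside the conditioning throughout.

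The second step reduces an arbitrary triple $(A,B,C)$ to the case where $C$ is a \emph{full} separator, i.e. $A\cup B\cup C = T_0$. Decompose $T_0\setminus C$ into its connected components in the neighborhood graph; since $C$ separates $A$ from $B$, no component meets both. Let $\widetilde A$ be the union of $A$ with every component that meets $A$, and put $\widetilde B = T_0\setminus(\widetilde A\cup C)$. Then $\widetilde A\supseteq A$, $\widetilde B\supseteq B$, the three sets partition $T_0$, and no edge joins $\widetilde A$ to $\widetilde B$. Since conditional independence is inherited by sub-collections of variables, it suffices to prove $x(\widetilde A)\perp x(\widetilde B)\mid x(C)$.

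For the final step, block-partition $J$ along $(\widetilde A,\widetilde B,C)$. Step one forces $J_{\widetilde A\widetilde B}=0$, so the conditional law of $(x(\widetilde A),x(\widetilde B))$ given $x(C)$ is Gaussian with precision matrix $\mathrm{diag}(J_{\widetilde A\widetilde A},J_{\widetilde B\widetilde B})$, obtained by deleting the rows and columns indexed by $C$; a block-diagonal conditional precision is precisely conditional independence, so $x(\widetilde A)\perp x(\widetilde B)\mid x(C)$, which is what we wanted. I expect the genuine obstacle to be step one: making the passage from the local, neighbor-by-neighbor definition to a global factorization/precision-sparsity statement fully rigorous, which is the content of the Hammersley--Clifford theorem, together with the degenerate case that Assumptions A2--A4 do not rule out, where $R$ can be singular and one must work with a generalized inverse or perturb $R$ to $R+\epsilon I$ and let $\epsilon\to 0$. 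With that in hand, steps two and three are routine.
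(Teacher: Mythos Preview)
The paper does not prove this theorem; it is stated as a known result with a citation to Speed and Kiiveri (1986) and is simply invoked later to justify that the telescoped process $z_k$ is Markov. Your proposal is a correct and standard proof of the global Markov property for nondegenerate Gaussian fields, and is essentially the argument in the cited reference: pairwise Markov $\Leftrightarrow$ zeros in $J=R^{-1}$, enlarge $(A,B)$ to a full partition $(\widetilde A,\widetilde B,C)$ using the connected components of $T_0\setminus C$, and then read off conditional independence from the block-diagonal conditional precision. Your caveats about the boundary nodes and the possibly singular covariance are exactly the right ones to flag; the paper in fact sidesteps the first by treating the lattice model as a chain graph (see the footnote in Section~VI, where the boundary is moralized into a clique), and it implicitly assumes nondegeneracy throughout.
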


For ease in notation and simplicity, we only consider second order neighborhoods denoted by the set ${\cal N}_2$ such that for node $(0,0)$:
\begin{align}
{\cal N}_2 = \{(-1,0),(1,0),(0,-1),(0,1),\nonumber \\
(1,\pm 1),(\pm 1,1),(1,1),(-1,-1) \} \,.
\end{align}
Examples of higher order neighborhood structures are shown in Fig. \ref{fig:neighbor}.  A nonrecursive representation, derived in \cite{Woods1972}, for $x(i,j)$ is given as follows:
\begin{align}
\alpha_{i,j} x(i,j) &= \sum_{(k,l) \in {\cal N}_2} \beta_{ij}^{k,l} x(i-k,j-l) \label{eq:mmse}\\
&\hspace{1.5cm}+ v(i,j) \,, (i,j) \in T_0 \backslash \partial T_0 \nonumber \,,
\end{align}
where $v_{i,j}$ is locally correlated noise such that
\begin{align*}
E[v(i,j) x(k,l)] &= \delta(i-k) \delta(j-l) \\
E[v(i,j) v(i,j)] &= \left\{\begin{array}{cc}
0 & (k-i,l-j) \notin {\cal N}_2 \\
\alpha_{i,j} & k = i, j = l \\
-\beta_{ij}^{k-i,l-j} & (k-i,l-j) \in {\cal N}_2
\end{array}\right. \,.
\end{align*}
Since $E[v(i,j) v(k,l)] = E[v(k,l) v(i,j)]$, we have
\begin{equation} \beta_{ij}^{k-i,l-j} = \beta_{kl}^{i-k,k-l} \,.
\label{eq:symmetric_beta}
\end{equation}
The boundary conditions in (\ref{eq:mmse}) are assumed to be Dirichlet such that $x(\partial T_0)$ is Gaussian with zero mean and known covariance.  

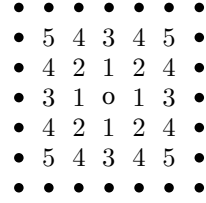
\begin{figure}
\begin{center}
\begin{pspicture}[unit=0.4cm](-3,-1.5)(3,1.5)
\psdots(-3,3)(-2,3)(-1,3)(0,3)(1,3)(2,3)(3,3)
(-3,2)(3,2)
(-3,1)(3,1)
(-3,0)(3,0)
(-3,-1)(3,-1)
(-3,-2)(3,-2)
(-3,-3)(-2,-3)(-1,-3)(0,-3)(1,-3)(2,-3)(3,-3)
\rput[a](0,0){o} 
\rput[a](0,1){$1$} 
\rput[a](-1,0){$1$} 
\rput[a](0,-1){$1$} 
\rput[a](1,0){$1$} 
\rput[a](1,1){$2$}
\rput[a](1,-1){$2$}
\rput[a](-1,-1){$2$}
\rput[a](-1,1){$2$}
\rput[a](2,0){$3$}
\rput[a](-2,0){$3$}
\rput[a](0,2){$3$}
\rput[a](0,-2){$3$}
\rput[a](2,1){$4$}
\rput[a](2,-1){$4$}
\rput[a](-2,-1){$4$}
\rput[a](-2,1){$4$}
\rput[a](-1,2){$4$}
\rput[a](-1,-2){$4$}
\rput[a](1,-2){$4$}
\rput[a](1,2){$4$}
\rput[a](2,2){$5$}
\rput[a](2,-2){$5$}
\rput[a](-2,2){$5$}
\rput[a](-2,-2){$5$}
\end{pspicture}
\caption{Neighborhood structure from order 1 to 5.}
\label{fig:neighbor}
\end{center}
\end{figure}

\section{Telescoping Representation: GMRFs on a Unit Disc}
\label{sec:telescoping_cont}

In this Section, we present the telescoping recursive representation for GMRFs indexed over a domain $T \subset \R^2$, which is assumed to be a unit disc centered at the origin.  The generalization to arbitrary domains is presented in Section \ref{sec:telescoping_cont_arbit}.  To parametrize the GMRF, say $x(t)$ for $t \in T$, we use polar coordinates such that $x_{\lambda}(\theta)$ is defined to be the point
\begin{equation}
x_{\lambda}(\theta) = x((1-\lambda)\cos \theta,(1-\lambda)\sin \theta) \,,
\label{eq:polar_coord}
\end{equation}
where $(\lambda,\theta) \in [0,1] \times [-\pi,\pi]$.  Thus, $\{x_{0}(\theta): \theta \in [-\pi,\pi]\}$ corresponds to the field defined on the boundary of the unit disc, denoted as $\partial T$.  Let $\partial T^{\lambda}$ denote the set of points in $T$ at a distance $1-\lambda$ from the center of the field.  We call $\partial T^{\lambda}$ a \emph{telescoping surface} since the telescoping representations we derive recurse these surfaces.  The notations introduced so far are shown in Fig. \ref{fig:notations_unit_disc}.

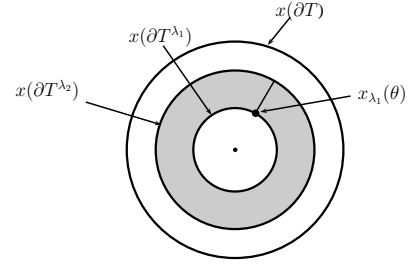
\begin{figure}
\begin{center}
\scalebox{0.5} 
{
\begin{pspicture}(0,-3.3973436)(14.284219,3.4373438)
\definecolor{color396b}{rgb}{0.8,0.8,0.8}
\pscircle[linewidth=0.06,dimen=outer,fillstyle=solid](7.888125,-0.48734376){2.91}
\pscircle[linewidth=0.06,dimen=outer,fillstyle=solid,fillcolor=color396b](7.888125,-0.48734376){2.14}
\pscircle[linewidth=0.06,dimen=outer,fillstyle=solid](7.888125,-0.48734376){1.14}
\psdots[dotsize=0.2](8.438125,0.48265624)
\psdots[dotsize=0.12](7.888125,-0.48734376)
\psline[linewidth=0.04cm,arrowsize=0.05291667cm 2.0,arrowlength=1.4,arrowinset=0.4]{->}(9.358125,3.0426562)(8.758125,2.3026562)
\usefont{T1}{ptm}{m}{n}
\rput(9.605157,3.1926563){\Large $x(\partial T$)}
\psline[linewidth=0.04cm,arrowsize=0.05291667cm 2.0,arrowlength=1.4,arrowinset=0.4]{->}(3.9181252,0.78265625)(5.925469,0.17734376)
\psline[linewidth=0.04cm,arrowsize=0.05291667cm 2.0,arrowlength=1.4,arrowinset=0.4]{->}(5.978125,2.2826562)(7.2581253,0.44265625)
\psline[linewidth=0.04cm,arrowsize=0.05291667cm 2.0,arrowlength=1.4,arrowinset=0.4]{->}(10.958125,0.9426563)(8.5581255,0.52265626)
\usefont{T1}{ptm}{m}{n}
\rput(2.9,1.1){\Large $x(\partial T^{\lambda_2})$}
\usefont{T1}{ptm}{m}{n}
\rput(5.905469,2.5326562){\Large $x(\partial T^{\lambda_1})$}
\usefont{T1}{ptm}{m}{n}
\rput(11.803594,0.99265623){\Large $x_{\lambda_1}(\theta)$}
\psline[linewidth=0.04cm](8.445469,0.51734376)(8.925468,1.3373437)
\end{pspicture} 
}
\end{center}
\caption{A random field defined on a unit disc.  The boundary of the field, i.e., the field values defined on the circle with radius $1$ is denoted by $x(\partial T)$.  The field values at a distance of $1-\lambda$ from the center of the field are given by $x(\partial T^{\lambda})$.  Each point is characterized in polar coordinates as $x_{\lambda}(\theta)$, where $1-\lambda$ is the distance to the center and $\theta$ denotes the angle.}
\label{fig:notations_unit_disc}
\end{figure}

\subsection{Main Theorem}

Before deriving our main theorem regarding the telescoping representation, we first define some notation.  Let $x(t) \in \R$ be a zero mean GMRF defined on a unit disc $T \subset \R^2$ parametrized as $x_{\lambda}(\theta)$, defined in (\ref{eq:polar_coord}).  Let $\Theta = [-\pi,\pi]$ and denote the covariance between $x_{\lambda_1}(\theta_1)$ and $x_{\lambda_2}(\theta_2)$ by $R_{\lambda_1,\lambda_2}(\theta_1,\theta_2)$ such that
\begin{equation}
R_{\lambda_1,\lambda_2}(\theta_1,\theta_2) = E[x_{\lambda_1}(\theta_1) x_{\lambda_2}(\theta_2)] \,.
\end{equation}
Define $C_{\lambda}(\theta_1,\theta_2)$ and $B_{\lambda}(\theta)$ as
\begin{align}
C_{\lambda}(\theta_1,\theta_2) &= \lim_{\mu \rightarrow \lambda^-} \frac{\partial}{\partial \mu} R_{\mu,\lambda}(\theta_1,\theta_2) - \lim_{\mu \rightarrow \lambda^+} \frac{\partial}{\partial \mu} R_{\mu,\lambda}(\theta_1,\theta_2) \label{eq:c_t_unit_disc} \\
B_{\lambda}(\theta) &= 
\left\{
\begin{array}{cc}
\sqrt{C_{\lambda}(\theta,\theta)} & C_{\lambda}(\theta,\theta) \ne 0 \\
K & C_{\lambda}(\theta,\theta) = 0
\end{array}
\right. \,, \label{eq:b_l_unit_disc}
\end{align}
where $K$ is any non-zero constant.  We will see in (\ref{eq:c_l_theta_theta}) that $C_{\lambda}(\theta,\theta)$ is the variance of a random variable and hence it is non-negative.  
Define $F_{\theta}$ as the integral transform
\begin{equation}
F_{\theta} [x_{\lambda}(\theta)] = \sum_{j=0}^{m-1} \int_{\Theta} \lim_{\mu \rightarrow \lambda^+} \frac{\partial}{\partial \mu}
b_j((\mu,\theta),(\lambda,\alpha))
\frac{\partial^j}{\partial n^j} x_{\lambda}(\alpha) d\alpha \label{eq:f_theta_unit_disc}\,,
\end{equation}
where $b_j((\mu,\theta),(\lambda,\alpha))$ is defined in (\ref{eq:prediction_2}) and the index $(\mu,\theta)$ in polar coordinates corresponds to the point $((1-\mu)\cos\theta,(1-\mu)\sin \theta)$ in Cartesian coordinates.  We see that $F_{\theta}[x_{\lambda}(\theta)]$ operates on the surface $\partial T^{\lambda}$ such that it is a linear combination of all normal derivatives of $x_{\lambda}({\theta})$ up to order $m-1$.
The normal derivative in (\ref{eq:f_theta_unit_disc}) is interpreted in the weak sense as defined in (\ref{eq:normalized_deriv}).  We now state the main theorem of the paper.


\begin{theorem}[\textbf{Telescoping Recursive Representation}]
\label{thm:telescoping_rec_gmrf_unit_disc}
For the GMRF parametrized as $x_{\lambda}(\theta)$, defined in (\ref{eq:polar_coord}), we have the following stochastic differential equation

\smallskip

\noindent
Telescoping Representation:
\begin{equation}
dx_{\lambda}(\theta) = F_{\theta} [x_{\lambda}(\theta)] d\lambda + B_{\lambda}(\theta) dw_{\lambda}(\theta) \label{eq:telescoping_unit_disc} \,,
\end{equation}
where $d x_{\lambda}(\theta) = x_{\lambda+d\lambda}(\theta) - x_{\lambda}(\theta)$ for $d\lambda$ small, $F_{\theta}$ is defined in (\ref{eq:f_theta_unit_disc}), $B_{\lambda}(\theta)$ is defined in (\ref{eq:b_l_unit_disc}), and
$w_{\lambda}(\theta)$ has the following properties:
\begin{enumerate}[i)]
\item The driving noise $w_{\lambda}(\theta)$ is zero mean Gaussian, almost surely continuous in $\lambda$, and independent of $x(\partial T)$ (the field on the boundary).
\item For all $\theta \in \Theta$, $w_{0}(\theta) = 0$.
\item For $0 \le \lambda_1 \le \lambda_1' \le \lambda_2 \le \lambda_2'$ and $\theta_1,\theta_2 \in \Theta$, $w_{\lambda_1'}(\theta_1) - w_{\lambda_1}(\theta_1)$ and $w_{\lambda_2'}(\theta_2) - w_{\lambda_2}(\theta_2)$ are independent random variables.
\item For $\theta \in \Theta$, we have
\begin{equation}
E[w_{\lambda}(\theta_1) w_{\lambda}(\theta_2)] = \int_0^{\lambda}
\frac{C_u(\theta_1,\theta_2)}{B_u(\theta_1) B_u(\theta_2)} du 
\label{eq:w_cov_l_1} \,.
\end{equation}

\item Assuming the set $\{u \in [0,1]: C_u(\theta,\theta) = 0\}$ has measure zero for each $\theta \in \Theta$, for $\lambda_1 > \lambda_2$ and $\theta_1,\theta_2 \in \Theta$, the random variable $w_{\lambda_1}(\theta) - w_{\lambda_2}(\theta)$ is Gaussian with mean zero and covariance
\begin{align}
E\left[\left(w_{\lambda_1}(\theta_1)-w_{\lambda_2}(\theta_2) \right)^2\right] & \nonumber \\
&\hspace{-2.5cm}= \lambda_1 + \lambda_2 - 2 \int_0^{\lambda_2} \frac{C_u(\theta_1,\theta_2)}{B_u(\theta_1) B_u(\theta_2)} du \label{eq:w_cov_l_2} \,.
\end{align}
\end{enumerate}
\end{theorem}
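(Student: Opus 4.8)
The plan is to obtain the telescoping representation by defining the driving noise directly as the normalized "innovations" of the field across telescoping surfaces, and then to verify each of the listed properties (i)--(v) by exploiting the Markov property and the prediction formula of Theorem~\ref{thm:prediction}. First I would fix the parametrization $x_\lambda(\theta)$ and, for each $\lambda$, take $G_-$ to be the open region outside the telescoping surface $\partial T^\lambda$ (i.e.\ the annulus $\{(\mu,\alpha):\mu<\lambda\}$ together with the field on $\partial T$), so that $\partial G = \partial T^\lambda$ separates the already-revealed field from the inside. By Theorem~\ref{thm:prediction}, the conditional expectation $E[x_{\lambda+d\lambda}(\theta)\mid \Sigma_x(G_-)]$ is the integral against $b_j$ of the normal derivatives of $x_\lambda(\alpha)$ on $\partial T^\lambda$; differentiating this relation in the outward parameter and matching with the definition~(\ref{eq:f_theta_unit_disc}) of $F_\theta$ identifies the drift term $F_\theta[x_\lambda(\theta)]\,d\lambda$ as the infinitesimal predictable part of $dx_\lambda(\theta)$. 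This is essentially where "$\lambda$ acts as time": conditioning on $\Sigma_x(G_-)$ is a filtration increasing in $\lambda$, and the Markov property collapses it to conditioning on the germ $\sigma$-algebra of $\partial T^\lambda$.

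Next I would \emph{define} the residual process by
\begin{equation}
B_\lambda(\theta)\,dw_\lambda(\theta) := dx_\lambda(\theta) - F_\theta[x_\lambda(\theta)]\,d\lambda,
\end{equation}
with $w_0(\theta)=0$, which is~(\ref{eq:telescoping_unit_disc}) and property~(ii) by construction. Property~(i): $w_\lambda(\theta)$ is Gaussian since it is a limit of linear combinations of the Gaussian field; its mean is zero because $x(t)$ is zero mean; almost-sure continuity in $\lambda$ follows from Assumption~A4 (continuous sample paths, $R(t,s)$ jointly continuous) together with the Kolmogorov continuity criterion applied to the increments whose second moments I compute below; and independence of $x(\partial T)$ follows from the orthogonality of the innovation $x_{\lambda+d\lambda}(\theta)-E[x_{\lambda+d\lambda}(\theta)\mid\Sigma_x(G_-)]$ to every element of $\Sigma_x(G_-)\supseteq \sigma(x(\partial T))$, using Gaussianity to upgrade orthogonality to independence. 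Property~(iii), the independent-increments property, is the same orthogonality argument applied to two nested conditioning sets $G_-^{(\lambda_1)}\subset G_-^{(\lambda_2)}$: the increment over $[\lambda_1,\lambda_1']$ is measurable with respect to $\Sigma_x(G_-^{(\lambda_2)})$ (it lives on surfaces with parameter $\le\lambda_1'\le\lambda_2$), while the increment over $[\lambda_2,\lambda_2']$ is orthogonal to that $\sigma$-algebra; Gaussianity again gives independence.

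For the covariance formulas I would compute $E[dx_\lambda(\theta_1)\,dx_\lambda(\theta_2)]$ directly. Writing $dx_\lambda$ as innovation plus predictable drift and using that the innovation is orthogonal to $\Sigma_x(G_-)$ (hence to $x_\lambda(\cdot)$ and its normal derivatives), the cross term with $F_\theta[x_\lambda]\,d\lambda$ drops at first order, and the quadratic variation of the innovation is, after a jump-in-the-derivative-of-the-covariance computation, exactly $C_\lambda(\theta_1,\theta_2)\,d\lambda$ — this is where the one-sided limits in~(\ref{eq:c_t_unit_disc}) come from, the $\mu\to\lambda^-$ and $\mu\to\lambda^+$ derivatives of $R_{\mu,\lambda}$ differing precisely because of the $\delta$-singularity in~(\ref{eq:lr=d}). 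Here I should also record the identity $C_\lambda(\theta,\theta)=\operatorname{Var}$ of the normalized innovation (the promised~(\ref{eq:c_l_theta_theta})), which shows $C_\lambda(\theta,\theta)\ge0$ and legitimizes the square root in~(\ref{eq:b_l_unit_disc}). Dividing by $B_u(\theta_1)B_u(\theta_2)$ and integrating over $[0,\lambda]$ (using independent increments to add variances) gives~(\ref{eq:w_cov_l_1}), property~(iv). Finally, for property~(v), when the degenerate set has measure zero we have $E[(w_{\lambda_1}(\theta)-w_{\lambda_2}(\theta))^2]=\lambda_1-\lambda_2$ for a single angle (the integrand normalizes to $1$ off a null set), and expanding $E[(w_{\lambda_1}(\theta_1)-w_{\lambda_2}(\theta_2))^2]=E[w_{\lambda_1}(\theta_1)^2]+E[w_{\lambda_2}(\theta_2)^2]-2E[w_{\lambda_1}(\theta_1)w_{\lambda_2}(\theta_2)]$, using~(\ref{eq:w_cov_l_1}) together with $E[w_{\lambda_1}(\theta_1)w_{\lambda_2}(\theta_2)]=E[w_{\lambda_2}(\theta_1)w_{\lambda_2}(\theta_2)]$ (the excess increment $w_{\lambda_1}-w_{\lambda_2}$ at $\theta_1$ being independent of everything at level $\le\lambda_2$), yields~(\ref{eq:w_cov_l_2}).

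The main obstacle I anticipate is making the infinitesimal manipulations rigorous: justifying the interchange of the $\partial/\partial\mu$ limit with the integral $\int_\Theta$ in~(\ref{eq:f_theta_unit_disc}), showing that the drift term is genuinely $O(d\lambda)$ while the innovation is $O(\sqrt{d\lambda})$ (so that the stochastic differential equation interpretation — Brownian-motion scaling — is the correct one), and pinning down the quadratic variation $C_\lambda(\theta_1,\theta_2)\,d\lambda$, which requires understanding how the jump in $\partial_\mu R_{\mu,\lambda}$ across $\mu=\lambda$ arises from the elliptic operator identity~(\ref{eq:lr=d}) and Theorem~\ref{thm:property_rts}. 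The smoothness of $b_j$ and the uniform ellipticity in A3 are what will let these limits exist and be exchanged; the degeneracy case $C_\lambda(\theta,\theta)=0$ (handled by the arbitrary constant $K$) has to be carried along carefully so that $w$ is still well defined, which is why property~(v) needs the measure-zero hypothesis.
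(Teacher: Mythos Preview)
Your proposal is correct and follows essentially the same route as the paper: define the innovation $\xi_{\lambda+d\lambda}(\theta)=x_{\lambda+d\lambda}(\theta)-\widehat{x}_{\lambda+d\lambda|\lambda}(\theta)$ using Theorem~\ref{thm:prediction}, identify $F_\theta$ by differentiating the prediction kernel, set $B_\lambda(\theta)\,dw_\lambda(\theta)=\xi_{\lambda+d\lambda}(\theta)$, and deduce (i)--(v) from orthogonality of the prediction error and the computation $E[\xi_{\lambda+d\lambda}(\theta_1)\xi_{\lambda+d\lambda}(\theta_2)]=C_\lambda(\theta_1,\theta_2)\,d\lambda$. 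The only notable divergence is in (i): the paper obtains a.s.\ continuity directly from $\lim_{d\lambda\to 0}\xi_{\lambda+d\lambda}(\theta)=\xi_\lambda(\theta)=0$ (since $\widehat{x}_{\lambda|\lambda}=x_\lambda$), which is more elementary than invoking the Kolmogorov criterion.
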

\begin{IEEEproof}
See Appendix A.
\end{IEEEproof}

Theorem~\ref{thm:telescoping_rec_gmrf_unit_disc} says that $x_{\lambda+d\lambda}(\theta)$, where $d\lambda$ is small, can be computed using the random field defined on the telescoping surface $\partial T^{\lambda}$ and some random noise.  The dependence on the telescoping surface follows from Theorem~\ref{thm:prediction}.
The main contribution in Theorem~\ref{thm:telescoping_rec_gmrf_unit_disc} is to explicitly compute properties of the driving noise $w_{\lambda}(\theta)$.  We now discuss the telescoping representation and highlight its various properties.

\subsubsection{\textbf{Driving noise $w_{\lambda}(\theta)$}}
The properties of the driving noise $w_{\lambda}(\theta)$ in (\ref{eq:telescoping_unit_disc}) lead to the following theorem.

\begin{theorem}[Driving noise $w_{\lambda}(\theta)$]
\label{thm:w_l_noise}
For the collection of random variables 
\[\{w_{\lambda}(\theta) : (\lambda,\theta)\in [0,1]\times\Theta \}\] 
defined in (\ref{eq:telescoping_unit_disc}), for each fixed $\theta \in \Theta$, $w_{\lambda}(\theta)$ is a standard Brownian motion when the set $\{u \in [0,1] : C_u(\theta,\theta) = 0\}$ has measure zero for each $\theta \in \Theta$.
\end{theorem}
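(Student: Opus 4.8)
The plan is to verify directly that, for each fixed $\theta$, the process $\lambda \mapsto w_{\lambda}(\theta)$ satisfies the defining properties of a standard Brownian motion: it vanishes at $\lambda = 0$, it is a mean-zero Gaussian process with almost surely continuous sample paths, it has independent increments over nonoverlapping $\lambda$-intervals, and the increment $w_{\lambda_1}(\theta) - w_{\lambda_2}(\theta)$ is Gaussian with variance $\lambda_1 - \lambda_2$ for $\lambda_1 > \lambda_2$. The point is that Theorem~\ref{thm:telescoping_rec_gmrf_unit_disc} already supplies almost all of this structure; the only ingredient that genuinely uses the extra hypothesis of Theorem~\ref{thm:w_l_noise} is the identification of the increment variance, which requires collapsing the integral $\int_0^{\lambda_2} C_u(\theta,\theta)/B_u(\theta)^2\,du$ to $\lambda_2$.

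Concretely, I would fix $\theta \in \Theta$ and read off the relevant items from Theorem~\ref{thm:telescoping_rec_gmrf_unit_disc}. Property ii) gives $w_0(\theta) = 0$. Property i) gives that $w_{\lambda}(\theta)$ is zero-mean Gaussian and almost surely continuous in $\lambda$. Property iii), specialized to $\theta_1 = \theta_2 = \theta$, gives independence of increments over disjoint $\lambda$-intervals. It then remains to pin down the second-order structure. By property v) with $\theta_1 = \theta_2 = \theta$ and $\lambda_1 > \lambda_2$,
\begin{equation*}
E\!\left[\left(w_{\lambda_1}(\theta) - w_{\lambda_2}(\theta)\right)^2\right] = \lambda_1 + \lambda_2 - 2\int_0^{\lambda_2} \frac{C_u(\theta,\theta)}{B_u(\theta)^2}\, du \,.
\end{equation*}
On $\{u : C_u(\theta,\theta) \ne 0\}$ the definition (\ref{eq:b_l_unit_disc}) of $B_u(\theta)$ forces $B_u(\theta)^2 = C_u(\theta,\theta)$, so the integrand equals $1$ there; by the hypothesis of Theorem~\ref{thm:w_l_noise} the complementary set $\{u : C_u(\theta,\theta) = 0\}$ is Lebesgue-null, so the integral equals $\lambda_2$ and the increment variance collapses to $\lambda_1 - \lambda_2$. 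The same reduction applied to property iv) gives $E[w_{\lambda}(\theta)^2] = \lambda$, and combining this with the increment variance and $w_0(\theta)=0$ (polarization) yields the covariance $E[w_{\lambda_1}(\theta) w_{\lambda_2}(\theta)] = \min(\lambda_1,\lambda_2)$.

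To conclude, I would invoke the standard characterization: a real-valued, mean-zero Gaussian process on $[0,1]$ that vanishes at the origin, has independent increments, covariance $\min(\lambda_1,\lambda_2)$, and almost surely continuous paths is, by definition (equivalently, by the Kolmogorov extension and continuity theorems), a standard Brownian motion in $\lambda$; since every one of these ingredients has been assembled, the theorem follows. There is no substantive obstacle: the statement is essentially a corollary of Theorem~\ref{thm:telescoping_rec_gmrf_unit_disc}, and the only delicate point is the integral simplification. That is precisely where the measure-zero assumption enters — the integrand $C_u(\theta,\theta)/B_u(\theta)^2$ equals $\mathbf{1}\{C_u(\theta,\theta)\ne 0\}$ pointwise, so if $\{u: C_u(\theta,\theta)=0\}$ had positive measure the increment variance would strictly exceed $\lambda_1-\lambda_2$ and the Brownian-motion conclusion would fail.
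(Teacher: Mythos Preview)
Your proposal is correct and follows essentially the same approach as the paper: read off continuity, the initial condition, and independent increments directly from Theorem~\ref{thm:telescoping_rec_gmrf_unit_disc}, then specialize property~v) to $\theta_1=\theta_2=\theta$ and collapse the integral $\int_0^{\lambda_2} C_u(\theta,\theta)/B_u(\theta)^2\,du$ to $\lambda_2$ using the definition of $B_u(\theta)$ together with the measure-zero hypothesis. The paper does exactly this, pointing to the computation (\ref{eq:cov_1})--(\ref{eq:cov_4}) for the integral simplification; your explicit remark that the integrand equals $\mathbf{1}\{C_u(\theta,\theta)\ne 0\}$ pointwise is precisely the content of that computation.
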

\begin{IEEEproof}
For fixed $\theta \in \Theta$, to show $w_{\lambda}(\theta)$ is Brownian motion, we need to establish the following: (i) $w_{\lambda}(\theta)$ is continuous in $\lambda$, (ii) $w_{0}(\theta) = 0$ for all $\theta \in \Theta$, (iii) $w_{\lambda}(\theta)$ has independent increments, \emph{i.e.}, for 
$0 \le \lambda_1 \le \lambda_1' \le \lambda_2 \le \lambda_2'$, $w_{\lambda_1'}(\theta) - w_{\lambda_1}(\theta)$ and $w_{\lambda_2'}(\theta) - w_{\lambda_2}(\theta)$ are independent random variables, and (iv) for $\lambda_1>\lambda_2$, $w_{\lambda_1}(\theta) - w_{\lambda_2}(\theta) \sim {\cal N}(0,\lambda_1-\lambda_2)$.  The first three points follow from Theorem \ref{thm:telescoping_rec_gmrf_unit_disc}.  To show the last point, let $\theta_1=\theta_2$ in (\ref{eq:w_cov_l_2}) and use the computations done in (\ref{eq:cov_1})-(\ref{eq:cov_4}).
\end{IEEEproof}
Theorem~\ref{thm:w_l_noise} says that for each fixed $\theta$, $w_{\lambda}(\theta)$ in (\ref{eq:telescoping_unit_disc}) is Brownian motion.  This is extremely useful since we can use standard Ito calculus to interpret (\ref{eq:telescoping_unit_disc}).

\subsubsection{\textbf{White noise}}
A useful interpretation of $w_{\lambda}(\theta)$ is in terms of white noise.  Define a random field $v_{\lambda}(\theta)$ such that
\begin{equation}
w_{\lambda}(\theta) = \int_0^{\lambda} v_{\gamma}(\theta) d \gamma \,.
\end{equation}
Using Theorem \ref{thm:telescoping_rec_gmrf_unit_disc}, we can easily establish that $v_{\lambda}(\theta)$ is a generalized process such that for an appropriate function $\Psi(\cdot)$,
\begin{equation}
\int_0^{1} \Psi(\gamma) E[v_{\gamma}(\theta_1) v_{\lambda}(\theta_2)] d \gamma = \Psi(\lambda) \frac{C_{\lambda}(\theta_1,\theta_2)}{B_{\lambda}(\theta_1)B_{\lambda}(\theta_2)} \,,
\end{equation}
which is equivalent to the expression
\begin{equation}
E[v_{\lambda_1}(\theta_1) v_{\lambda_2}(\theta_2)] = \delta(\lambda_1-\lambda_2) \frac{C_{\lambda_1}(\theta_1,\theta_2)}{B_{\lambda_1}(\theta_1)B_{\lambda_2}(\theta_2)} \,.
\end{equation}
Using the white noise representation, an alternative form of the telescoping representation is given by
\begin{equation}
\frac{d x_{\lambda}(\theta)}{d\lambda}  = F_{\theta} [x_{\lambda}(\theta)] + B_{\lambda}(\theta) v_{\lambda}(\theta) \,.
\end{equation}

\subsubsection{\textbf{Boundary Conditions}}
From the form of the integral transform $F_{\theta}$ in (\ref{eq:f_theta_unit_disc}), it is clear that boundary conditions for the telescoping representation will be given in terms of the field defined at the boundary and its normal derivatives.  A general form for the boundary conditions can be given as
\begin{align}
\sum_{j=0}^{m-1} \int_{\Theta} c_{k,j}(\theta,\alpha) 
\frac{\partial^j}{\partial n^j} x_0(\alpha) d \alpha = \beta_k(\theta) \,, \nonumber \\
\theta \in \Theta \,, k = 1,\ldots, m \,, \label{eq:boundary_conditions}
\end{align}
where for each $k$, $\beta_k(\theta)$ is a Gaussian process in $\theta$ with mean zero and known covariance.

\subsubsection{\textbf{Integral Form}}
The representation in (\ref{eq:telescoping_unit_disc}) is a symbolic representation for the equation
\begin{align}
x_{\lambda_1}(\theta) &= x_{\lambda_2}(\theta) + 
\int_{\lambda_2}^{\lambda_1} F_{\theta} [x_{\mu}(\theta)] d\mu \nonumber \\
&\hspace{1.5cm}+ \int_{\lambda_2}^{\lambda_1} B_{\mu}(\theta) dw_{\mu}(\theta) \,,
\quad \lambda_1 > \lambda_2 \,. \label{eq:integral_form_unit_disc}
\end{align}
Since from Theorem \ref{thm:w_l_noise}, $w_{\lambda}(\theta)$ is Brownian motion for fixed $\theta$, the last integral in (\ref{eq:integral_form_unit_disc}) is an Ito integral.
Thus, to recursively synthesize the field, we start with boundary values, given by (\ref{eq:boundary_conditions}), and generate the field values recursively on the telescoping surfaces $\partial T^{\lambda}$ for $\lambda \in (0,1]$.

\subsubsection{\textbf{Comparison to \cite{TewfikLevyWillsky1991}}}  The telescoping recursive representation differs significantly from the recursive representation derived in \cite{TewfikLevyWillsky1991}.  Firstly, the representation in \cite{TewfikLevyWillsky1991} is only valid for isotropic GMRFs and does not hold for nonisotropic GMRFs.  The telescoping representation we derive holds for arbitrary GMRFs.  Secondly, the recursive representation in \cite{TewfikLevyWillsky1991} was derived on the Fourier series coefficients, whereas we derive a representation directly on the field values.

\subsection{Homogeneous and Isotropic GMRFs}
\label{sec:telescoping_example}

In this Section, we study homogeneous isotropic random fields over $\R^2$ whose covariance only depends on the Euclidean distance between two points.  In general, suppose $R_{\mu,\lambda}(\theta_1,\theta_2)$ is the covariance of a homogeneous isotropic random field over a unit disc such that the point $(\mu,\theta_1)$ in polar coordinates corresponds to the point $((1-\mu)\cos \theta,(1-\mu) \sin \theta)$ in Cartesian coordinates.  The Euclidean distance between two points $(\mu,\theta_1)$ and $(\lambda,\theta_2)$ is given by
\begin{align}
D_{\mu,\lambda}(\theta_1,\theta_2)
&= \left[ (1-\mu)^2 + (1-\lambda)^2 \nonumber \right.\\
&\left.- 2(1-\mu)(1-\lambda) \cos(\theta_1 - \theta_2) \right]^{1/2} \,.
\end{align}
If $R_{\mu,\lambda}(\theta_1,\theta_2)$ is the covariance of a homogeneous and isotropic GMRF, we have
\begin{equation}
R_{\mu,\lambda}(\theta_1,\theta_2) =
\Upsilon\left(D_{\mu,\lambda}(\theta_1,\theta_2)\right) \,, \label{eq:iso_cov}
\end{equation}
where $\Upsilon(\cdot) : \R \rightarrow \R$ is assumed to be differentiable at all points in $\R$.  The next Lemma computes $C_{\lambda}(\theta_1,\theta_2)$ for isotropic and homogeneous GMRFs.

\begin{lemma}
\label{lemma:h_i}
For an isotropic and homogeneous GMRF with covariance given by (\ref{eq:iso_cov}), $C_{\lambda}(\theta_1,\theta_2)$, defined in (\ref{eq:c_t_unit_disc}), is given by
\begin{equation}
C_{\lambda}(\theta_1,\theta_2) =
\left\{
\begin{array}{cc}
0 & \theta_1 \ne \theta_2 \\
-2 \Upsilon'(0) & \theta_1 = \theta_2 
\end{array} \,.\right. \label{eq:c_homo_iso}
\end{equation}
\end{lemma}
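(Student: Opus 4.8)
The plan is to compute $C_\lambda(\theta_1,\theta_2)$ directly from its definition in (\ref{eq:c_t_unit_disc}) by substituting the isotropic covariance (\ref{eq:iso_cov}) and differentiating through the chain rule. First I would write $R_{\mu,\lambda}(\theta_1,\theta_2) = \Upsilon(D_{\mu,\lambda}(\theta_1,\theta_2))$ and compute $\partial R_{\mu,\lambda}/\partial\mu = \Upsilon'(D_{\mu,\lambda}) \cdot \partial D_{\mu,\lambda}/\partial\mu$. A short calculation gives
\begin{equation}
\frac{\partial D_{\mu,\lambda}(\theta_1,\theta_2)}{\partial \mu} = \frac{-(1-\mu) + (1-\lambda)\cos(\theta_1-\theta_2)}{D_{\mu,\lambda}(\theta_1,\theta_2)} \,,
\end{equation}
valid whenever $D_{\mu,\lambda}(\theta_1,\theta_2) \ne 0$. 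The key dichotomy is between $\theta_1 = \theta_2$ and $\theta_1 \ne \theta_2$, because this controls whether $D_{\mu,\lambda} \to 0$ as $\mu \to \lambda$.

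For the case $\theta_1 \ne \theta_2$: as $\mu \to \lambda$, the distance $D_{\mu,\lambda}(\theta_1,\theta_2)$ tends to $D_{\lambda,\lambda}(\theta_1,\theta_2) = (1-\lambda)\sqrt{2(1-\cos(\theta_1-\theta_2))} > 0$ (assuming $\lambda < 1$), so $R_{\mu,\lambda}$ is differentiable in $\mu$ in a neighborhood of $\mu = \lambda$. Hence the left and right limits of $\partial R_{\mu,\lambda}/\partial\mu$ in (\ref{eq:c_t_unit_disc}) agree and $C_\lambda(\theta_1,\theta_2) = 0$. For the case $\theta_1 = \theta_2$: now $D_{\mu,\lambda}(\theta,\theta) = |(1-\mu) - (1-\lambda)| = |\lambda - \mu|$, and $\partial D_{\mu,\lambda}/\partial\mu = -\mathrm{sgn}(\lambda-\mu)$, so $\partial R_{\mu,\lambda}/\partial\mu = -\Upsilon'(|\lambda-\mu|)\,\mathrm{sgn}(\lambda-\mu)$. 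Taking $\mu \to \lambda^-$ gives $-\Upsilon'(0^+)\cdot(+1)$ while $\mu \to \lambda^+$ gives $-\Upsilon'(0^+)\cdot(-1) = \Upsilon'(0^+)$; subtracting yields $C_\lambda(\theta,\theta) = -\Upsilon'(0) - \Upsilon'(0) = -2\Upsilon'(0)$, using the assumed differentiability of $\Upsilon$ at $0$ so that the one-sided derivative equals $\Upsilon'(0)$.

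The main obstacle — and the only delicate point — is the behavior at $\theta_1 = \theta_2$, where the Euclidean distance $D_{\mu,\lambda}(\theta,\theta) = |\lambda-\mu|$ has a corner at $\mu = \lambda$ and one must be careful that $C_\lambda(\theta,\theta)$ is picking up exactly twice the derivative $\Upsilon'(0)$ because of the sign change in $\partial D/\partial\mu$ across $\mu = \lambda$. This is precisely the mechanism that makes the covariance kernel $R(t,s)$ nonsmooth on the diagonal (consistent with $\mathcal{L}_t R(t,s) = \delta(t-s)$ in (\ref{eq:lr=d})) and is the source of the nonzero driving noise. I would also briefly note the edge case $\lambda = 1$ (the center point), where the telescoping surface degenerates; since the representation is used for $\lambda \in (0,1]$ and the relevant statements concern a.e.\ $\lambda$, this does not affect the conclusion. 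Assembling the two cases gives exactly (\ref{eq:c_homo_iso}).
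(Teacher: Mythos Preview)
Your proposal is correct and follows essentially the same approach as the paper: compute $\partial R_{\mu,\lambda}/\partial\mu$ via the chain rule through $D_{\mu,\lambda}$, then split into the cases $\theta_1\ne\theta_2$ (where $D_{\lambda,\lambda}>0$ so the one-sided limits coincide) and $\theta_1=\theta_2$ (where $D_{\mu,\lambda}=|\lambda-\mu|$ produces the sign flip yielding $-2\Upsilon'(0)$). Your additional remarks on the diagonal nonsmoothness and the $\lambda=1$ edge case are fine but not needed for the argument.
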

\begin{IEEEproof}
For $\theta_1 \ne \theta_2$, we have
\begin{align}
\frac{\partial}{\partial u} R_{\mu,\lambda}(\theta_1,\theta_2) & \nonumber \\
&\hspace{-2.5cm}= -\Upsilon'\left(D_{\mu,\lambda}(\theta_1,\theta_2)\right) 
\frac{(1-\mu) - (1-\lambda)\cos(\theta_1-\theta_2)}
{D_{\mu,\lambda}(\theta_1,\theta_2)} \,,
\end{align}
where $\Upsilon'(\cdot)$ is the derivative of the function $\Upsilon(\cdot)$.  Using (\ref{eq:c_t_unit_disc}), $C_{\lambda}(\theta_1,\theta_2) = 0$ when $\theta_1 \ne \theta_2$.

For $\theta_1 = \theta_2$, $D_{\mu,\lambda}(\theta_1,\theta_2) = |\mu - \lambda|$, so we have
\begin{equation}
\frac{\partial}{\partial u} R_{\mu,\lambda}(\theta_1,\theta_2)
= -\Upsilon'(|\lambda - \mu|) 
\frac{\lambda - \mu}{|\lambda - \mu|} \,,
\end{equation}
Using (\ref{eq:c_t_unit_disc}), $C_{\lambda}(\theta_1,\theta_2) = -2\Upsilon'(0)$ when $\theta_1 = \theta_2$.  
\end{IEEEproof}

Using Lemma~\ref{lemma:h_i}, we have the following theorem regarding the driving noise $w_{\lambda}(\theta)$ of the telescoping representation of an isotropic and homogeneous GMRF.

\begin{theorem}[Homogeneous isotropic GMRFs]
\label{eq:tele_homo_iso}
\label{thm:isotropic_gmrf}
For homogeneous isotropic GMRFs, with covariance given by (\ref{eq:iso_cov}), such that $\Upsilon(\cdot)$ is differentiable at all points in $\R$ and $\Upsilon'(0) < 0$, the telescoping representation is
\begin{equation}
dw_{\lambda}(\theta) = F_{\theta} x_{\lambda}(\theta) + \sqrt{-\Upsilon'(0)} d w_{\lambda}(\theta) \,. \label{eq:w_h_i}
\end{equation}
For each fixed $\theta$, $w_{\lambda}(\theta)$ is Brownian motion in $\lambda$ and 
\begin{align}
E[w_{\lambda_1}(\theta_1)w_{\lambda_2}(\theta_2)] &= 0\,, 
\quad \lambda_1 \ne \lambda_2, \theta_1 \ne \theta_2 \label{eq:w_p_1}\\
E[w_{\lambda}(\theta_1)w_{\lambda}(\theta_2)] &= 0\,,
\quad \theta_1 \ne \theta_2 \,. \label{eq:w_p_2}
\end{align}
\end{theorem}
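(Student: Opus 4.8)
The plan is to obtain Theorem~\ref{thm:isotropic_gmrf} as an essentially immediate specialization of Theorem~\ref{thm:telescoping_rec_gmrf_unit_disc} and Theorem~\ref{thm:w_l_noise} once Lemma~\ref{lemma:h_i} is available. First I would substitute the value of $C_\lambda(\theta_1,\theta_2)$ computed in Lemma~\ref{lemma:h_i} into the definition (\ref{eq:b_l_unit_disc}) of $B_\lambda(\theta)$. Since the hypothesis is $\Upsilon'(0)<0$, we get $C_\lambda(\theta,\theta)=-2\Upsilon'(0)>0$ for every $\lambda$ and $\theta$, so the first branch of (\ref{eq:b_l_unit_disc}) applies and $B_\lambda(\theta)$ collapses to a single positive constant, independent of both $\lambda$ and $\theta$. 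Feeding this constant together with the transform $F_\theta$ into the general telescoping equation (\ref{eq:telescoping_unit_disc}) gives (\ref{eq:w_h_i}); $F_\theta$ itself is unchanged from the general case and needs no separate computation here.

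Next I would argue that, for each fixed $\theta$, $w_\lambda(\theta)$ is a standard Brownian motion. This is exactly the conclusion of Theorem~\ref{thm:w_l_noise}, whose hypothesis is that $\{u\in[0,1]:C_u(\theta,\theta)=0\}$ has measure zero; here that set is empty, again because $C_u(\theta,\theta)=-2\Upsilon'(0)>0$, so the hypothesis holds trivially and the conclusion follows. It is worth noting that differentiability of $\Upsilon$ at the origin is used crucially at this step: it is what makes the one-sided limits in (\ref{eq:c_t_unit_disc}) well defined despite the corner of $D_{\mu,\lambda}(\theta,\theta)=|\mu-\lambda|$ at $\mu=\lambda$, and it produces the jump $-2\Upsilon'(0)$ recorded in Lemma~\ref{lemma:h_i}.

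Finally, for the covariance identities (\ref{eq:w_p_1})--(\ref{eq:w_p_2}) I would invoke the explicit covariance formulas already established inside Theorem~\ref{thm:telescoping_rec_gmrf_unit_disc}. For equal radii, (\ref{eq:w_cov_l_1}) reads $E[w_\lambda(\theta_1)w_\lambda(\theta_2)]=\int_0^\lambda C_u(\theta_1,\theta_2)/(B_u(\theta_1)B_u(\theta_2))\,du$, and by Lemma~\ref{lemma:h_i} the integrand vanishes identically whenever $\theta_1\ne\theta_2$, which is (\ref{eq:w_p_2}). For $\lambda_1\ne\lambda_2$ and $\theta_1\ne\theta_2$ I would expand the left-hand side of (\ref{eq:w_cov_l_2}), using that $w_\lambda(\theta_i)$ is Brownian motion so $E[w_{\lambda_i}(\theta_i)^2]=\lambda_i$; this collapses (\ref{eq:w_cov_l_2}) to $E[w_{\lambda_1}(\theta_1)w_{\lambda_2}(\theta_2)]=\int_0^{\lambda_2}C_u(\theta_1,\theta_2)/(B_u(\theta_1)B_u(\theta_2))\,du$ for $\lambda_1>\lambda_2$, which again vanishes by Lemma~\ref{lemma:h_i} when $\theta_1\ne\theta_2$, giving (\ref{eq:w_p_1}). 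Equivalently, the same cross-covariance can be read off from the independent-increments property (iii) of Theorem~\ref{thm:telescoping_rec_gmrf_unit_disc} combined with (\ref{eq:w_p_2}).

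I do not expect a genuine obstacle here: the whole argument is bookkeeping layered on top of results already proved. The only points needing a little care are (a) confirming that $\Upsilon'(0)<0$ really does place us in the square-root branch of (\ref{eq:b_l_unit_disc}), and hence in the measure-zero regime required by Theorem~\ref{thm:w_l_noise}, and (b) tracking the numerical constant through the substitution: the bare computation gives $B_\lambda(\theta)=\sqrt{-2\Upsilon'(0)}$, and the $\sqrt{-\Upsilon'(0)}$ appearing in (\ref{eq:w_h_i}) amounts to an immaterial rescaling of the driving Brownian motion.
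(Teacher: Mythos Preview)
Your proposal is correct and follows essentially the same route as the paper: compute $B_\lambda(\theta)$ from Lemma~\ref{lemma:h_i}, plug into the general telescoping representation, invoke Theorem~\ref{thm:w_l_noise} for the Brownian-motion claim, and deduce (\ref{eq:w_p_1})--(\ref{eq:w_p_2}) by substituting $C_\lambda(\theta_1,\theta_2)=0$ into (\ref{eq:w_cov_l_1}) together with the independent-increments property. Your observation in (b) is well taken: Lemma~\ref{lemma:h_i} indeed gives $C_\lambda(\theta,\theta)=-2\Upsilon'(0)$ and hence $B_\lambda(\theta)=\sqrt{-2\Upsilon'(0)}$, so the constant in (\ref{eq:w_h_i}) carries a harmless factor-of-$\sqrt{2}$ slip that you correctly flagged as an immaterial rescaling.
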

\begin{IEEEproof}
Since we assume $\Upsilon'(0) < 0$, thus $B_{\lambda}(\theta) = C_{\lambda}(\theta,\theta) =  \sqrt{-\Upsilon'(0)}$, which gives us (\ref{eq:w_h_i}).  To show (\ref{eq:w_p_1}) and (\ref{eq:w_p_2}), we simply substitute the value of $C_{\lambda}(\theta_1,\theta_2)$, given by (\ref{eq:c_homo_iso}), in (\ref{eq:w_cov_l_1}) and use the independent increments property of $w_{\lambda}(\theta)$ given in Theorem~\ref{thm:telescoping_rec_gmrf_unit_disc}.
\end{IEEEproof}

\noindent
\textbf{Example:} 
We now consider an example of a homogeneous and isotropic GMRF where $\Upsilon'(0) = 0$ and thus the field is uniquely determined by the boundary conditions.
Let $\Upsilon(t)$, $t \in [0,\infty)$, be such that
\begin{equation}
\Upsilon(t) = \int_0^{\infty} \frac{b}{(1+b^2)^2} J_0(b t) db \,,
\end{equation}
where $J_n(\cdot)$ is the Bessel function of the first kind of order $n$ \cite{Watson1995}.  The derivative of $\Upsilon(t)$ is given by
\begin{equation}
\Upsilon'(t) = -\int_0^{\infty} \frac{b^2}{(1+b^2)^2} J_1(b t) db \,,
\end{equation}
where we use the fact that $J_0'(\cdot) = - J_1(\cdot)$ \cite{Watson1995}.  Since $J_1(0) = 0$, $\Upsilon'(0) = 0$ and thus $B_{\lambda}(\theta) = 0$ in the telescoping representation.  This means there is no driving noise in the telescoping representation.  The rest of the parameters of the telescoping representation can be computed using the fact \cite{Wong1968}
\begin{equation}
(\Delta - 1)^2 R(t,s) = \delta(t-s) \label{eq:lr=d_2}\,,
\end{equation}
where $R(t,s)$ corresponds to the covariance associated with $\Upsilon(\cdot)$ written in Cartesian coordinates and $\Delta$ is the Laplacian operator.  Since the operator associated with $R(t,s)$ in (\ref{eq:lr=d_2}) has order four, it is clear that the GMRF has order two. The field with covariance satisfying (\ref{eq:lr=d_2}) is also commonly referred to as the Whittle field \cite{Whittle1954}.  The telescoping recursive representation will be of the form
\begin{align}
dx_{\lambda}(\theta) &= \int_{-\pi}^{\pi}
\lim_{\mu \rightarrow \lambda^+}
\left[
\frac{\partial}{\partial u} b_0((\mu,\theta),(\lambda,\alpha)) x_{\lambda}(\alpha) \right. \\
&\hspace{-1cm}\left.+ \frac{\partial}{\partial u} b_1((\mu,\theta),(\lambda,\alpha)) \frac{\partial}{\partial n}x_{\lambda}(\alpha) \right] d\alpha \,d \lambda \,,\quad 0 \le \lambda \le 1 \,, \nonumber 
\end{align}
with appropriate boundary conditions defined on the unit circle.

\section{Telescoping Representation: GMRFs on arbitrary domains}
\label{sec:telescoping_cont_arbit}
In the last Section, we presented telescoping recursive representations for random fields defined on a unit disc.  In this Section, we generalize the telescoping representations to arbitrary domains.  Section \ref{sec:telescoping_surface_homotopy} shows how to define telescoping surfaces using the concept of homotopy.  Section \ref{sec:paramet_domains} shows how to parametrize arbitrary domains using the homotopy.  Section \ref{sec:telescoping_repr} presents the telescoping representation for GMRFs defined on arbitrary domains.

\subsection{Telescoping Surfaces Using Homotopy}
\label{sec:telescoping_surface_homotopy}

Informally, a homotopy is defined as a continuous deformation from one space to another.  Formally, given two continuous functions $f$ and $g$ such that $f,g: {\cal X} \rightarrow {\cal Y}$, a homotopy is a continuous function $h:{\cal X} \times [0,1] \rightarrow {\cal Y}$ such that if $x \in {\cal X}$, $h(x,0) = f(x)$ and $h(x,1) = g(x)$ \cite{Munkres2000}.  An example of the use of homotopy in neural networks is shown in \cite{CoetzeeStonick}.  

In deriving our telescoping representation for GMRFs on a unit disc in Section \ref{sec:telescoping_cont}, we saw that the recursions started at the boundary, which was the unit circle, and telescoped inwards on concentric circles and ultimately converged to the center of the unit disc.  To parametrize these recursions, we can define a homotopy from the unit circle to the center of the unit disc.  In general, for a domain $T \subset \R^d$ with smooth boundary $\partial T$, the telescoping surfaces can be defined using a homotopy, $h:\partial T \times [0,1] \rightarrow c$, from the boundary $\partial T$ to a point $c \in T$ such that
\begin{enumerate}[P1.]
\item $\{h(t,0) :  t \in \partial T\} = \partial T$ and $\{h(t,1) :  t \in \partial T\} = c$.
\item For $0 < \lambda \le 1$, $\{h(t,\lambda) : t \in \partial T\} \subset T$ is the boundary of the region 
$\{h(t,\mu) : (t,\mu) \in \partial T \times (\lambda,1)\}$.
\item For $\lambda_1 < \lambda_2$,
$\{h(t,\lambda_1), t \in \partial T\} \subset \{h(t,\mu), t \in \partial T, 0 \le \mu \le \lambda_2 \}$.
\item $ \bigcup_{\lambda} \{h(t,\lambda), t \in \partial T \} = T \,.$
\end{enumerate}
Property 1 says that, for $\lambda = 0$, we get the boundary $\partial T$ and for $\lambda = 1$, we get the point $c \in T$, which we choose arbitrarily.  Property 2 says that for each $\lambda$, we want the telescoping surfaces to be in $T$ and it should be a boundary of another region.  Property 3 restricts the surfaces to be contained within each other, and Property 4 says that the homotopy must sweep the whole index set $T$.

\begin{figure}
\begin{center}
\subfigure[]{
\includegraphics[scale=0.2]{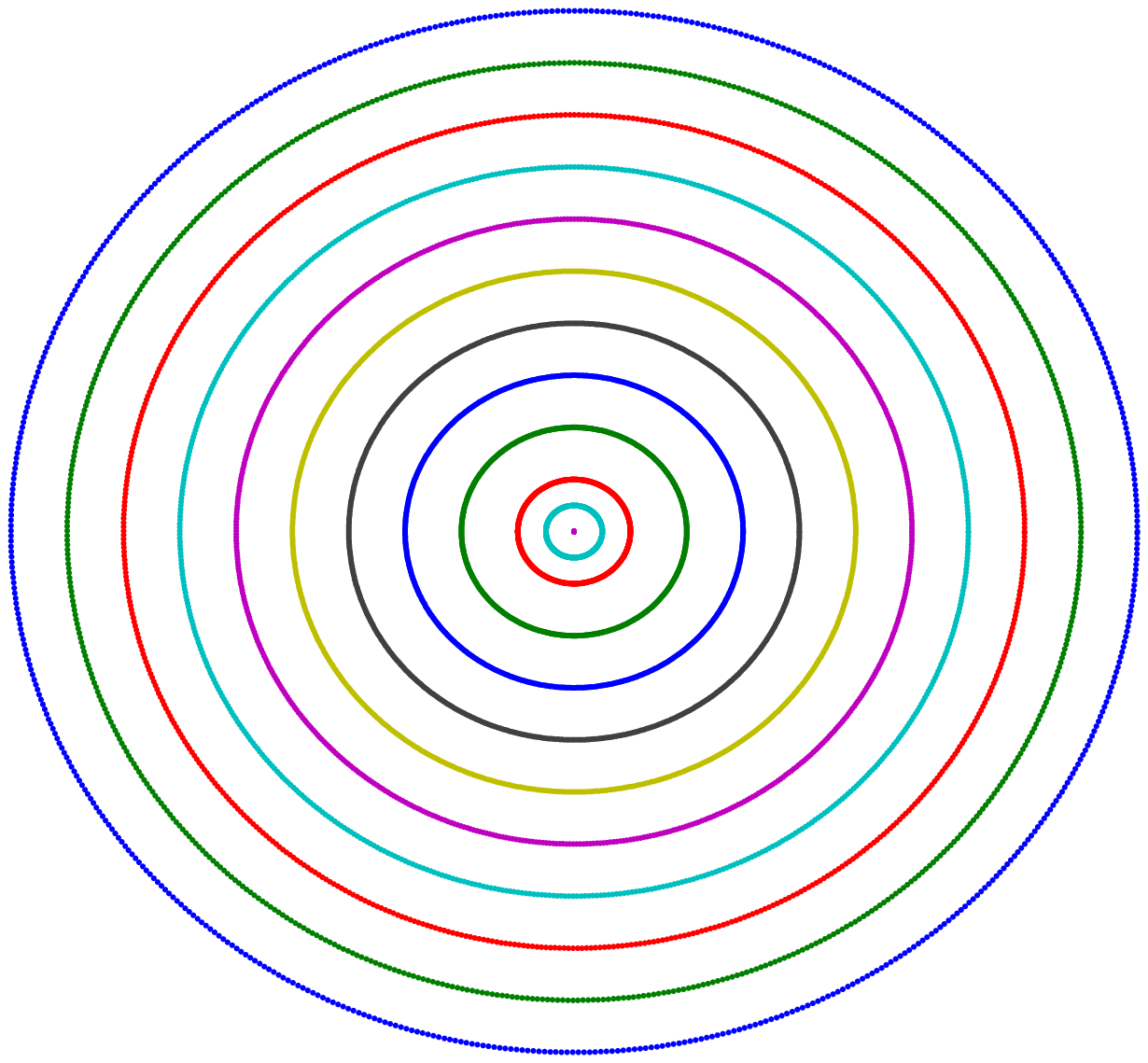}
}
\subfigure[]{
\includegraphics[scale=0.2]{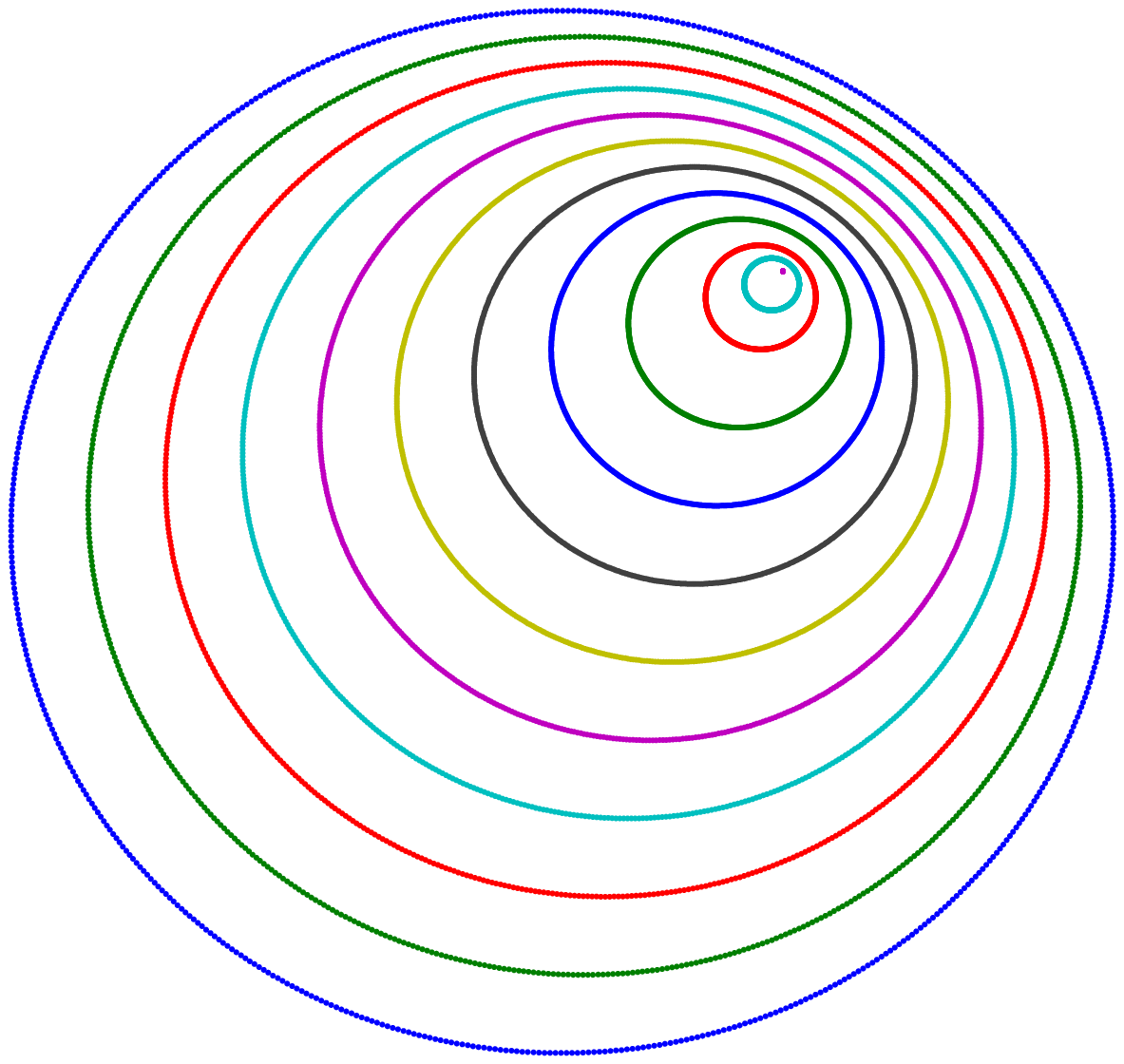}
}
\subfigure[]{
\includegraphics[scale=0.2]{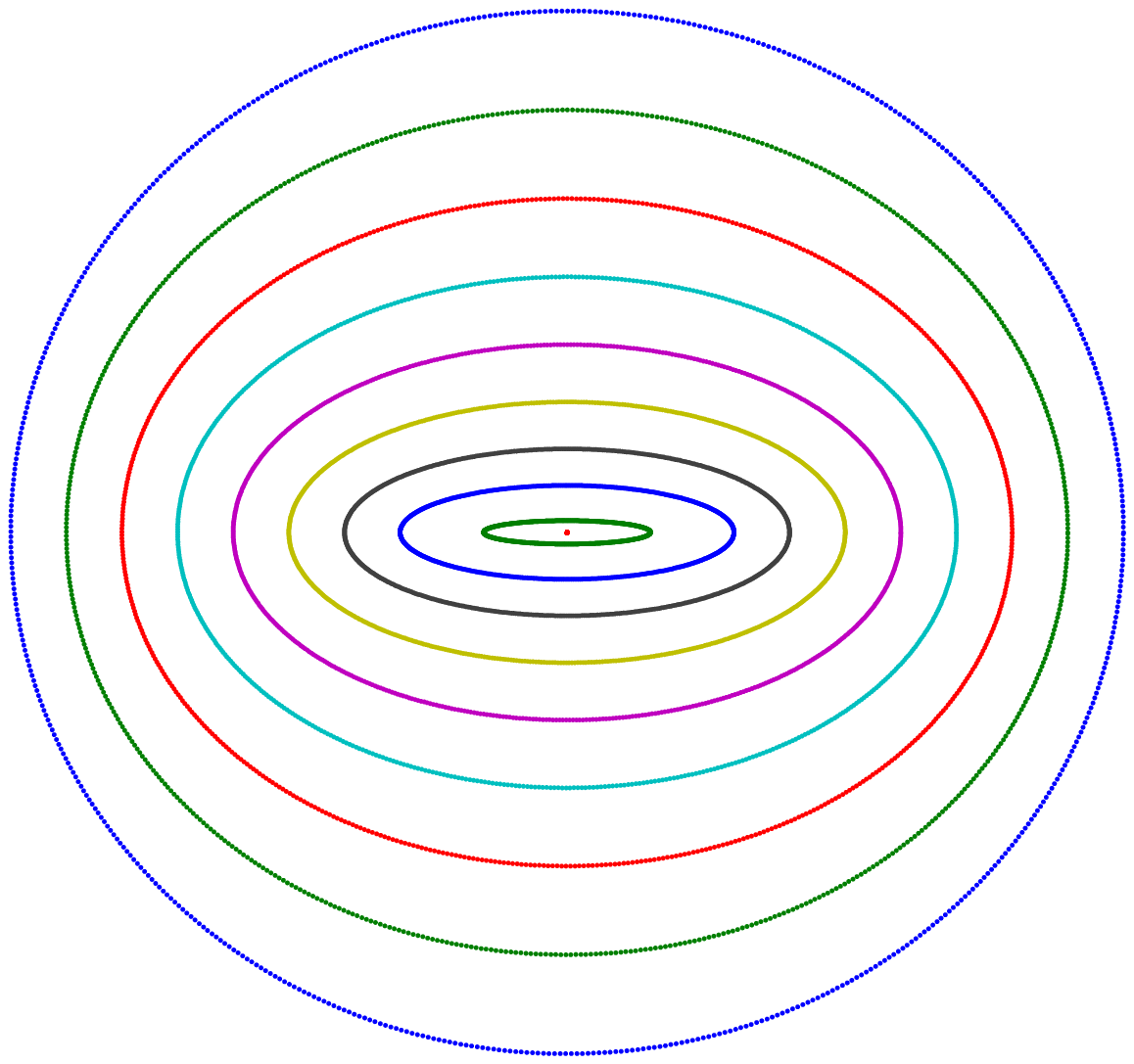}
}
\subfigure[]{
\includegraphics[scale=0.2]{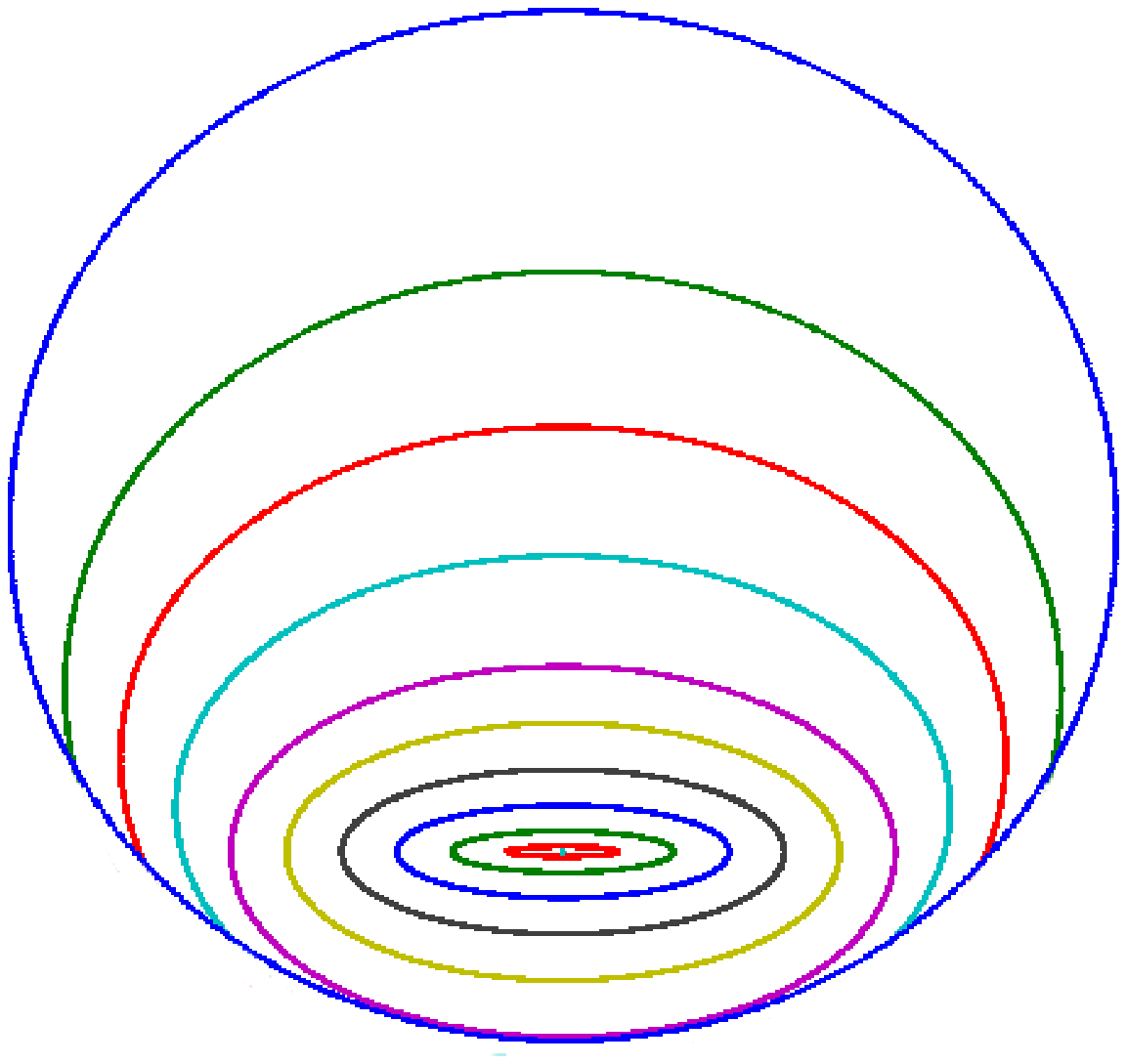}
}
\caption{Telescoping Surfaces defined using different homotopies.}
\label{fig:telescoping_surfaces}
\end{center}
\end{figure}

Using the homotopy, for each $\lambda$, we can define a telescoping surface $\partial T^{\lambda}$ such that
\begin{equation}
\partial T^{\lambda} = \{h(\theta,\lambda) : \theta \in \partial T\} \,, \label{eq:homotopy_to_boundary}
\end{equation}
where $\partial T$ is the boundary of the field.  As an example, we consider defining different telescoping surfaces for the field defined on a unit disc.  The boundary of the unit disc can be parametrized by the set of points
\begin{equation}
\partial T = \{(\cos \theta,\sin \theta), \theta \in [-\pi,\pi]\} \,.
\end{equation}
We consider four different kinds of telescoping surfaces:
\begin{enumerate}[a)]
\item The telescoping surfaces in Fig~\ref{fig:telescoping_surfaces}(a) are generated using the homotopy
\begin{equation}
h((\cos\theta,\sin\theta),\lambda) = ((1-\lambda)\cos \theta,(1-\lambda)\sin \theta) \,. \label{eq:conc_circles}
\end{equation}

\item The telescoping surfaces in Fig~\ref{fig:telescoping_surfaces}(b) can be generated by the homotopy

$h((\cos\theta,\sin\theta),\lambda) $
\begin{align}
&= ((1-\lambda)(\cos \theta - c_1) + c_1,
(1-\lambda) (\sin \theta - c_2) + c_2) \,,\\
&= ((1-\lambda)\cos \theta + c_1 - (1-\lambda)c_1, \label{eq:nonsym_circles} \\
&\hspace{2cm} (1-\lambda)\cos \theta + c_2 - (1-\lambda)c_2) \,, 
\nonumber
\end{align}
where $(c_1,c_2)$ is inside the unit disc, \emph{i.e.}, $c_1^2 + c_2^2 < 1$.  For the homotopy in (\ref{eq:conc_circles}), each telescoping surface is centered about the origin, whereas the telescoping surfaces in (\ref{eq:nonsym_circles}) are centered about the point
$(c_1-(1-\lambda)c_1,c_2-(1-\lambda)c_2)$.
\item In Fig~\ref{fig:telescoping_surfaces}(a)-(b), the telescoping surfaces are circles, however, we can also have other shapes for the telescoping surface.  Fig~\ref{fig:telescoping_surfaces}(c) shows an example in which the telescoping surface is an ellipse, which we generate using the homotopy
\begin{equation}
h((\cos\theta,\sin\theta),\lambda) =
(a_{\lambda} \cos \theta, b_{\lambda} \sin \theta) \\
\end{equation}
where $a_{\lambda}$ and $b_{\lambda}$ are continuous functions chosen in such a way that P1-P4 are satisfied for $h$.  In Fig~\ref{fig:telescoping_surfaces}(c), we choose $a_{\lambda} = \lambda$ and $b_{\lambda} = \lambda^2$.

\item Another example of a set of telescoping surfaces is shown in Fig~\ref{fig:telescoping_surfaces}(d).  From here, we notice that two telescoping surfaces may have common points.
\end{enumerate}

Apart from the telescoping surfaces for a unit disc shown in Fig~\ref{fig:telescoping_surfaces}(a)-(d), we can define many more telescoping surfaces.  The basic idea in obtaining these surfaces, which is compactly captured by defining a homotopy, is to continuously deform the boundary of the index set until we converge to a point within the index set.  In the next Section, we provide a characterization of continuous index sets in $\R^d$ for which we can easily find telescoping surfaces by simply scaling and translating the points on the boundary.

\subsection{Generating Similar Telescoping Surfaces}
\label{sec:paramet_domains}

From Section \ref{sec:telescoping_surface_homotopy}, it is clear that, for a given domain, many different telescoping surfaces can be obtained by defining different homotopies.  In this Section, we identify domains on which we can easily generate a set of telescoping surfaces, which we call \emph{similar telescoping surfaces}.

\begin{definition}[Similar Telescoping Surfaces]
\label{def:similar_telescoping_surfaces}
Two telescoping surfaces are \emph{similar} if there exists an affine map between them, \emph{i.e.}, we can map one to another by scaling and translating of the coordinates.  A set of telescoping surfaces are similar if each pair of telescoping surfaces in the set are similar.
\end{definition}

As an example, the set of telescoping surfaces in Fig~\ref{fig:telescoping_surfaces}(a)-(b) are similar since all the telescoping surfaces are circles.  On the other hand, the telescoping surfaces in Fig~\ref{fig:telescoping_surfaces}(c)-(d) are not similar since each telescoping surfaces has a different shape.  The following theorem shows that, for certain index sets, we can always find a set of similar telescoping surfaces.

\begin{theorem}
For a domain $T \in \R^d$ with boundary $\partial T$ if there exists a point $c \in T$ such that, for all $t \in T \cup \partial T$ and $\lambda \in [0,1]$, 
$(1-\lambda) t + \lambda c \in T$, we can generate similar telescoping surfaces using the homotopy
\begin{equation}
h(\theta,\lambda) = (1-\lambda) \theta + \lambda c \,,\quad \theta \in \partial T \,. \label{eq:homotopy_similar}
\end{equation}
\end{theorem}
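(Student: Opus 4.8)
The plan is to check that the map $h$ in (\ref{eq:homotopy_similar}) satisfies properties P1--P4 of a telescoping homotopy, and then to write down explicit affine maps between the resulting surfaces $\partial T^{\lambda}=\{h(\theta,\lambda):\theta\in\partial T\}$, which by Definition~\ref{def:similar_telescoping_surfaces} is all that ``similar'' requires. Throughout I abbreviate the dilation $\phi_{\lambda}(x)=(1-\lambda)x+\lambda c$, so that $h(\theta,\lambda)=\phi_{\lambda}(\theta)$ and $\partial T^{\lambda}=\phi_{\lambda}(\partial T)$. The hypothesis says precisely that $\phi_{\lambda}(\overline T)\subset T$ for every $\lambda\in(0,1]$ (i.e.\ $\overline T$ is star-shaped about $c$, with the open segments landing in the open set); note $\phi_{\lambda}$ is an invertible affine self-map of $\R^d$ for $\lambda\in[0,1)$ and collapses $\overline T$ to $\{c\}$ at $\lambda=1$. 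I will also use that $T$ is bounded, which is implicit in the telescoping setup.

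The easy items come first: continuity of $h$ is immediate since $\phi_{\lambda}(\theta)$ is polynomial in $(\theta,\lambda)$, and evaluating at $\lambda=0$ and $\lambda=1$ gives $\partial T^{0}=\partial T$ and $\partial T^{1}=\{c\}$, which is P1; P3 is trivial because $\lambda_1\in[0,\lambda_2]$ whenever $\lambda_1\le\lambda_2$. The main geometric input is the dilation semigroup identity: for $0\le\lambda_1<\lambda_2\le1$ one has $\phi_{\lambda_2}=\phi_{\lambda_1}\circ\phi_{s}$ with $s=(\lambda_2-\lambda_1)/(1-\lambda_1)\in(0,1]$. Combined with $\phi_{s}(\overline T)\subset T$ and the fact that $\phi_{\lambda_1}$ is an open map, this yields the nesting $\phi_{\lambda_2}(\overline T)\subset\phi_{\lambda_1}(T)$, so the closed scaled copies $\phi_{\lambda}(\overline T)$ shrink monotonically to $\{c\}$ as $\lambda\uparrow 1$.

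Next I would establish P2 and P4 from this. Since $\phi_{\lambda}$ is a homeomorphism of $\R^d$ for $\lambda<1$, it sends boundary to boundary, so $\partial T^{\lambda}=\phi_{\lambda}(\partial T)=\partial(\phi_{\lambda}(T))$; together with the nesting and a star-shaped ``ray'' decomposition one then checks $\bigcup_{\mu\in(\lambda,1)}\phi_{\mu}(\partial T)=\phi_{\lambda}(T)\setminus\{c\}$, whose topological boundary is exactly $\partial T^{\lambda}$, which is P2. For P4 and the ray decomposition: given $x\in T$ with $x\ne c$, set $t^{*}=\sup\{t\ge 0:c+t(x-c)\in\overline T\}$, which is finite (boundedness) and at least $1$ (since $x\in\overline T$); then $\theta^{*}:=c+t^{*}(x-c)\in\overline T\setminus T=\partial T$ and $x=\phi_{\lambda}(\theta^{*})$ with $\lambda=1-1/t^{*}\in[0,1)$, while $c=\phi_{1}(\theta)$ for any $\theta$. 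Hence $\bigcup_{\lambda}\partial T^{\lambda}\supseteq T$, and the reverse containment in $\overline T$ is clear from $\phi_{\lambda}(\overline T)\subset\overline T$; this is P4, modulo the mild $T$-versus-$\overline T$ imprecision in the statement of P4 (and in the theorem's hypothesis at $\lambda=0$).

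It remains to prove similarity. For $\lambda_1,\lambda_2\in[0,1)$ both $\phi_{\lambda_1}$ and $\phi_{\lambda_2}$ are invertible affine maps, so $\phi_{\lambda_2}\circ\phi_{\lambda_1}^{-1}$ is the affine map $y\mapsto\frac{1-\lambda_2}{1-\lambda_1}\,y+\bigl(\lambda_2-\tfrac{1-\lambda_2}{1-\lambda_1}\lambda_1\bigr)c$, a uniform scaling followed by a translation, and it carries $\partial T^{\lambda_1}$ bijectively onto $\partial T^{\lambda_2}$; by Definition~\ref{def:similar_telescoping_surfaces} the two surfaces are similar, and since every pair from $\{\partial T^{\lambda}:\lambda\in[0,1)\}$ is similar, the whole family is (the surface at $\lambda=1$ being the degenerate point $c$). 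I expect the only genuinely delicate step to be P2 --- identifying the swept region $\{h(t,\mu):(t,\mu)\in\partial T\times(\lambda,1)\}$ with $\phi_{\lambda}(T)\setminus\{c\}$ and computing its boundary --- which is precisely where the strong star-shapedness hypothesis (open segments in the open set) and the dilation semigroup identity do the work; everything else is routine point-set topology and affine algebra.
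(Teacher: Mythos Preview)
Your proposal is correct and follows the same overall strategy as the paper: verify that $h$ satisfies P1--P4 and then exhibit the affine relationship between the surfaces. The paper's own proof is considerably terser---it checks P1, argues P3 via the observation that each point of $\partial T^{\lambda}$ lies at distance $(1-\lambda)\|\theta-c\|$ from $c$, and then simply asserts that similarity holds because the construction is a scaling plus translation of $\partial T$; it does not explicitly verify P2 or P4. Your dilation-semigroup identity $\phi_{\lambda_2}=\phi_{\lambda_1}\circ\phi_{s}$ and the ray decomposition for P4 are genuine additions that fill in those gaps, and your explicit formula for $\phi_{\lambda_2}\circ\phi_{\lambda_1}^{-1}$ makes the similarity claim concrete rather than asserted. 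So the route is the same, but your argument is more complete; you also correctly flag the mild $T$-versus-$\overline T$ imprecision in the hypothesis at $\lambda=0$, which the paper does not address.
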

\begin{IEEEproof}
Given the homotopy in (\ref{eq:homotopy_similar}), the telescoping surfaces are given by $\partial T^{\lambda} = \{h(\theta,\lambda) : \theta \in \partial T\}$.  Using (\ref{eq:homotopy_similar}), it is clear that $\partial T^0 = \partial T$ and $\partial T^1 = c$.  Given the assumption, we have that $\partial T^{\lambda} \subset T$ for $0 < \lambda \le 1$.  Since the distance of each point on $\partial T^{\lambda}$ to the point $c$ is $(1-\lambda)||\theta - c||$, it is clear that, for $\lambda_1 < \lambda_2$, 
$\partial T^{\lambda_1} \subset \{\partial T^{\mu}: 0 \le \mu \le  \lambda_2\}$.  This shows that the homotopy in (\ref{eq:homotopy_similar}) defines a valid telescoping surface.  The set of telescoping surfaces is similar since we are only scaling and translating the boundary $\partial T$.
\end{IEEEproof}

\begin{figure}
\begin{center}
\subfigure[]{
\includegraphics[scale=0.2]{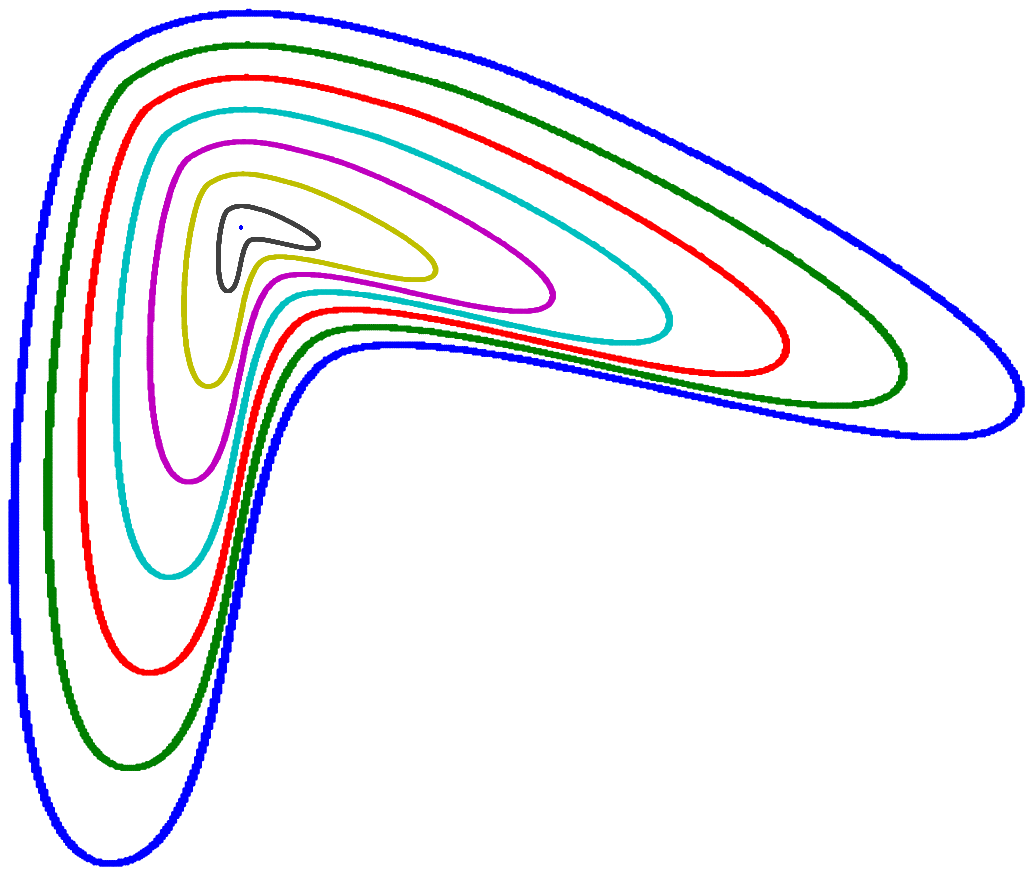}
}
\subfigure[]{
\includegraphics[scale=0.2]{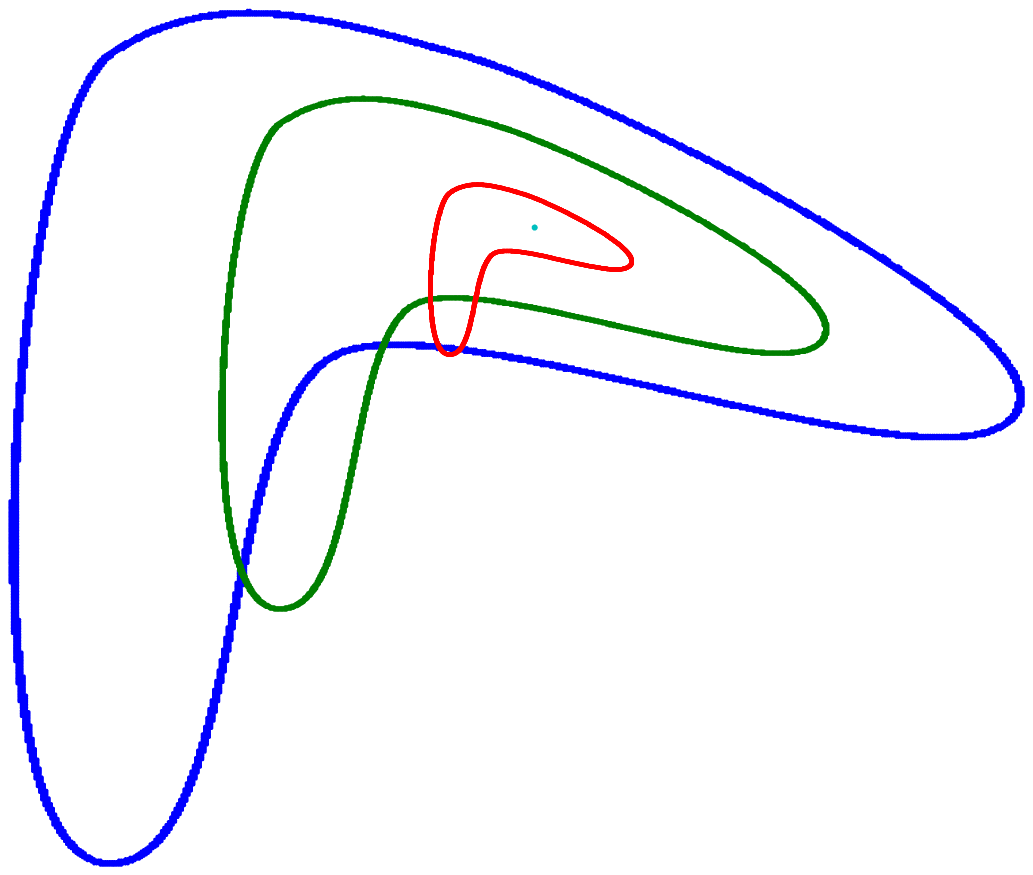}
}
\subfigure[]{
\includegraphics[scale=0.2]{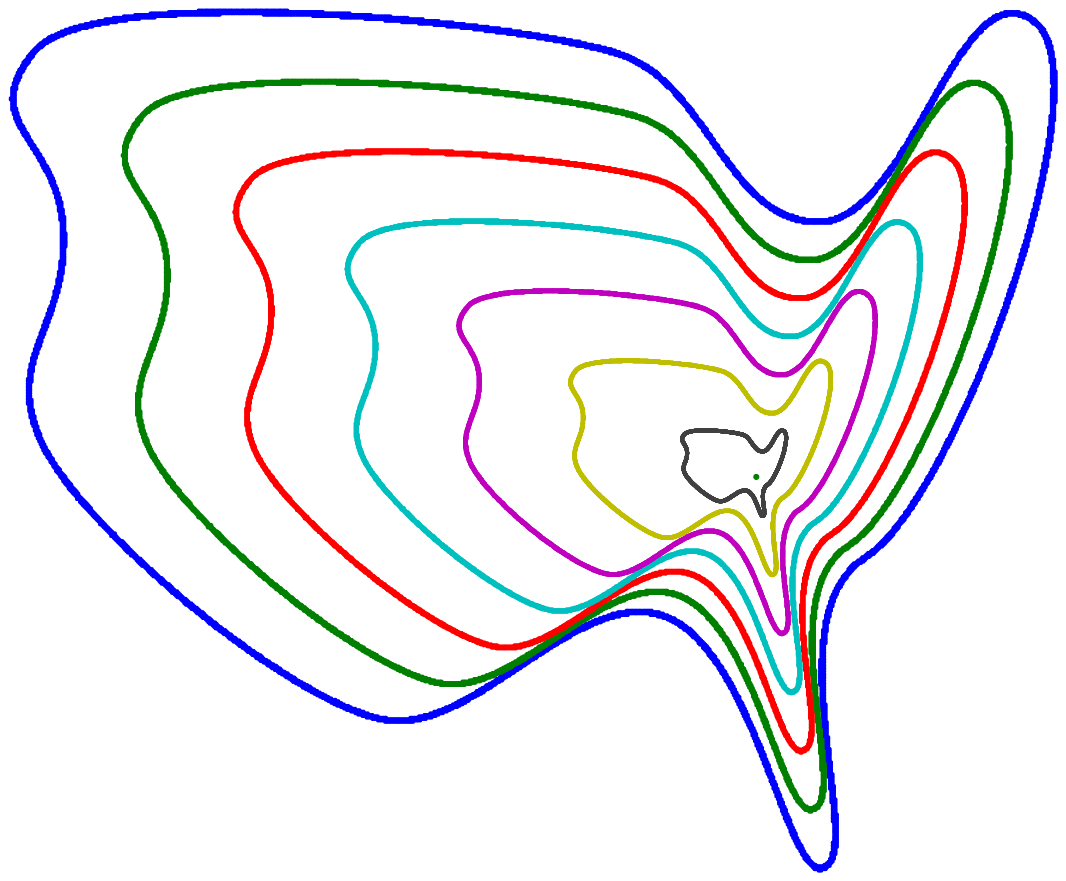}
}
\subfigure[]{
\includegraphics[scale=0.15]{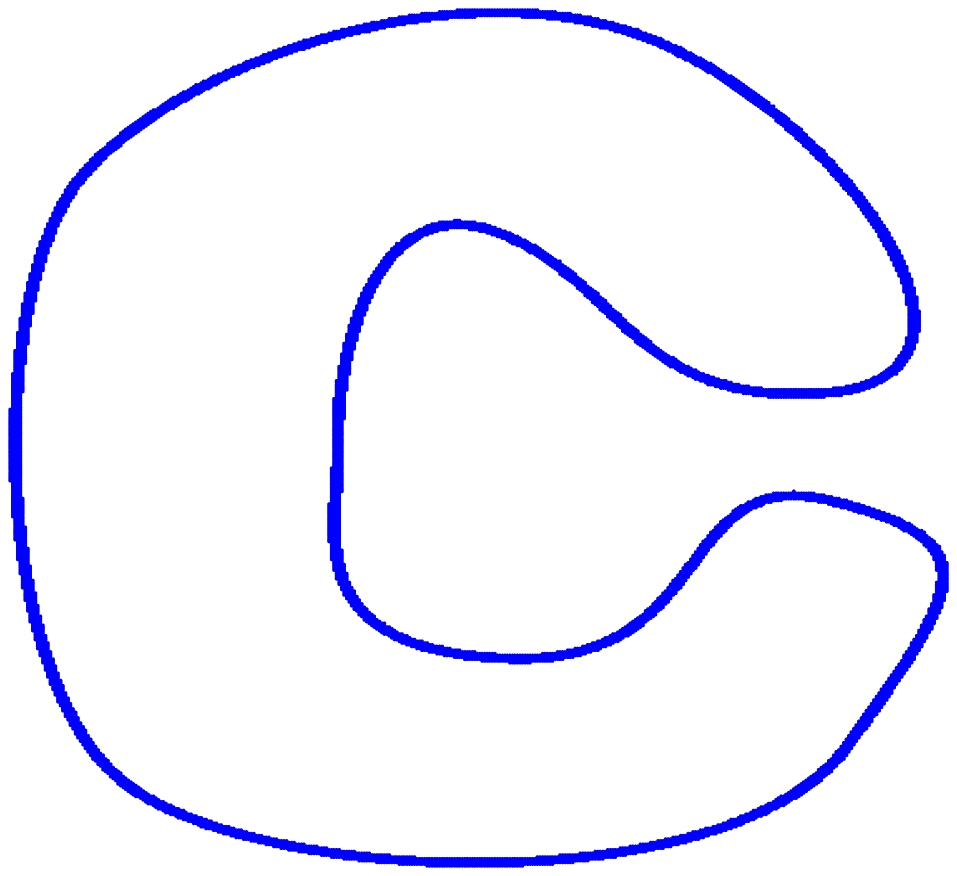}
}
\caption{Telescoping Surfaces defined using different homotopies.}
\label{fig:similar_telescoping_surfaces}
\end{center}
\end{figure}

Examples of similar telescoping surfaces generated using the homotopy in (\ref{eq:homotopy_similar}) are shown in Fig~\ref{fig:similar_telescoping_surfaces}(a) and Fig~\ref{fig:similar_telescoping_surfaces}(c).  Choosing an appropriate $c$ is important to generate similar telescoping surfaces.  For example, Fig~\ref{fig:similar_telescoping_surfaces}(b) shows an example where telescoping surfaces are generated using (\ref{eq:homotopy_similar}).  It is clear that these surfaces do not satisfy the desired properties of telescoping surfaces.
Fig~\ref{fig:similar_telescoping_surfaces}(d) shows an example of an index set for which similar telescoping surfaces do not exist since there exists no point $c$ for which $(1-\lambda)t + \lambda c \in T$ for all $\lambda \in [0,1]$ and $t \in T \cup \partial T$.

\subsection{Telescoping Representations}
\label{sec:telescoping_repr}

We now generalize the telescoping representation to GMRFs defined on arbitrary domains.  Let $x(t)$ be a zero mean GMRF, where $t \in T \subset \R^d$ such that the smooth boundary of $T$ is $\partial T$.  Define a set of telescoping surfaces $\partial T^{\lambda}$ constructed by defining a homotopy $h(\theta,\lambda)$, where $\theta \in \partial T$ and $\lambda \in [0,1]$.  We parametrize the GMRF $x(t)$ as $x_{\lambda}(\theta)$ such that
\begin{equation}
x_{\lambda}(\theta) = x(h(\theta,\lambda)) \,.
\end{equation}
Denote $\Theta = \partial T$ and define $C_{\lambda}(\theta_1,\theta_2)$, $B_{\lambda}(\theta)$, and $F_{\theta}$ by (\ref{eq:c_t_unit_disc}), (\ref{eq:b_l_unit_disc}), and (\ref{eq:f_theta_unit_disc}), respectively.  
Although the initial definition for these values was for $\Theta = [-\pi,\pi]$ and $x_{\lambda}(\theta)$ parametrized in polar coordinates, assume the definitions in (\ref{eq:c_t_unit_disc}), (\ref{eq:b_l_unit_disc}), and (\ref{eq:f_theta_unit_disc}) are in terms of the parameters defined in this Section.  The normal derivatives in the definition of $F_{\theta}$ for a point $x_{\lambda}(\theta)$ will be computed in the direction normal to the telescoping surface $\partial T^{\lambda}$ at the point $h(\theta,\lambda)$.  
The telescoping representation is given by
\begin{equation}
dx_{\lambda}(\theta) = F_{\theta}[x_{\lambda}(\theta)] d\lambda + B_{\lambda}(\theta) dw_{\lambda}(\theta) \label{eq:telescoping_arbit}
\end{equation}
where the $w_{\lambda}(\theta) = w(h(\theta,\lambda))$ is the driving noise with the same properties as outlined in Theorem~\ref{thm:telescoping_rec_gmrf_unit_disc}.  It is clear from (\ref{eq:telescoping_arbit}), that the recursions for the GMRF initiate at the boundary and recurse inwards along the telescoping surfaces defined using the homotopy $h(\theta,\lambda)$.  Thus, the recursions are effectively captured by the parameter $\lambda$.

\section{Recursive Estimation of GMRFs}
\label{sec:rec_estimation_gmrf}

Using the telescoping representation, we now derive recursive equations for estimating GMRFs.  Let $x(t)$ be the zero mean GMRF defined on an index set $T \subset \R^d$ with smooth boundary $\partial T$.  Assume the parametrization $x_{\lambda}(\theta) = x(h(\theta,\lambda))$, where $h(\theta,\lambda)$ is an appropriate homotopy and $\theta \in \Theta = \partial T$.  The corresponding telescoping representation is given in (\ref{eq:telescoping_arbit}).

Consider the observations, written in parametric form, as
\begin{equation}
d y_{\lambda}(\theta) = G_{\lambda}(\theta) x_{\lambda}(\theta) d\lambda + D_{\lambda}(\theta) dn_{\lambda}(\theta) \,,\; 0\le\lambda\le 1 \,, \label{eq:observations}
\end{equation}
where $G_{\lambda}(\theta)$ and $D_{\lambda}(\theta)$ are known functions with $D_{\lambda}(\theta) \ne 0$, $y_0(\theta) = 0$ for all $\theta \in \Theta$, $n_{\lambda}(\theta)$ is standard Brownian motion for each fixed $\theta$ such that
\begin{equation}
E[n_{\lambda_1}(\theta_1) n_{\lambda_2}(\theta_2)] = 0\,,\quad
\lambda_1 \ne \lambda_2, \theta_1 \ne \theta_2 \,,
\end{equation}
and $n_{\lambda}(\theta)$ is independent GMRF $x_{\lambda}(\theta)$.

We consider the filtering and smoothing problem for GMRFs.  For random fields, because of the multidimensional index set, it is not clear how to define the filtered estimate.  For Markov processes, the filtered estimate sweeps the data in a causal manner, because the process itself admits a causal representation.  To define the filtered estimate for GMRFs, we sweep the observations over the telescoping surfaces defined in the telescoping recursive representation in (\ref{eq:telescoping_arbit}).
Define the filtered estimate $\widehat{x}_{\lambda|\lambda}(\theta)$, error $\widetilde{x}_{\lambda|\lambda}(\theta)$, and error covariance $S_{\lambda}(\alpha,\beta)$ such that
\begin{align}
\widehat{x}_{\lambda|\lambda}(\theta) &\triangleq 
E\left[ x_{\lambda}(\theta) | 
\sigma\{y_{\mu}(\theta) \,, 0\le \mu \le \lambda,\theta \in \Theta\} \right] \label{eq:filtered_est}\\
\widetilde{x}_{\lambda|\lambda}(\theta) &\triangleq 
{x}_{\lambda|\lambda}(\theta) -\widehat{x}_{\lambda|\lambda}(\theta) \label{eq:error}\\
S_{\lambda}(\alpha,\beta) &\triangleq 
E[\widetilde{x}_{\lambda | \lambda}(\alpha) 
\widetilde{x}_{\lambda | \lambda}(\beta)] \,.
\end{align}
The set $\{y_{\mu}(\theta) \,, 0\le \mu \le \lambda,\theta \in \Theta\}$ consists of the region between the boundary of the field, $\partial T$, and the surface $\partial T^{\lambda}$.  
A stochastic differential equation for the filtered estimate $\widehat{x}_{\lambda|\lambda}(\theta)$ is given in the following theorem.

\begin{theorem}[\textbf{Recursive Filtering of GMRFs}]
\label{thm:recursive_filter}
For the GMRF $x_{\lambda}(\theta)$ with observations $y_{\lambda}(\theta)$, a stochastic differential equation for the filtered estimate $\widehat{x}_{\lambda|\lambda}(\theta)$, defined in (\ref{eq:filtered_est}), is given as follows:
\begin{equation}
d \widehat{x}_{\lambda|\lambda}(\theta) = F_{\theta}[\widehat{x}_{\lambda|\lambda}(\theta)] d\lambda + 
K_{\theta} [de_{\lambda}(\theta)] \,, \label{eq:filter_field}
\end{equation}
where $e_{\lambda}(\theta)$ is the innovation field such that
\begin{equation}
D_{\lambda}(\theta) de_{\lambda}(\theta) = dy_{\lambda}(\theta) - 
G_{\lambda}(\theta) \widehat{x}_{\lambda | \lambda}(\theta) d\lambda\,,
\end{equation}
$F_{\theta}$ is the integral transform defined in (\ref{eq:f_theta_unit_disc}) and $K_{\theta}$ is an integral transform such that
\begin{equation}
K_{\theta} [de_{\lambda}(\theta)] = \int_{\Theta}
\frac{G_{\lambda}(\alpha)}{D_{\lambda}(\alpha)} S_{\lambda}(\alpha,\theta)
de_{\lambda}(\alpha) d \alpha \,, \label{eq:k_int}
\end{equation}
where $S_{\lambda}(\alpha,\theta)$ satisfies the equation
\begin{align}
\frac{\partial}{\partial \lambda}S_{\lambda}(\alpha,\theta) &= 
F_{\alpha} [S_{\lambda}(\alpha,\theta)] + 
F_{\theta} [S_{\lambda}(\alpha,\theta)] + 
C_{\lambda}(\theta,\alpha) \nonumber \\
&-\int_{\Theta} \frac{G_{\lambda}^2(\beta)}{D_{\lambda}^2(\beta)} 
S_{\lambda}(\alpha,\beta) S_{\lambda}(\theta,\beta) d\beta
\,. \label{eq:ricatti_s}
\end{align}
\end{theorem}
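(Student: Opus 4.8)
The plan is to adapt the innovations approach to Kalman--Bucy filtering to the present spatially indexed setting, treating $\lambda$ as the time parameter and $\theta \in \Theta$ as the index of an infinite-dimensional state. First I would fix the observation filtration $\mathcal{Y}_\lambda = \sigma\{y_\mu(\theta) : 0 \le \mu \le \lambda,\ \theta \in \Theta\}$ and the normalized innovation field $e_\lambda(\theta)$ through $D_\lambda(\theta)\,de_\lambda(\theta) = dy_\lambda(\theta) - G_\lambda(\theta)\widehat{x}_{\lambda|\lambda}(\theta)\,d\lambda$. Substituting the observation model (\ref{eq:observations}) gives $de_\lambda(\theta) = \frac{G_\lambda(\theta)}{D_\lambda(\theta)}\widetilde{x}_{\lambda|\lambda}(\theta)\,d\lambda + dn_\lambda(\theta)$. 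The first key step is the innovations lemma: for each fixed $\theta$, $e_\lambda(\theta)$ is a $\mathcal{Y}_\lambda$-Brownian motion in $\lambda$ with $E[de_\lambda(\theta_1)\,de_\lambda(\theta_2)] = 0$ for $\theta_1 \ne \theta_2$, and $\sigma\{e_\mu(\theta) : 0 \le \mu \le \lambda,\ \theta \in \Theta\} = \mathcal{Y}_\lambda$. This follows as in the classical derivation from $E[\widetilde{x}_{\lambda|\lambda}(\theta)\mid\mathcal{Y}_\lambda] = 0$ and the joint Gaussianity of $(x,y)$, once one has verified that a stochastic Fubini theorem applies so that the $\Theta$-integrals and the It\^o integrals in $\lambda$ may be interchanged.

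Second, since $x_\lambda(\theta)$ and the observations are jointly Gaussian, the filtered estimate is a linear functional of the observations, hence of the innovations, and therefore admits a representation $\widehat{x}_{\lambda|\lambda}(\theta) = \int_0^\lambda \int_\Theta k_\lambda(\theta,\mu,\alpha)\,de_\mu(\alpha)\,d\alpha$ for a deterministic kernel $k_\lambda$. Differentiating this in $\lambda$ and using the telescoping representation (\ref{eq:telescoping_arbit}), together with the linearity of $F_\theta$ and the fact that $F_\theta$ acts only on the field and its normal derivatives restricted to the surface $\partial T^\lambda$ (so that $E[F_\theta[x_\lambda(\theta)]\mid\mathcal{Y}_\lambda] = F_\theta[\widehat{x}_{\lambda|\lambda}(\theta)]$, which is where Theorem~\ref{thm:prediction} enters), produces the drift term $F_\theta[\widehat{x}_{\lambda|\lambda}(\theta)]\,d\lambda$ in (\ref{eq:filter_field}). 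The gain $K_\theta$ is then pinned down by the orthogonality principle: the error $\widetilde{x}_{\lambda|\lambda}(\theta)$ is orthogonal to $\mathcal{Y}_\lambda$, so projecting $dx_\lambda(\theta)$ onto the new information $de_\lambda(\alpha)$ and matching with the ansatz for $d\widehat{x}_{\lambda|\lambda}$ forces $K_\theta[de_\lambda(\theta)] = \int_\Theta \frac{G_\lambda(\alpha)}{D_\lambda(\alpha)} S_\lambda(\alpha,\theta)\,de_\lambda(\alpha)\,d\alpha$, which is (\ref{eq:k_int}).

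Third, I would obtain the Riccati equation (\ref{eq:ricatti_s}) from the error dynamics $d\widetilde{x}_{\lambda|\lambda}(\theta) = F_\theta[\widetilde{x}_{\lambda|\lambda}(\theta)]\,d\lambda + B_\lambda(\theta)\,dw_\lambda(\theta) - K_\theta[de_\lambda(\theta)]$ by applying the It\^o product rule to $\widetilde{x}_{\lambda|\lambda}(\alpha)\widetilde{x}_{\lambda|\lambda}(\theta)$ and taking expectations. The drift contributions give $F_\alpha[S_\lambda(\alpha,\theta)] + F_\theta[S_\lambda(\alpha,\theta)]$; the quadratic-variation contribution of the driving noise gives $C_\lambda(\theta,\alpha)$, since Theorem~\ref{thm:telescoping_rec_gmrf_unit_disc} (part iv) yields $B_\lambda(\theta)B_\lambda(\alpha)\,E[dw_\lambda(\theta)\,dw_\lambda(\alpha)] = C_\lambda(\theta,\alpha)\,d\lambda$; and the gain term, using the whiteness of $e_\lambda$ from the first step, contributes $-\int_\Theta \frac{G_\lambda^2(\beta)}{D_\lambda^2(\beta)} S_\lambda(\alpha,\beta) S_\lambda(\theta,\beta)\,d\beta$, while the cross term between $\widetilde{x}_{\lambda|\lambda}$ and $de_\lambda$ is precisely absorbed into the gain. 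I expect the main obstacle to be the rigorous infinite-dimensional bookkeeping: unlike the finite-dimensional Kalman--Bucy filter, the state here is a field over $\Theta$ driven by $w_\lambda(\theta)$, which is Brownian in $\lambda$ but spatially correlated with a covariance $C_\lambda$ that is only white in a generalized sense, so (\ref{eq:ricatti_s}) is genuinely an operator-valued integro-differential equation; one must justify the stochastic Fubini interchanges, the $\lambda$-differentiability of the kernel $k_\lambda$, and the validity of the It\^o product rule for these generalized-noise-driven fields. Granting the machinery underlying Theorem~\ref{thm:telescoping_rec_gmrf_unit_disc}, the remaining computations are the standard Kalman--Bucy algebra performed index-by-index in $\theta$.
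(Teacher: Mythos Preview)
Your proposal is correct and follows the same innovations-based Kalman--Bucy strategy as the paper: define $e_\lambda(\theta)$, prove it is Brownian in $\lambda$ and generates $\mathcal{Y}_\lambda$, represent $\widehat{x}_{\lambda|\lambda}$ as a stochastic integral against the innovations, differentiate to obtain the filter equation, and derive the Riccati equation from the error dynamics.

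The execution differs in two places worth noting. First, to extract the drift $F_\theta[\widehat{x}_{\lambda|\lambda}]$ you commute $F_\theta$ with the conditional expectation directly, whereas the paper instead introduces an explicit transition kernel $\Phi_{\lambda,\mu}(\theta,\alpha)$ satisfying $\partial_\lambda \Phi_{\lambda,\mu} = F_\theta[\Phi_{\lambda,\mu}]$, writes $x_\lambda$ in closed form via $\Phi$, computes the innovations kernel $g_{\lambda,\mu}$ explicitly, and only then differentiates; the commutation you need is supplied in the paper by a separate lemma showing $F_\theta$ passes through the $d\gamma$-integral. Your route is shorter but relies on $E[\partial^j_n x_\lambda(\alpha)\mid\mathcal{Y}_\lambda] = \partial^j_n \widehat{x}_{\lambda|\lambda}(\alpha)$ for the weak normal derivatives inside $F_\theta$, which should be justified. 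Second, for the Riccati equation you apply the It\^o product rule to $\widetilde{x}_{\lambda|\lambda}(\alpha)\widetilde{x}_{\lambda|\lambda}(\theta)$, while the paper writes $S_\lambda = E[x_\lambda x_\lambda] - E[\widehat{x}_{\lambda|\lambda}\widehat{x}_{\lambda|\lambda}]$ and differentiates the two pieces separately; both are standard and lead to the same quadratic-variation and gain contributions.
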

\begin{IEEEproof}
See Appendix \ref{appendix_proof_filter}.
\end{IEEEproof}
We show in Lemma~\ref{lemma:innovation} (Appendix B) that $e_{\lambda}(\theta)$ is Brownian motion.  Thus, (\ref{eq:filter_field}) can be interpreted using Ito calculus.  Since we do not observe the field on the boundary, we assume that the boundary conditions in (\ref{eq:filter_field}) are zero such that:
\begin{equation}
\frac{\partial^j}{\partial n^j} \widehat{x}_{\lambda | \lambda}(\theta) = 0 \,,\; j = 1\,\ldots, m-1\,,\; \theta \in \Theta \,.
\end{equation}
The boundary equations for the partial differential equation associated with the filtered error covariance is computed using the covariance of the field at the boundary such that
\begin{equation}
\frac{\partial^j}{\partial n^j} S_{0}(\alpha,\theta)
= \frac{\partial^j}{\partial n^j}
R_{0,0}(\alpha,\theta) \,,\; \alpha,\theta \in \Theta \,.
\end{equation}

The filtering equation in (\ref{eq:filter_field}) is similar to the Kalman-Bucy filtering equations derived for Gauss-Markov processes. 
The differences arise because of the telescoping surfaces.  Using (\ref{eq:filter_field}), let $\Theta$ be a single point instead of $[-\pi,\pi]$.  In this case, the integrals in (\ref{eq:f_theta_unit_disc}) and (\ref{eq:k_int}) disappear and we easily recover the Kalman-Bucy filter for Gauss-Markov processes.

Using the filtered estimates, we now derive equations for smoothing GMRFs.  Define the smoothed estimate $\widehat{x}_{\lambda|T}(\theta)$, error $\widetilde{x}_{\lambda|T}(\theta)$, and error covariance $S_{\lambda | T}(\alpha,\beta)$ as follows:
\begin{align}
\widehat{x}_{\lambda|T}(\theta) &\triangleq E[x_{\lambda|T} | \sigma\{y(T)\}] \\
\widetilde{x}_{\lambda|T}(\theta) &\triangleq x_{\lambda}(\theta) - \widehat{x}_{\lambda|T}(\theta) \\
S_{\lambda | T}(\alpha,\beta) &= E[\widetilde{x}_{\lambda|T}(\alpha)
\widetilde{x}_{\lambda|T}(\beta)] \,.
\end{align}
A recursive smoother for GMRFs, similar to the Rauch-Tung-Striebel (RTS) smoother, is given as follows.
\begin{theorem}[\textbf{Recursive Smoothing for GMRFs}]
\label{thm:recursive_smoother}
For the GMRF $x_{\lambda}(\theta)$, assuming $S_{\lambda}(\theta,\theta) > 0$, the smoothed estimate is the solution to the following stochastic differential equation:
\begin{align}
d \widehat{x}_{\lambda|T}(\theta) = F_{\theta} [\widehat{x}_{\lambda|T}(\theta)] d \lambda &+ 
\frac{C_{\lambda}(\theta,\theta)}{S_{\lambda}(\theta,\theta)}
[\widehat{x}_{\lambda | T}(\theta) - \widehat{x}_{\lambda | \lambda}(\theta)]\,, \nonumber\\
& \hspace{2cm} 1 \ge \lambda \ge 0 \,,
\label{eq:smoother}
\end{align}
where $\widehat{x}_{\lambda | \lambda}(\theta)$ is calculated using Theorem \ref{thm:recursive_filter} and the smoother error covariance is a solution to the partial differential equation,
\begin{align}
\frac{\partial S_{\lambda | T} (\alpha,\theta)}{\partial \lambda} &= 
\widetilde{F}_{\theta}S_{\lambda | T} (\alpha,\theta) +
\widetilde{F}_{\alpha} S_{\lambda | T} (\alpha,\theta)
+ C_{\lambda}(\alpha,\theta) \nonumber \\
&\hspace{-0.5cm}-\frac{C_{\lambda}(\alpha,\alpha)S_{\lambda}(\alpha,\theta)}{S_{\lambda}(\alpha,\alpha)}
-\frac{C_{\lambda}(\theta,\theta)S_{\lambda}(\alpha,\theta)}{S_{\lambda}(\theta,\theta)} \,,
\label{eq:smoother_cov}
\end{align}
where 
\begin{equation}
\widetilde{F}_{\beta} = F_{\beta} + C_{\lambda}(\beta,\beta)/S_{\lambda}(\beta,\beta) \,.
\end{equation}
\end{theorem}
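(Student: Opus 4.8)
The plan is to derive (\ref{eq:smoother}) and (\ref{eq:smoother_cov}) by running the telescoping recursion \emph{backward} along the surfaces $\partial T^{\lambda}$, which is the field-valued analogue of the classical Rauch--Tung--Striebel derivation, and then to identify the coefficients of the backward recursion with the quantities produced by the forward filter of Theorem~\ref{thm:recursive_filter}. The structural fact that makes this work is that, for each $\lambda$, the telescoping surface $\partial T^{\lambda}$ (defined in (\ref{eq:homotopy_to_boundary})) splits the index set into an ``outer'' region, swept by $\mathcal{Y}_{\lambda}^{-}=\sigma\{y_{\mu}(\theta):0\le\mu\le\lambda,\ \theta\in\Theta\}$, and an ``inner'' region, swept by $\mathcal{Y}_{\lambda}^{+}=\sigma\{y_{\mu}(\theta):\lambda\le\mu\le 1,\ \theta\in\Theta\}$; by the Markov property of the GMRF (Theorem~\ref{thm:prediction} together with the germ-field structure carrying the $m-1$ weak normal derivatives) and the independence of the observation noise $n_{\lambda}(\theta)$ in (\ref{eq:observations}), the fields inside and outside $\partial T^{\lambda}$, hence $\mathcal{Y}_{\lambda}^{-}$ and $\mathcal{Y}_{\lambda}^{+}$, are conditionally independent given the germ $\sigma$-algebra $\overline{\Sigma}_{x}(\partial T^{\lambda})$. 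Thus $\widehat{x}_{\lambda|T}(\theta)=E[x_{\lambda}(\theta)\mid\mathcal{Y}_{\lambda}^{-}\vee\mathcal{Y}_{\lambda}^{+}]$ can be obtained by fusing the forward filter estimate $\widehat{x}_{\lambda|\lambda}(\theta)$, which already summarizes $\mathcal{Y}_{\lambda}^{-}$ with error covariance $S_{\lambda}$, with the information contained in $\mathcal{Y}_{\lambda}^{+}$ --- the field version of $\widehat{x}_{t|T}=\widehat{x}_{t|t}+(\text{correction from future data})$.

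To obtain the differential form (\ref{eq:smoother}) I would use either of two equivalent routes. The classical two-filter route: build a backward ``information'' filter for $\mathcal{Y}_{\lambda}^{+}$ from a reverse-$\lambda$ telescoping representation of the GMRF (which exists because the surfaces $\partial T^{\lambda}$ act as ``time'' and a Gauss--Markov process has both a forward and a backward Markov model), combine it with the forward filter of Theorem~\ref{thm:recursive_filter}, and algebraically collapse the two-filter form to the RTS form using the filter Riccati equation (\ref{eq:ricatti_s}). The more direct route: postulate the form (\ref{eq:smoother}) with terminal condition $\widehat{x}_{1|T}(\theta)=\widehat{x}_{1|1}(\theta)$ at the innermost point $\lambda=1$, integrate it backward to the boundary $\lambda=0$, and \emph{verify} that the resulting process is $\sigma\{y(T)\}$-adapted and that $x_{\lambda}(\theta)-\widehat{x}_{\lambda|T}(\theta)$ is orthogonal to $\sigma\{y(T)\}$, the latter using the conditional-independence split across $\partial T^{\lambda}$ and the filter SDE (\ref{eq:filter_field}); uniqueness of the conditional expectation then forces $\widehat{x}_{\lambda|T}$ to be the smoother. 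Either way, the drift $F_{\theta}[\widehat{x}_{\lambda|T}(\theta)]$ is inherited from the telescoping operator, while the correction gain $C_{\lambda}(\theta,\theta)/S_{\lambda}(\theta,\theta)$ multiplying $\widehat{x}_{\lambda|T}(\theta)-\widehat{x}_{\lambda|\lambda}(\theta)$ is the field-valued counterpart of the classical RTS gain $G Q G^{\top}P^{-1}$: here $C_{\lambda}(\theta,\theta)=B_{\lambda}(\theta)^{2}$ is the intensity of the driving noise of (\ref{eq:telescoping_arbit}) (cf.\ (\ref{eq:b_l_unit_disc})) and $S_{\lambda}(\theta,\theta)$ is the forward filtered error variance, which is why $S_{\lambda}(\theta,\theta)>0$ is assumed.

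For the covariance PDE (\ref{eq:smoother_cov}) I would form the smoothing error $\widetilde{x}_{\lambda|T}(\theta)=x_{\lambda}(\theta)-\widehat{x}_{\lambda|T}(\theta)$ by subtracting (\ref{eq:smoother}) from (\ref{eq:telescoping_arbit}); using $\widehat{x}_{\lambda|T}-\widehat{x}_{\lambda|\lambda}=\widetilde{x}_{\lambda|\lambda}-\widetilde{x}_{\lambda|T}$, the error obeys $d\widetilde{x}_{\lambda|T}(\theta)=\widetilde{F}_{\theta}[\widetilde{x}_{\lambda|T}(\theta)]\,d\lambda-\frac{C_{\lambda}(\theta,\theta)}{S_{\lambda}(\theta,\theta)}\widetilde{x}_{\lambda|\lambda}(\theta)\,d\lambda+B_{\lambda}(\theta)\,dw_{\lambda}(\theta)$, with $\widetilde{F}_{\beta}$ the closed-loop operator $F_{\beta}+C_{\lambda}(\beta,\beta)/S_{\lambda}(\beta,\beta)$. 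Applying the Ito product rule to $\widetilde{x}_{\lambda|T}(\alpha)\widetilde{x}_{\lambda|T}(\theta)$ and taking expectations, the quadratic-variation term contributes $C_{\lambda}(\alpha,\theta)$ (since $B_{\lambda}(\alpha)B_{\lambda}(\theta)E[dw_{\lambda}(\alpha)dw_{\lambda}(\theta)]=C_{\lambda}(\alpha,\theta)\,d\lambda$ by the increment covariance of $w$ in Theorem~\ref{thm:telescoping_rec_gmrf_unit_disc}), the closed-loop drift contributes $\widetilde{F}_{\alpha}S_{\lambda|T}(\alpha,\theta)+\widetilde{F}_{\theta}S_{\lambda|T}(\alpha,\theta)$, and the cross-terms between the filtered and smoothed errors are reduced using the orthogonality of $\widetilde{x}_{\lambda|T}$ to $\sigma\{y(T)\}$ and the filter Riccati equation (\ref{eq:ricatti_s}); matching the resulting identity to (\ref{eq:smoother_cov}), with backward terminal condition $S_{1|T}(\alpha,\theta)=S_{1}(\alpha,\theta)$, is bookkeeping rather than insight.

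The step I expect to be the main obstacle is making the backward recursion rigorous in this infinite-dimensional setting: showing that $\lambda\mapsto\widehat{x}_{\lambda|T}(\theta)$ is a semimartingale with exactly the claimed decomposition, that the time-reversal of the telescoping diffusion along the hypersurfaces $\partial T^{\lambda}$ is legitimate when the ``state'' lives on a surface and $F_{\theta}$ is an integral transform rather than a matrix, and that the operator Riccati equations (\ref{eq:ricatti_s}) and (\ref{eq:smoother_cov}) are well posed with $S_{\lambda}(\theta,\theta)$ bounded away from zero so the gains are defined. Equally delicate is the conditional-independence split across $\partial T^{\lambda}$, which must be argued at the level of the germ $\sigma$-algebra $\overline{\Sigma}_{x}(\partial T^{\lambda})$ --- including the weak normal derivatives, not just the pointwise field values --- exactly as in Theorem~\ref{thm:prediction}; once that is secured, the remaining computations are the mechanical Ito and Riccati manipulations sketched above.
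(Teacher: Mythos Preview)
Your proposal is sound in spirit, but it takes a different route from the paper's proof. The paper does \emph{not} build a backward filter, invoke the conditional-independence split across $\partial T^{\lambda}$, or verify (\ref{eq:smoother}) by an a posteriori orthogonality check. Instead it proceeds purely through the forward innovations: it first writes the smoother as $\widehat{x}_{\lambda|T}(\theta)=\widehat{x}_{\lambda|\lambda}(\theta)+\int_{\lambda}^{1}\int_{\Theta}g_{\lambda,\mu}(\theta,\alpha)\,de_{\mu}(\alpha)\,d\alpha$ with $g_{\lambda,\mu}(\theta,\alpha)=\tfrac{G_{\mu}(\alpha)}{D_{\mu}(\alpha)}S_{\lambda,\mu}(\theta,\alpha)$, then derives the filtered-error SDE, its transition kernel $\widetilde{\Phi}_{\mu,\lambda}$, and the identity $S_{\lambda,\mu}(\theta,\alpha)=\int_{\Theta}\widetilde{\Phi}_{\mu,\lambda}(\alpha,\beta)S_{\lambda}(\beta,\theta)\,d\beta$; differentiating the integral representation in $\lambda$ and collapsing via a $\delta$-function substitution yields (\ref{eq:smoother}) directly. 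The covariance equation (\ref{eq:smoother_cov}) is likewise obtained by writing $S_{\lambda|T}=S_{\lambda}-\int_{\lambda}^{1}\int_{\Theta}\tfrac{G_{\mu}^{2}}{D_{\mu}^{2}}S_{\mu,\lambda}S_{\mu,\lambda}\,d\mu\,d\alpha_{1}$ and differentiating, rather than by your Ito-product-rule computation on the smoothing error.

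What each approach buys: the paper's innovations route is computationally self-contained---everything is built from the forward filter of Theorem~\ref{thm:recursive_filter} and the innovations process $e_{\lambda}(\theta)$ already shown to be Brownian (Lemma~\ref{lemma:innovation}), so no time-reversal of the telescoping diffusion and no separate backward model need be justified. Your approach is structurally cleaner and more conceptual (the germ-field conditional independence explains \emph{why} the gain is $C_{\lambda}(\theta,\theta)/S_{\lambda}(\theta,\theta)$), but, as you correctly flag, it front-loads exactly the infinite-dimensional reversibility and well-posedness issues that the innovations argument sidesteps. Your error-SDE/Ito route to (\ref{eq:smoother_cov}) is a perfectly valid alternative and arguably more transparent than the paper's differentiation of the integral formula.
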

\begin{IEEEproof}
See Appendix \ref{app:proof_smoothing}
\end{IEEEproof}

The equations in Theorem \ref{thm:recursive_smoother} are similar to the Rauch-Tung-Striebel smoothing equations for Gauss-Markov processes \cite{RauchTungStriebel1965}.  Other smoothing equations for Gauss-Markov processes can be extended to apply to GMRFs.

\section{Telescoping Representations of GMRFS: Discrete Indices}
\label{sec:tel_rep_gmrf}

We now describe the telescoping representation for GMRFs when the index set is discrete.  For simplicity, we restrict the presentation to GMRFs with order two.  Let $\{x(i,j) \in \R: (i,j) \in T_1 = [1,N]\times [1,M] \}$ be the GMRF.  Stack each row of the field and form an $NM \times 1$ vector $\bx$.  In the representation (\ref{eq:mmse}), stack the noise field $v(i,j)$ row wise  into an $NM \times 1$ vector $\bv$.
The boundary values are indexed in a clockwise manner starting at the upper leftmost node into a $2(N+1)+2(M+1) \times 1$ vector $\bx_b$ \footnote{The ordering does not matter as long as the ordering is known.}.  A matrix equivalent of (\ref{eq:mmse}) is given as
\begin{equation}
{\cal A} \bx = {\cal A}_b \bx_b + \bv \,, \label{eq:matrix}
\end{equation}
where ${\cal A}$ is an $NM \times NM$ block tridiagonal matrix with block size $N \times N$, ${\cal A}_b$ is an $NM \times 2(N+1)+2(M+1)$ sparse matrix corresponding to the interaction of the nodes in $\bx$ with the boundary nodes.  The matrices ${\cal A}$ and ${\cal A}_b$ can be evaluated from the nonrecursive equation given in (\ref{eq:mmse}).  Further, we have the following relationships,
\begin{equation}
E[\bx \bv^T] = I \text{ and } E[\bv \bv^T] = {\cal A} \label{eq:vv_A}\,.
\end{equation}
Equation (\ref{eq:matrix}) is an extension of the matrix representation given in \cite{MouraBalram1992} for the case when $\bx_b = 0$, i.e., boundary conditions are zero.  For more properties about the structure of the matrix ${\cal A}$, we refer to \cite{MouraBalram1992}.

Let $T_k = [k,N+1-k] \times [k,M+1-k]$ and let $\partial T_k$ be the boundary nodes of the index set $T_k$ ordered in a clockwise direction.  For example, $x(\partial T_0) = \bx_b$.  Define $z_k$ such that
\begin{equation}
z_k \triangleq x(\partial T_k) = x(T_k \backslash T_{k+1}) \,, \label{eq:def_z_k}
\end{equation}
where $k = 0,1,\ldots,{\left\lceil \min(M,N)/2\right\rceil}$.  Define $\tau$ such that
\begin{equation}
\tau = {\left\lceil \min(M,N)/2\right\rceil} \,.
\end{equation}
Each $z_k$ will be of variable size, and let $M^z_k$ be the size of $z_k$, i.e., $z_k$ is a vector of dimension $M^z_k \times 1$.  

As an example, consider the random vectors defined on the $5 \times 5$ lattice in Fig.~\ref{fig:tel}.  The random vector $z_0$ consists of the boundary points of the original $5\times 5$ lattice, $z_1$ is the boundary points left after removing $z_0$, and $z_2$ is the boundary point left after removing both $z_0$ and $z_1$.  The telescoping nature of $z_0$, $z_1$, and $z_2$ is clear since we start by defining $z_0$ on the boundary and telescope inwards to define subsequent random vectors.  The clockwise ordering of $z_k$ is shown by the arrows in Fig.~\ref{fig:tel}.
The telescoping recursive representation for $\bx$ is given in the following theorem.
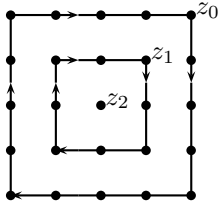
\begin{figure}
\begin{center}
\begin{pspicture}[unit=0.6cm](-3,-1.8)(3,1.6)
\psdots(-2,2)(-1,2)(0,2)(1,2)(2,2)
(-2,1)(-1,1)(0,1)(1,1)(2,1)
(-2,0)(-1,0)(0,0)(1,0)(2,0)
(-2,-1)(-1,-1)(0,-1)(1,-1)(2,-1)
(-2,-2)(-1,-2)(0,-2)(1,-2)(2,-2)
\psline{->}(-2,2)(-0.5,2)
\psline(-0.5,2)(2,2)
\psline{->}(2,2)(2,0.5)
\psline(2,0.5)(2,-2)
\psline{->}(-0.5,-2)(-2,-2)
\psline(-0.5,-2)(2,-2)
\psline{->}(-2,-2)(-2,0.5)
\psline(-2,0.5)(-2,2)
\psline{->}(-1,1)(-0.5,1)
\psline(-0.5,1)(1,1)
\psline{->}(1,1)(1,0.5)
\psline(1,0.5)(1,-1)
\psline{->}(-0.5,-1)(-1,-1)
\psline(-0.5,-1)(1,-1)
\psline{->}(-1,-1)(-1,0.5)
\psline(-1,0.5)(-1,1)
\rput[l](2.1,2.1){$z_0$}
\rput[l](1.1,1.1){$z_1$}
\rput[l](0.1,0.1){$z_2$}
\end{pspicture}
\caption{Telescoping recursions}
\label{fig:tel}
\end{center}
\end{figure}

\begin{theorem}[Telescoping Representation for GMRFs]
\label{thm:telescoping_discrete}
For a GMRF $x(i,j)$, the process $z_k$ defined in (\ref{eq:def_z_k}) is a Gauss-Markov process and thus admits a recursive representation
\begin{equation} 
z_k = F_k z_{k-1} + w_k \,, k = 1,2,\ldots, \tau \,, z_0 = \bx_b \label{eq:z_k}
\end{equation}
where $F_k$ is an $M^z_k \times M^z_{k-1}$ matrix and $w_k$ is white Gaussian noise independent of $z_{k-1}$ such that
\begin{align}
F_k &= E[z_k z_{k-1}^T] \left( E\left[z_{k-1} z_{k-1}^T\right]\right)^{-1} \label{eq:f_k_discrete} \\
Q_k &= E[w_k w_k^T] = E[z_k z_k^T] - F_k E[z_{k-1} z_k^T] \label{eq:q_k_discrete}\,.
\end{align}
\end{theorem}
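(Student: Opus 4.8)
The plan is to first show that the sequence $z_0, z_1, \ldots, z_\tau$ is a vector Gauss-Markov process in the index $k$, and then to read off the recursion (\ref{eq:z_k}) and the formulas (\ref{eq:f_k_discrete})--(\ref{eq:q_k_discrete}) from the classical innovations construction for jointly Gaussian vectors.

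\emph{Markov property.} I would derive this from the global Markov property, Theorem~\ref{thm:global_markov_property}. Fix $k \ge 2$ and consider the three disjoint node sets $A = T_k$, $C = \partial T_{k-1}$, and $B = T_0 \setminus T_{k-1}$; they cover $T_0$, and $A$, $B$ are non-empty. The crucial geometric fact is that $C$ \emph{separates} $A$ from $B$: every node of $A = T_k$ has ``depth'' (distance to $\partial T_0$) at least $k$, every node of $B$ has depth at most $k-2$, and since a step along an edge of the second-order neighborhood ${\cal N}_2$ changes each coordinate by at most one, and hence the depth by at most one, any path from a depth-$\ge k$ node to a depth-$\le k-2$ node must pass through a node of depth exactly $k-1$, i.e., a node of $\partial T_{k-1}$. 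Theorem~\ref{thm:global_markov_property} then yields $x(A) \perp x(B) \mid x(C)$. Since $z_k = x(\partial T_k)$ is a sub-collection of $x(T_k) = x(A)$, the variables $z_0, \ldots, z_{k-2}$ are sub-collections of $x(B)$ (as $\partial T_j \subseteq T_0\setminus T_{k-1}$ for $j \le k-2$), and $z_{k-1} = x(\partial T_{k-1}) = x(C)$ by (\ref{eq:def_z_k}), this gives $z_k \perp (z_0,\ldots,z_{k-2}) \mid z_{k-1}$. Because $x(i,j)$ is a zero-mean Gaussian field, $(z_0,\ldots,z_\tau)$ is jointly Gaussian, so $\{z_k\}$ is a Gauss-Markov process (for $k=1$ there is nothing to prove beyond the definition of the conditional mean given $z_0$).

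\emph{Recursion, whiteness, and $Q_k$.} Being Gauss-Markov, $E[z_k \mid z_{k-1},\ldots,z_0] = E[z_k \mid z_{k-1}]$, and for zero-mean jointly Gaussian vectors this is the linear least-squares estimate $F_k z_{k-1}$ with $F_k = E[z_k z_{k-1}^T]\,(E[z_{k-1}z_{k-1}^T])^{-1}$ as in (\ref{eq:f_k_discrete}) (the inverse is replaced by a pseudoinverse should $E[z_{k-1}z_{k-1}^T]$ be singular; for a nondegenerate GMRF it is invertible). Defining $w_k \triangleq z_k - F_k z_{k-1}$ gives (\ref{eq:z_k}), with $z_0 = \bx_b$ by (\ref{eq:def_z_k}). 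By the orthogonality principle, $w_k$ is zero mean Gaussian and uncorrelated with the linear span of $z_0,\ldots,z_{k-1}$; by joint Gaussianity it is therefore independent of $z_{k-1}$. Whiteness follows at once: for $l<k$, $w_l$ lies in the span of $z_0,\ldots,z_l$, which is orthogonal to $w_k$, so $E[w_k w_l^T]=0$. Finally, expanding $Q_k = E[(z_k - F_k z_{k-1})(z_k - F_k z_{k-1})^T]$ into four terms and using $F_k E[z_{k-1}z_{k-1}^T] = E[z_k z_{k-1}^T]$ to cancel the two cross terms against the quadratic term yields $Q_k = E[z_k z_k^T] - F_k E[z_{k-1}z_k^T]$, i.e., (\ref{eq:q_k_discrete}).

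The step that needs the most care is the separation claim: verifying that a width-one ring $\partial T_{k-1}$ genuinely disconnects the interior rectangle $T_k$ from the exterior $T_0\setminus T_{k-1}$, corners included, which reduces to the elementary observation that the depth function changes by at most one along any ${\cal N}_2$-edge. This is precisely where the restriction to order-two (equivalently, Chebyshev-distance-one) neighborhoods enters: a longer-range neighborhood structure could connect two nodes across a width-one ring, and then one would have to enlarge the state $z_k$ to several consecutive rings for the Markov property, and hence the recursion, to hold.
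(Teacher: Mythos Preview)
Your proposal is correct and follows exactly the paper's approach: the paper's proof consists of two sentences, invoking the global Markov property (Theorem~\ref{thm:global_markov_property}) for the Markovianity of $\{z_k\}$ and then citing ``standard theory of state-space models'' for the recursion and the formulas for $F_k$ and $Q_k$. You have simply unpacked both of those appeals --- the separation argument via the depth function for the first, and the innovations/orthogonality computation for the second --- with more care than the paper itself supplies, including the useful remark that the width-one ring suffices precisely because the neighborhood is ${\cal N}_2$.
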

\begin{proof}
The fact that $z_k$ is a Markov process follows from the global Markov property outlined in Theorem \ref{thm:global_markov_property}.  The recursive representation follows from standard theory of state-space models \cite{KailathSayed}.
\end{proof}

Both (\ref{eq:z_k}) and the continuous index telescoping representation are similar since the recursion initiates at the boundary and telescopes inwards.  For the continuous indices, the recursions were not unique, whereas for the discrete index case, the recursions are unique.  We outline a fast algorithm for computing $F_k$ and $Q_k$ that does not require knowledge of the covariance of $\bx$, just knowledge of the matrices ${\cal A}$ and ${\cal A}_b$ in (\ref{eq:matrix}).

Define the $NM \times 1$ vector $\bz$ such that (in Matlab$^\copyright$ notation)
\begin{equation}
\bz = \left[z_1 ; z_2 ; \ldots ; z_{\tau} \right] \,.
\end{equation}
The random vector $\bz$ is a permutation of the elements in $\bx$ such that
\begin{equation}
\bz = P \bx \,,
\end{equation}
where $P$ is a permutation matrix, which we know is orthogonal.  We can now write (\ref{eq:matrix}) in terms of $\bz$ by writing $\bx = P^T P \bx = P^T \bz$:
\begin{equation}
{\cal A}P^T \bz = {\cal A}_b z_0 + \bv \,. \label{eq:int_z_1}
\end{equation}
Multiplying both sides of (\ref{eq:int_z_1}) by $P$, we have
\begin{equation}
P {\cal A}P^T \bz = P{\cal A}_b z_0 + P\bv \,. \label{eq:int_z_2}
\end{equation}
Since $E[\bv \bv^T] = {\cal A}$, we have $E\left[(P\bv)(P\bv)^T\right] = P {\cal A} P^T$.  This suggests that (\ref{eq:int_z_2}) is a matrix based representation for the Gauss-Markov process $z_k$.  Further, because of the form ${\cal A}$ and ${\cal A}_b$, $P{\cal A} P^T$ and $P{\cal A}_b$ will have the form:

\bigskip

\noindent
$P{\cal A}P^T$
{\small
\begin{align}
&= \begin{bmatrix}
{\cal M}^0_1 & -{\cal M}^+_{1} & 0 & 0 & 0 \cdots & 0 \\
-{\cal M}^-_2 & {\cal M}^0_2 & {\cal M}^+_2 & 0 & \cdots & 0 \\
\vdots & \vdots & \vdots & \vdots & \vdots & \vdots \\
0 & \cdots & 0 & -{\cal M}^-_{{\cal K}-1} & {\cal M}^0_{{\cal K}-1} & -{\cal M}^+_{{\cal K}-1} \\
0 & \cdots & \cdots & 0 & -{\cal M}^-_{{\cal K}} & {\cal M}^0_{{\cal K}} \,.
\end{bmatrix} \label{eq:papt}\\
P {\cal A}_b &= [ {\cal M}^-_{1} \, ; \,0 ; \cdots \,; 0] \,,
\end{align}
}
where ${\cal M}_k^-$ is an $M^z_{k} \times M^z_{k-1}$ matrix, ${\cal M}_k^0$ is an $M^z_{k} \times M^z_{k}$ matrix, and ${\cal M}_k^+$ is an $M^z_{k} \times M^z_{k+1}$ matrix.  From (\ref{eq:vv_A}), $P {\cal A} P^T$ is positive and symmetric, and thus 
${\cal M}^+_{k} = \left({\cal M}^-_{k+1}\right)^T$.  To find the telescoping representation using (\ref{eq:int_z_2}), we find the Cholesky factors for $P{\cal A} P^T$ such that
\begin{align}
P{\cal A} P^T &= {\cal L}^T {\cal L} \label{eq:cholesky_p} \\
{\cal L} &= \begin{bmatrix} 
{\cal L}_{1} & 0 & 0 & \cdot & \cdot & 0 \\  
-{\cal P}_{2} & {\cal L}_{2} & 0 & 0 & \cdot & 0 \\
\cdot & \cdot & \cdot & \cdot & \cdot & \cdot \\
\cdot & \cdot & \cdot & \cdot & \cdot & \cdot \\
0 & \cdot & 0 & -{\cal P}_{\tau-1} & {\cal L}_{\tau-1} & 0 \\
0 & \cdot & \cdot & 0 & -{\cal P}_{\tau} & {\cal L}_{\tau}
\end{bmatrix} \,, \label{eq:chol_2m}
\end{align}
where the blocks ${\cal L}_k$ are $M^z_k \times M^z_k$ lower triangular matrices, and the blocks ${\cal P}_k$ are $M^z_k \times M^z_{k-1}$ matrices.  Substituting (\ref{eq:cholesky_p}) in (\ref{eq:int_z_2}) and inverting ${\cal L}^T$, we have
\begin{equation}
{\cal L} \bz = {\cal L}^{-T} P {\cal A}_b z_0 + {\cal L}^{-T} P \vec{v} \,.
\label{eq:recursive_chol}
\end{equation}
Notice that the noise is now white Gaussian since
\[ E\left[ {\cal L}^{-T} P \bv \bv^T P^T {\cal L}^{-1}\right] =
{\cal L}^{-T} P {\cal A} P^T {\cal L}^{-1} = I \,. \]
If we let ${\cal P}_1 = {\cal M}^{-}_{1}$, we can rewrite (\ref{eq:recursive_chol}) in recursive manner as
\begin{equation}
z_k = {\cal L}^{-1}_k {\cal P}_k z_{k-1} + w_k \,, \; k = 1,2,\ldots,\tau \,,
\end{equation}
where $F_k = {\cal L}^{-1}_k {\cal P}_k$ and $Q_k = {\cal L}_k^{-1} {\cal L}_k^{-T}$.  A recursive algorithm for calculating $F_k$ and $Q_k$, which follows from the calculation of the Cholesky factors, is given as follows \cite{MouraBalram1992}:

Initialization: $Q_{\tau} = ({\cal M}_{\tau}^0)^{-1}$, $F_{\tau} = Q_{\tau} {\cal M}_{\tau}^{-}$

For $k = \tau-1, \tau - 2, \ldots, 1$

\hspace{1cm} $Q_k^{-1} = {\cal M}^0_k - {\cal M}^+_k F_{k+1}$

\hspace{1cm} $F_k = Q_k {\cal M}^-_k$

end

\emph{Remark:}  The telescoping representations we derived shows the causal structure of Gauss-Markov random fields indexed over both continuous and discrete domains.  Our main result shows the existence of a recursive representation for GMRFs on telescoping surfaces that initiate at the boundary of the field and recurse inwards towards the center of the field.  Just like we derived estimation equations for GMRFs with continuous indices, we can use the telescoping representation to derive recursive estimation equations for GMRFs with discrete indices.  The numerical complexity of estimation will depend on the size of the state with maximum size, which for the GMRF is the perimeter of the field captured in the state $z_0$.  For example, the telescoping representation of a GMRF defined on a $\sqrt{N} \times \sqrt{N}$ lattice with non-zero boundary conditions will have a state of maximum size of order $O(\sqrt{N})$.  Notice that for both continuous and discrete indexed GMRFs, the telescoping representation is not local, \emph{i.e.}, each point in the GMRF does not depend on its neighborhood, but depends on the field values defined on a neighboring telescoping surface (or $F_k$ is not necessarily sparse).  Direct or straightforward implementation of the Kalman filter requires $O((\sqrt{N})^{3/2})$ due to a matrix inversion step.  However, using fast algorithms and appropriate approximations, fast implementation of Kalman filters, see \cite{SayedKailathLevAri1994} for an example, can lead to $O((\sqrt{N})^2)$, i.e., $O(N)$.

Now suppose the observations of the GMRF are given by
$
\by = H \bx + \bv\,,
$
where $\bx \in \R^n$ is the GMRF, $H$ is a diagonal matrix, and $\bv$ is white Gaussian noise vector such that $\bv \sim {\cal N}(0,R)$, where $R$ is a diagonal matrix. The mmse $\widehat{\bx}$ is a solution to the linear system,
\begin{equation}
[\Sigma^{-1} + H^{-1} R^{-1} H] \widehat{x} = H^T R^{-1} \by \,.
\label{eq:estimation_mmse}
\end{equation}
Since $\bx$ is a GMRF, it follows from \cite{SpeedKiiveri1986} that $\Sigma^{-1}$ is sparse, where the non-zero entries in $\Sigma^{-1}$ correspond to the edges in the graph\footnote{We note that the graphical models considered in this paper are a mixture of undirected and directed graphs, where the boundary values connect to nodes in a directed manner.  These graphs are examples of chain graphs, see \cite{Frydenberg1990}, and the underlying undirected graph can be recovered by moralizing this graph, i.e., converting directed edges into undirected edges and connecting edges between all boundary nodes.} associated with $\bx$.  In \cite{VatsMouraGraphJournal2010}, we use the telescoping representation to derive an iterative algorithms for solving (\ref{eq:estimation_mmse}) using the telescoping representation\footnote{The work in \cite{VatsMouraGraphJournal2010} applies to arbitrary graphical models and the telescoping representations are referred to as block-tree graphs.}.  Experimental results in \cite{VatsMouraGraphJournal2010} suggest that the numerical complexity of the iterative algorithm is $O(N)$, although the exact complexity may vary depending on the graphical model.  The use of the telescoping representation in deriving the iterative algorithm in \cite{VatsMouraGraphJournal2010}, is to identify computationally tractable local structures using the non-local telescoping representation.

\section{Summary}
\label{sec:summary}

We derived a recursive representation for noncausal Gauss-Markov random fields (GMRFs) indexed over regions in $\R^d$ or $\Z^d$, $d \ge 2$.  We called the recursive representation \emph{telescoping} since it initiated at the boundary of the field and telescoped inwards.  Although the equations for the continuous index case were derived assuming $x(t)$ is scalar, we can easily generalize the results for $x(t) \in \R^n$, $n \ge 1$.  Our recursions are on hypersurfaces in $\R^d$, which we call telescoping surfaces.  For fields indexed over $\R^d$, we saw that the set of telescoping surfaces is not unique and can be represented using a homotopy from the boundary of the field to a point within the field (not on the boundary).  Using the telescoping representations, we were able to recover recursive algorithms for recursive filtering and smoothing.  An extension of these results to random fields with two boundaries is derived in \cite{VatsMouraCDC2010}.  Besides the RTS smoother that we derived, other recursive smoothers can be derived using the results in \cite{KailathSayed,BadawiLP1979,FerrantePicci2000}.  We presented results for deriving recursive representations for GMRFs on lattices.  An example of applying this to image enhancement of noisy images is shown in \cite{VatsMoura2010icassp}.  Extensions of the telescoping representation to arbitrary graphical models are presented in \cite{VatsMouraGraphJournal2010}.  Using the results in \cite{VatsMouraGraphJournal2010}, we can derive computationally tractable estimation algorithms.

We note although the results derived in this paper assumed Gaussianity, recursive representations on telescoping surfaces can be derived for general non-Gaussian Markov random fields.  In this case, the representation will no longer be given by linear stochastic differential equation, but instead be transition probabilities.

\appendices

\section{Computing $b_j(s,r)$ in Theorem~\ref{thm:prediction}}
\label{app:example_gmrf}

We show how the coefficients $b_j(s,r)$ are computed for a GMRF $x(t)$, $t \in T \subset \R^d$.  Let $G_{-}$ and $G_{+}$ be complementary sets in $T \subset \R^d$ as shown in Fig.~\ref{fig:notation}.  Following \cite{Ldpitt1971}, define a function $h_s(t)$ such that
\begin{equation}
h_s(t) = \left\{ 
\begin{array}{cc}
R(s,t) & t \in G_{-} \cup \partial G \\
h_s(t) : {\cal L}_t h_s(t) = 0\,, & \\
h_s(\partial G) = R(s,\partial G) & t \in G_{+} \\
h_s(t) \in H_0^{m}(T) & 
\end{array}
\right. \,, \label{eq:a1_a}
\end{equation}
where ${\cal L}_r$ is defined in (\ref{eq:l_t}) and 
$H_0^{m}(T)$ is the completion of $C_0^{\infty}(T)$, the set of infinitely differentiable function with compact support in $T$, under the norm Sobolov norm order $m$.  From (\ref{eq:lr=d}), it is clear that $h_s(r) \ne R(s,r)$ when $r \in G_{+}$.  Let $u(r) \in C_0^{\infty}(T)$ and consider the following steps for computing $b_j(s,r)$:
\bigskip

\noindent
$\displaystyle{\sum_{|\alpha|,|\beta|\le m} \int_T D^{\alpha} u(t) a_{\alpha\beta}(t)D^{\beta} h_s(t) dt} $
\begin{align}
&= \sum_{|\alpha|,|\beta| \le m} \int_{G_{-}}
D^{\alpha} u(t) a_{\alpha,\beta}(t) D^{\beta} R(s,t) dr \nonumber \\
&\hspace{1cm}+ \sum_{|\alpha|,|\beta| \le m} \int_{G_{+}}
D^{\alpha} u(t) a_{\alpha,\beta}(t) D^{\beta} h_s(t) dr \label{eq:a1_2}\\
&= \sum_{j=0}^{m-1} \int_{\partial G}  b_{j}(s,r) \frac{\partial^j}{\partial n^j} u(r) d l + \int_{G_{-}} u(r) L_{t} R(s,t) dt \nonumber\\
&\hspace{1cm} + \int_{G_{+}} u(t) L_{t} h_s(t) dt \label{eq:a1_3}\\
&= \sum_{j=0}^{m-1} \int_{\partial G}  b_{j}(s,r) \frac{\partial^j}{\partial n^j} u(r) d l \,. \label{eq:a1_4}
\end{align}

To get (\ref{eq:a1_2}), we split the integral integral on the left hand side over $G_{-}$ and $G_{+}$.  In going from (\ref{eq:a1_2}) to (\ref{eq:a1_3}), we use integration by parts and the fact that $u(r) \in C_0^{\infty}(T)$.  We get (\ref{eq:a1_4}) using (\ref{eq:lr=d}) and (\ref{eq:a1_a}).  Thus, to compute $b_j(s,r)$, we first need to find $h_s(r)$ using (\ref{eq:a1_a}) and then use the steps in (\ref{eq:a1_2})$-$(\ref{eq:a1_4}).  We now present an example where we compute $b_j(s,r)$ for a Gauss-Markov process.

\textbf{Example:}
Let $x(t) \in \R$ be the Brownian bridge on $T = [0,1]$ such that
\begin{equation}
x(t) = w(t) - tw(1) \label{eq:bb}\,,
\end{equation}
where $w(t)$ is a standard Brownian motion.  Since covariance of $w(t)$ is $\min(t,s)$, the covariance of $x(t)$ is given by
\begin{equation}
R(t,s) = \left\{ \begin{array}{cc}
s(1-t) & t > s\\
t(1-s) & t < s
\end{array} \right. \,.
\end{equation}
Using the theory of reciprocal processes, see \cite{Krener1991,Levy2008}, it can be shown that the operator ${\cal L}_t$ is
\begin{equation}
L_t R(t,s) = -\frac{\partial^2 R(t,s)}{\partial t^2} = \delta(t-s) \,.
\end{equation}
Thus, the inner product associated with $R(t,s)$ is given by
\begin{equation}
<u,v> = \left\langle D u, Dv\right\rangle_T = 
\int_0^1 \frac{\partial}{\partial s} u(s)
\frac{\partial}{\partial s} v(s) ds \,.
\end{equation}
Following (\ref{eq:a1_a}), for $r < 1$ and $s \in [r,1]$, 
$h_s(t) = R(s,t)$ for $t \in [0,r]$ and
\begin{align}
-\frac{\partial^2 }{\partial t^2} h_s(t) &= 0 \,,\quad  t > r \\
h_s(r) &= r(1-s) \,,\quad h_s(1) = 0 \,.
\end{align}
We can trivially show that $h_s(t)$ is given by
\begin{equation}
h_s(t) = \frac{r(1-s)}{1-r} (1-t) \,, \quad t \ge r \,.
\end{equation}
We now follow the steps in (\ref{eq:a1_2})-(\ref{eq:a1_4}):
\begin{align*}
\int_0^1 \frac{\partial}{\partial t} u(t) \frac{\partial}{\partial t} h_s(t) dt
&= \int_0^r \frac{\partial}{\partial t} u(t) \frac{\partial}{\partial t} R(s,t) dt \\
&\hspace{1cm}+\int_r^1 \frac{\partial}{\partial t} u(t) \frac{\partial}{\partial t} h_s(t) dt \\
&= \left.u(t)\frac{\partial}{\partial t} R(s,t)\right|_0^r 
+  \left.u(t)\frac{\partial}{\partial t} h_s(t)\right|_r^1 \\
&= \left(\frac{\partial}{\partial t} R(s,r) - \frac{\partial}{\partial r} h_s(r)\right) u(r) \,. \\
&= \frac{1-s}{1-r} u(r) \,.
\end{align*}
Using Theorem \ref{thm:prediction}, we can compute $E[x(s)| \sigma\{x(t):0\le t \le r\}]$ as
\begin{equation}
E[x(s)| \sigma\{x(t):0\le t \le r\}] = E[x(s) | x(r)] = 
\left(\frac{1-s}{1-r}\right) x(r) \,. \label{eq:a1_5_e}
\end{equation}
We note that since $x(t)$ is a Gauss-Markov process, it is known that, \cite{Wong1985},
\begin{equation} 
E[x(s) | x(r)] = R(s,r) R^{-1}(r,r) x(r) \,. \label{eq:a1_5}
\end{equation}
Using the expression for $R(s,r)$, we can easily verify that (\ref{eq:a1_5_e}) and (\ref{eq:a1_5}) are equivalent.

\section{Proof of Theorem~\ref{thm:telescoping_rec_gmrf_unit_disc}: Telescoping Representation}
Let $\widehat{x}_{\lambda+d\lambda|\lambda}(\theta)$ denote the conditional expectation of $x_{\lambda+d\lambda}(\theta)$ given the $\sigma$-algebra generated by the field $\{x_{\mu}(\alpha):(\mu,\alpha) \in [0,\lambda]\times \Theta \}$.  From Theorem \ref{thm:prediction}, we have
\begin{equation}
\widehat{x}_{\lambda+d\lambda|\lambda}(\theta) = 
\sum_{j=0}^{m-1} \int_{\Theta}
b_j((\lambda+d\lambda,\theta),(\lambda,\alpha))
\frac{d^j}{d n^j} x_{\lambda}(\alpha) d\alpha \,.
\label{eq:a_1}
\end{equation}
It is clear that 
$x_{\lambda}(\theta) = \widehat{x}_{\lambda|\lambda}(\theta)$.  Taking the limit in (\ref{eq:a_1}) as $d \lambda \rightarrow 0$, we have
\begin{equation}
x_{\lambda}(\theta) = \sum_{j=0}^{m-1} \int_{\Theta}
b_j((\lambda,\theta),(\lambda,\alpha))
\frac{d^j}{d n^j} x_{\lambda}(\alpha) d\alpha \,.
\label{eq:a_2}
\end{equation}
Define the error as $\xi_{\lambda+d\lambda}(\theta)$ such that
\begin{equation}
\xi_{\lambda+d\lambda}(\theta) = x_{\lambda+d\lambda}(\theta) - \widehat{x}_{\lambda+d\lambda}(\theta) \,. \label{eq:a_3}
\end{equation}
Adding and subtracting $x_{\lambda}(\theta)$ in (\ref{eq:a_3}) and using (\ref{eq:a_2}), we have
\begin{align}
x_{\lambda+d\lambda}(\theta) - x_{\lambda}(\theta)
&= \sum_{j=0}^{m-1} \int_{\Theta}
[b_j((\lambda+d\lambda,\theta),(\lambda,\alpha)) \nonumber \\
&\hspace{-2cm} - b_j((\lambda,\theta),(\lambda,\alpha))]
\frac{d^j}{d n^j} x_{\lambda}(\alpha) d\alpha + \xi_{\lambda+d\lambda}(\theta) \,. \label{eq:aa_3}
\end{align}
Assuming $d\lambda$ is small, we can write $b_j((\lambda+d\lambda,\theta),(\lambda,\alpha))
- b_j((\lambda,\theta),(\lambda,\alpha))$ as

\bigskip

\noindent
$b_j((\lambda+d\lambda,\theta),(\lambda,\alpha))
- b_j((\lambda,\theta),(\lambda,\alpha))$
\begin{align}
& = 
\frac{b_j((\lambda+d\lambda,\theta),(\lambda,\alpha))
- b_j((\lambda,\theta),(\lambda,\alpha))}{d\lambda} d\lambda \label{eq:a_4}\\
&= \left(\lim_{\mu \rightarrow \lambda^{+}} \frac{\partial}{\partial \mu}
b_j((\mu,\theta),(\lambda,\alpha))\right) d\lambda \,, \label{eq:a_5}
\end{align}
where in going from (\ref{eq:a_4}) to (\ref{eq:a_5}), we use the assumption that $d\lambda$ is close to zero.  Writing $dx_{\lambda}(\theta) = x_{\lambda+d\lambda}(\theta) - x_{\lambda}(\theta)$ and substituting (\ref{eq:a_5}) in (\ref{eq:aa_3}), we get
\begin{equation}
dx_{\lambda}(\theta) = F_{\theta} [x_{\lambda}(\theta)] d\lambda + \xi_{\lambda+d\lambda}(\theta) \,,
\end{equation}
where $F_{\theta}$ is given in (\ref{eq:f_theta_unit_disc}).  To get the final form of the telescoping representation, we need to characterize $\xi_{\lambda+d\lambda}(\theta)$.  To do this, we write $\xi_{\lambda+d\lambda}(\theta)$ as
\begin{equation}
\xi_{\lambda+d\lambda}(\theta) = 
B_{\lambda}(\theta) d w_{\lambda}(\theta) = B_{\lambda}(\theta)[w_{\lambda+d\lambda}(\theta) - w_{\lambda}(\theta)] \,. \label{eq:xi_def_w}
\end{equation}
We now prove the properties of $w_{\lambda}(\theta)$:
\begin{enumerate}[i)]
\item Since $\xi_{\lambda}(\theta) = x_{\lambda}(\theta) - \widehat{x}_{\lambda|\lambda}(\theta)$ and $\widehat{x}_{\lambda|\lambda}(\theta) = x_{\lambda}(\theta)$ by definition, we have
\[ \lim_{d\lambda \rightarrow 0} \xi_{\lambda+d\lambda}(\theta) = \xi_{\lambda}(\theta) = 0\,, a.s. \,.\]
Thus, using (\ref{eq:xi_def_w}), since $B_{\lambda}(\theta) \ne 0$, we have
\begin{align}
\lim_{d \lambda \rightarrow 0} w_{\lambda+d\lambda}(\theta) - w_{\lambda}(\theta) &= 0 , a.s. \\
\lim_{d \lambda \rightarrow 0} w_{\lambda+d\lambda}(\theta) &= 
w_{\lambda}(\theta), a.s. \label{eq:k_1}
\end{align}
Equation (\ref{eq:k_1}) shows that $w_{\lambda}(\theta)$ is almost surely continuous in $\lambda$.

\item Since the driving noise at the boundary of the field can be captured in the boundary conditions, without loss in generality, we can assume that $w_0(\theta) = 0$ for all $\theta \in \Theta$.  

\item For $0 \le \lambda_1 \le \lambda_1' \le \lambda_2 \le \lambda_2'$ and $\theta_1,\theta_2 \in \Theta$, let $d\lambda_1 = \lambda_1' - \lambda_1$ and $d\lambda_2 = \lambda_2' - \lambda_2$. Consider the covariance

\bigskip

\noindent
$E[\xi_{\lambda_1+d\lambda_1}(\theta_1)
\xi_{\lambda_2+d\lambda_2}(\theta_2]$
\begin{align}
&=E\left[\left(x_{\lambda_1+d\lambda_1}(\theta_1) - 
\widehat{x}_{\lambda_1+d\lambda_1}(\theta_1)\right)
\xi_{\lambda_2+d\lambda_2}(\theta_2)\right] \label{eq:t_1}\\
&= E[x_{\lambda_1+d\lambda_1}(\theta_1) \xi_{\lambda_2+d\lambda_2}(\theta_2)]  \label{eq:t_2} \\
&\hspace{2cm} -E[\widehat{x}_{\lambda_1 + d\lambda_1}(\theta_1) \xi_{\lambda_2+d\lambda_2}(\theta_2)] \nonumber \\
&= 0 \,,
\end{align}
where to go from (\ref{eq:t_1}) to (\ref{eq:t_2}), we use the orthogonality of the error.  Using the definition of $\xi_{\lambda+d\lambda}(\theta)$ in (\ref{eq:xi_def_w}), we have that $w_{\lambda_1'}(\theta_1) - w_{\lambda_1}(\theta_1)$ and $w_{\lambda_2'}(\theta_2) - w_{\lambda_2}(\theta_2)$ are independent random variables.

\item We now compute $E[\xi_{\lambda+d\lambda}(\theta_1)
\xi_{\lambda+d\lambda}(\theta_2)]$:

\noindent
$E[\xi_{\lambda+d\lambda}(\theta_1)\xi_{\lambda+d\lambda}(\theta_2)]$
\begin{align}
&=E[\xi_{\lambda+d\lambda}(\theta_1)(x_{\lambda+d\lambda}(\theta_2) - \widehat{x}_{\lambda+d\lambda|\lambda}(\theta_2))] \nonumber \\
&= E[(x_{\lambda+d\lambda}(\theta_1) - \widehat{x}_{\lambda+d\lambda|\lambda}(\theta_1))x_{\lambda+d\lambda}(\theta_2)] \nonumber \\
&= R_{\lambda+d\lambda,\lambda+d\lambda}(\theta_1,\theta_2)\nonumber\\
&-\sum_{j=0}^{m-1} \!\!\int_{\Theta}\!\! b_j((\lambda+d\lambda,\theta_1),(\lambda,\alpha))
\frac{\partial^j}{\partial n^j} R_{\lambda,\lambda+d\lambda}(\alpha,\theta_2) d\alpha \nonumber \\
&= R_{\lambda+d\lambda,\lambda+d\lambda}(\theta_1,\theta_2)
- R_{\lambda,\lambda+d\lambda}(\theta_1,\theta_2) \nonumber \\
&\hspace{1cm} + R_{\lambda,\lambda+d\lambda}(\theta_1,\theta_2) \label{eq:last_1}\\
&- \sum_{j=0}^{m-1} \!\!\int_{\Theta} \!\! b_j((\lambda+d\lambda,\theta_1),(\lambda,\alpha))
\frac{\partial^j}{\partial n^j}R_{\lambda,\lambda+d\lambda}(\alpha,\theta_2) d\alpha \,.
\nonumber 
\end{align} 
Using (\ref{eq:a_2}), we have

\noindent
$R_{\lambda,\lambda+d\lambda}(\theta_1,\theta_2)$
\begin{align}
&=
E[x_{\lambda}(\theta_1) x_{\lambda+d\lambda}(\theta_2)] \\
&= \sum_{j=0}^{m-1} \int_{\Theta}
b_j((\lambda,\theta_1),(\lambda,\alpha))
\frac{\partial^j}{\partial n^j}R_{\lambda,\lambda+d\lambda}(\alpha,\theta_2) d\alpha \,.
\label{eq:ff_1}
\end{align}
Substituting (\ref{eq:ff_1}) in (\ref{eq:last_1}), we have
\smallskip

{\small 
\noindent
$E[\xi_{\lambda+d\lambda}(\theta_1)\xi_{\lambda+d\lambda}(\theta_2)]$
\begin{align}
&= R_{\lambda+d\lambda,\lambda+d\lambda}(\theta_1,\theta_2)
- R_{\lambda,\lambda+d\lambda}(\theta_1,\theta_2)  \\
&\hspace{0.6cm} - \sum_{j=0}^{m-1} \int_{\Theta} [b_j((\lambda+d\lambda,\theta_1),(\lambda,\alpha)) \\
&\hspace{0.6cm} - b_j((\lambda,\theta_1),(\lambda,\alpha))]
\frac{\partial^j}{\partial n^j}R_{\lambda,\lambda+d\lambda}(\alpha,\theta_2) d\alpha \nonumber \\
&= \left[\frac{R_{\lambda+d\lambda,\lambda+d\lambda}(\theta_1,\theta_2)
- R_{\lambda,\lambda+d\lambda}(\theta_1,\theta_2)}{d\lambda}\right] d\lambda \nonumber \\
&- \left[\sum_{j=0}^m \int_{\Theta} \left(\frac{b_j((\lambda+d\lambda,\theta_1),(\lambda,\alpha)) - 
b_j((\lambda,\theta_1),(\lambda,\alpha))}{d\lambda}\right) \right. \nonumber \\
& \hspace{3.3cm} \left.\frac{\partial^j}{\partial n^j}R_{\lambda,\lambda+d\lambda}(\alpha,\theta_2) d\alpha \right]
d\lambda \nonumber \\
&= \left(\lim_{\mu \rightarrow \lambda^-} \frac{\partial}{\partial \mu} R_{\mu,\lambda}(\theta_1,\theta_2) \nonumber \right.\\
&\left.- \lim_{\mu \rightarrow \lambda^+} \frac{\partial}{\partial \mu}
\sum_{j=0}^{m-1} \int_{\Theta} b^j_{\mu,\lambda}(\theta_1,\alpha)
\frac{\partial^j}{\partial n^j} R_{\lambda,\lambda}(\alpha,\theta_2) d\alpha  \right) d \lambda \nonumber \\
&= \left(\lim_{\mu \rightarrow \lambda^-} \frac{\partial}{\partial \mu} R_{\mu,\lambda}(\theta_1,\theta_2) - \lim_{\mu \rightarrow \lambda^+} \frac{\partial}{\partial \mu} R_{\mu,\lambda}(\theta_1,\theta_2) \right) d \lambda \nonumber \\
&= C_{\lambda}(\theta_1,\theta_2) d\lambda \,. \label{eq:c_l_theta_theta}
\end{align}
}
Thus, for $d\lambda$ small, we have
\begin{align}
E\left[(w_{\lambda+d\lambda}(\theta_1) - w_{\lambda}(\theta_1))
(w_{\lambda+d\lambda}(\theta_2) - w_{\lambda}(\theta_2))\right] &\nonumber \\ 
&\hspace{-4cm}=\frac{C_{\lambda}(\theta_1,\theta_2)}{B_{\lambda}(\theta_1)B_{\lambda}(\theta_2)} d\lambda \,. \label{eq:ttt_1}
\end{align}

Since $w_0(\theta) = 0$, we can use (\ref{eq:ttt_1}) to compute $E[w_{\lambda}(\theta_1) w_{\lambda}(\theta_2)]$ as follows:

\bigskip
$E[(w_{\lambda}(\theta_1) - w_0(\theta_1))
(w_{\lambda}(\theta_2) - w_0(\theta_2))]$
\begin{align}
&= \lim_{N\rightarrow \infty}E\left[ \sum_{k = 0}^{N+1} (w_{\gamma_k}(\theta_1) - w_{\gamma_{k-1}}(\theta_2)) \right. \nonumber \\
&\left.\sum_{k = 0}^{N+1} (w_{\gamma_k}(\theta_2) - w_{\gamma_{k-1}}(\theta_2))
\right], \gamma_0 = \lambda, \gamma_{N+1} = 0 \hspace{-0.05cm}\label{eq:dd_1}\\
&= \lim_{N\rightarrow \infty} E\left[\sum_{k=0}^{N+1}
(w_{\gamma_k}(\theta_1) - w_{\gamma_{k-1}}(\theta_1)) \right. \nonumber \\
&\hspace{1cm}\left.\phantom{\sum_{1}^{2}}(w_{\gamma_k}(\theta_2) - w_{\gamma_{k-1}}(\theta_2))
\right] \label{eq:dd_2}\\
&= \lim_{N\rightarrow \infty} \sum_{k=0}^{N+1} \frac{C_{\gamma_{k-1}}(\theta_1,\theta_2)}
{B_{\gamma_{k-1}(\theta_1)B_{\gamma_{k-1}}(\theta_2)}}(\gamma_k - \gamma_{k-1}) \label{eq:dd_3}\\
&= \int_0^{\lambda} \frac{C_u(\theta_1,\theta_2)}
{B_u(\theta_1) B_u(\theta_2)} du \,. \label{eq:dd_4}
\end{align}

We get (\ref{eq:dd_2}) using the orthogonal increments property in (iii).  We use (\ref{eq:ttt_1}) to get (\ref{eq:dd_3}).  We use the definition of the Riemann integrals to go from (\ref{eq:dd_3}) to (\ref{eq:dd_4}).

\item For $\lambda_1 > \lambda_2$, the covariance of $w_{\lambda_1}(\theta) - w_{\lambda_2}(\theta)$ is computed as follows:

\smallskip

\noindent
$E[(w_{\lambda_1}(\theta) - w_{\lambda_2}(\theta))^2]$
\begin{align}
&=E[w_{\lambda_1}^2(\theta)] + E[w_{\lambda_2}^2(\theta)]
-2 E[w_{\lambda_1}(\theta) w_{\lambda_2}(\theta)] \\
&= \lambda_1 + \lambda_2 - 2  \int_0^{\lambda_2} \frac{C_u(\theta_1,\theta_2)}
{B_u(\theta_1) B_u(\theta_2)} \,du \,,
\end{align}
where
\begin{align}
E[w_{\lambda}^2(\theta)] &= \int_0^{\lambda} \frac{C_u(\theta,\theta)}{B_u^2(\theta)} du \label{eq:cov_1}\\
&= \int_{\{u\in[0,\lambda]: C_u(\theta,\theta) = 0\}} \frac{C_u(\theta,\theta)}{B_u^2(\theta)} du \nonumber \\ 
&+\int_{\{u\in[0,\lambda]: C_u(\theta,\theta) \ne 0\}} \frac{C_u(\theta,\theta)}{B_u^2(\theta)} du \label{eq:cov_2}\\
&= \int_{\{u\in[0,\lambda]: C_u(\theta,\theta) \ne 0\}} du \label{eq:cov_3}\\
&= \lambda \label{eq:cov_4}\,.
\end{align}
To go from (\ref{eq:cov_2}) to (\ref{eq:cov_3}), we use (\ref{eq:b_l_unit_disc}) since $C_u(\theta,\theta) \ne 0$.  To go from (\ref{eq:cov_3}) to (\ref{eq:cov_4}), we use the given assumption that the set $\{u\in[0,1]: C_u(\theta,\theta) = 0\}$ has measure zero.
\end{enumerate}

\section{Proof of Theorem~\ref{thm:recursive_filter}: Recursive Filter}
\label{appendix_proof_filter}
The steps involved in deriving the recursive filter are the same as deriving the Kalman-Bucy filtering equations, see \cite{Oksendal2005}.  The only difference is that we need to take into account the dependence of each point in the random field on its neighboring telescoping surface (which is captured in the integral transform $F_{\theta}$), instead of a neighboring point as we do for Gauss-Markov processes.  The steps in deriving the recursive filter are summarized as follows.  In Step 1, we define the innovation process and show that it is Brownian motion and equivalent to the observation space.  Using this, we find a relationship between the filtered estimate and the innovation, see Lemma \ref{lemma:innovation}.  In Step 2, we find a representation for the field $x_{\lambda}(\theta)$, see Lemma \ref{lemma:solution}.  Using Lemma \ref{lemma:innovation} and Lemma \ref{lemma:solution}, we find a closed form expression for $\widehat{x}_{\lambda | \lambda}(\theta)$ in Step 3.  We differentiate this to derive the equation for the filtered estimate in Step 4.  Finally, Step 5 computes the equation for the error covariance.
\bigskip

\noindent
\textbf{Step 1. [Innovations]}
Define $q_{\lambda}(\theta)$ such that
\[ q_{\lambda}(\theta) = y_{\lambda}(\theta) - 
\int_0^{\lambda} G_{\mu}(\theta) \widehat{x}_{\mu |\mu}(\theta) d\mu \,. \]
Define the innovation field $e_{\lambda}(\theta)$ such that
\begin{align}
de_{\lambda}(\theta) &= \frac{1}{D_{\lambda}(\theta)} \, dv_{\lambda}(\theta) \\
&= \frac{G_{\lambda}(\theta)}{D_{\lambda}(\theta)} \, \widetilde{x}_{\lambda|\lambda}(\theta) d \lambda + d n_{\lambda}(\theta) \,,
\label{eq:innovation}
\end{align}
where we have used (\ref{eq:observations}) to get the final expression in (\ref{eq:innovation}) and assume that $D_{\lambda}(\theta) \ne 0$.

\begin{lemma}
\label{lemma:innovation}
The field $e_{\lambda}(\theta)$ is Brownian motion for each fixed $\theta$ and $E[e_{\lambda_1}(\theta_1) e_{\lambda_2}(\theta_2)] = 0$ when $\lambda_1 \ne \lambda_2, \theta_1 \ne \theta_2$.
\end{lemma}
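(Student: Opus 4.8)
The plan is to recognize $e_{\lambda}(\theta)$, for each fixed $\theta$, as the innovations process of a Kalman--Bucy type filter and to verify it is a standard Brownian motion via L\'evy's characterization theorem \cite{Oksendal2005}: it suffices to show that $e_{\lambda}(\theta)$ is a continuous martingale, with respect to the observation filtration ${\cal Y}_{\lambda}=\sigma\{y_{\mu}(\alpha):0\le\mu\le\lambda,\ \alpha\in\Theta\}$, whose quadratic variation is $\langle e(\theta)\rangle_{\lambda}=\lambda$. First I would collect the routine structural facts: from (\ref{eq:innovation}), $e_{\lambda}(\theta)$ is a linear functional of the jointly Gaussian pair $(x,n)$, hence Gaussian; it is ${\cal Y}_{\lambda}$-adapted by construction; it is continuous in $\lambda$ because $y_{\lambda}(\theta)$ is continuous and $\widehat{x}_{\mu|\mu}(\theta)$ is locally integrable; $e_{0}(\theta)=0$ since $y_{0}(\theta)=0$; and it is square-integrable because the drift integrand $\frac{G_{\mu}(\theta)}{D_{\mu}(\theta)}\widetilde{x}_{\mu|\mu}(\theta)$ has finite second moment $\frac{G_{\mu}^{2}(\theta)}{D_{\mu}^{2}(\theta)}S_{\mu}(\theta,\theta)$, bounded on $[0,1]$.

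The crux is the martingale property. For $s\le\lambda$, (\ref{eq:innovation}) gives
\begin{equation*}
e_{\lambda}(\theta)-e_{s}(\theta)=\int_{s}^{\lambda}\frac{G_{\mu}(\theta)}{D_{\mu}(\theta)}\widetilde{x}_{\mu|\mu}(\theta)\,d\mu+\bigl(n_{\lambda}(\theta)-n_{s}(\theta)\bigr).
\end{equation*}
Here $n_{\lambda}(\theta)-n_{s}(\theta)$ is independent of ${\cal Y}_{s}$ because $n$ has independent increments and is independent of the field $x$; and, by Fubini together with the tower property, for every $\mu\ge s$,
\begin{equation*}
E\bigl[\widetilde{x}_{\mu|\mu}(\theta)\mid{\cal Y}_{s}\bigr]=E\bigl[x_{\mu}(\theta)\mid{\cal Y}_{s}\bigr]-E\bigl[E[x_{\mu}(\theta)\mid{\cal Y}_{\mu}]\mid{\cal Y}_{s}\bigr]=0,
\end{equation*}
because $\widehat{x}_{\mu|\mu}(\theta)=E[x_{\mu}(\theta)\mid{\cal Y}_{\mu}]$ by (\ref{eq:filtered_est}) and ${\cal Y}_{s}\subset{\cal Y}_{\mu}$. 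Hence $E[e_{\lambda}(\theta)-e_{s}(\theta)\mid{\cal Y}_{s}]=0$, so $e_{\lambda}(\theta)$ is a genuine ${\cal Y}_{\lambda}$-martingale (not merely a local one, by the $L^{2}$ bound above). Its quadratic variation is unaffected by the finite-variation drift, so $\langle e(\theta)\rangle_{\lambda}=\langle n(\theta)\rangle_{\lambda}=\lambda$, and L\'evy's theorem gives that $e_{\lambda}(\theta)$ is a standard ${\cal Y}_{\lambda}$-Brownian motion.

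For the covariance claim, I would observe that the hypothesis $E[n_{\lambda_{1}}(\theta_{1})n_{\lambda_{2}}(\theta_{2})]=0$ for $\lambda_{1}\ne\lambda_{2}$, $\theta_{1}\ne\theta_{2}$, extends by continuity of the covariance to $\lambda_{1}=\lambda_{2}$ and therefore forces the deterministic cross-covariation $\langle n(\theta_{1}),n(\theta_{2})\rangle_{\lambda}\equiv0$ whenever $\theta_{1}\ne\theta_{2}$; since finite-variation drifts contribute nothing to covariation, the same holds for the innovations, $\langle e(\theta_{1}),e(\theta_{2})\rangle_{\lambda}\equiv0$. As $e_{\lambda}(\theta_{1})$ and $e_{\lambda}(\theta_{2})$ are ${\cal Y}_{\lambda}$-martingales vanishing at $\lambda=0$, the product $e_{\lambda}(\theta_{1})e_{\lambda}(\theta_{2})$ equals $e_{\lambda}(\theta_{1})e_{\lambda}(\theta_{2})-\langle e(\theta_{1}),e(\theta_{2})\rangle_{\lambda}$, a martingale with zero initial value, so $E[e_{\lambda}(\theta_{1})e_{\lambda}(\theta_{2})]=0$ for every $\lambda$; combining this with the martingale property of $e(\theta_{1})$ (which replaces $\lambda_{1}$ by $\min(\lambda_{1},\lambda_{2})$) yields $E[e_{\lambda_{1}}(\theta_{1})e_{\lambda_{2}}(\theta_{2})]=0$ for $\lambda_{1}\ne\lambda_{2}$, $\theta_{1}\ne\theta_{2}$. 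I expect the martingale step to be the main obstacle: one must use that $\widehat{x}_{\mu|\mu}(\theta)$ is exactly the ${\cal Y}_{\mu}$-conditional expectation so the tower property applies verbatim, verify the square-integrability of the drift so L\'evy's characterization (not just a local-martingale version) is available, and be careful that $e_{\lambda}(\theta)$ retains its martingale and independent-increments structure relative to the observation filtration ${\cal Y}_{\lambda}$.
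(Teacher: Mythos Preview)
Your argument is correct. For the first claim (that $e_\lambda(\theta)$ is a Brownian motion in $\lambda$), your direct verification of L\'evy's characterization is exactly what underlies the result the paper cites (Corollary~8.4.5 in \cite{Oksendal2005}); the paper simply invokes the packaged statement after noting $E[\widetilde{x}_{\lambda|\lambda}(\theta)\mid \sigma\{e_\mu(\theta):\mu\le\lambda\}]=0$, whereas you spell out the martingale property via the tower identity and read off the quadratic variation. These are the same proof in substance.

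For the cross-covariance claim the two arguments diverge in style. The paper works incrementally: it first shows orthogonality of nonoverlapping increments, $E[(e_{\lambda_1}(\theta_1)-e_{\lambda_2}(\theta_1))\,e_\gamma(\alpha)]=0$ for $\gamma<\lambda_2$, then computes $E[de_{\lambda_1}(\theta_1)\,de_{\lambda_2}(\theta_2)]=0$ for $\lambda_1\ne\lambda_2$ by unwinding the definitions of $e$, $y$, and $n$ through the chain (\ref{eq:pf_1})--(\ref{eq:pf_6}), and finally assembles $E[e_{\lambda_1}(\theta_1)e_{\lambda_2}(\theta_2)]$ by a telescoping Riemann-sum computation of the type (\ref{eq:dd_1})--(\ref{eq:dd_4}). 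Your route is cleaner: you infer $\langle n(\theta_1),n(\theta_2)\rangle_\lambda\equiv 0$ for $\theta_1\ne\theta_2$ from the assumed cross-covariance of $n$, transfer this to $\langle e(\theta_1),e(\theta_2)\rangle_\lambda$ (since drifts are of finite variation), and then use that $e_\lambda(\theta_1)e_\lambda(\theta_2)$ is itself a $\mathcal{Y}_\lambda$-martingale vanishing at $0$, together with the martingale property of $e(\theta_1)$ to pass from equal to unequal $\lambda$'s. The paper's approach is more elementary and self-contained (no covariation machinery); yours is shorter and makes the structural reason for the vanishing transparent.
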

\begin{proof}
Note that $E[\widetilde{x}_{\lambda|\lambda}(\theta) | \sigma\{e_{\mu}(\theta):0\le\mu\le\lambda\}] = 0$ since $\widetilde{x}_{\lambda|\lambda}(\theta) \perp y_{\mu}(\alpha)$ for $\alpha \in \Theta$, $0 \le \mu \le \lambda$.  Thus, using Corollary 8.4.5 in \cite{Oksendal2005}, we establish that $e_{\lambda}(\theta)$ is Brownian motion for each fixed $\theta$.
Assume $\lambda_1 > \lambda_2$ and consider $\gamma < \lambda_2$.  
Then, using the orthogonality of error, $\widetilde{x}_{\mu}(\theta_1) \perp y_{\gamma}(\alpha)$ for $\gamma < \mu$, and the fact that $n_{\lambda}(\theta) \perp e_{\gamma}(\alpha)$ for $\gamma < \lambda$, we have

\smallskip

\noindent
$E[(e_{\lambda_1}(\theta_1) - e_{\lambda_2}(\theta_2))e_{\gamma}(\alpha)]$
\begin{align}
&=\int_{\lambda_2}^{\lambda_1} \left(
\frac{G_{\mu}(\theta)}{D_{\mu}(\theta)}\right) E[\widetilde{x}_{\mu|\mu}(\theta_1) e_{\gamma}(\alpha)] d\mu \nonumber \\
&\hspace{1cm} + E[(n_{\lambda_1}(\theta_1) -n_{\lambda_2}(\theta_1) )e_{\gamma}(\alpha)]  \\
&= 0 \,.
\end{align}
Now we compute $E[de_{\lambda_1}(\theta_1) de_{\lambda_2} (\theta_2)]$ for $\lambda_1 > \lambda_2$:

\smallskip

\noindent
$E[de_{\lambda_1}(\alpha_1) de_{\lambda_2} (\theta_2)]$
\begin{align}
&=E\left[\left( \frac{G_{\lambda_1}(\alpha_1)}{D_{\lambda_1}(\theta_1)} \widetilde{x}_{\lambda_1|\lambda_1}(\theta_1) d \lambda_1 + d n_{\lambda_1}(\theta_1) \right)
d e_{\lambda_2}(\theta_2)\right] \label{eq:pf_1}\\
&= E[ d n_{\lambda_1}(\theta_1) d e_{\lambda_2}(\theta_2)] \label{eq:pf_2} \\
&= \frac{1}{D_{\lambda_2}(\theta_2)}
E\left[d n_{\lambda_1}(\theta_1)\left( dy_{\lambda_2}(\theta_2) \nonumber \right. \right. \\
&\hspace{2cm} \left.\left.- G_{\lambda_2}(\theta_2) 
\widehat{x}_{\lambda_2 | \lambda_2}(\theta_2) d\lambda_2 \right)\right] \label{eq:pf_3} \\
&= \frac{1}{D_{\lambda_2}(\theta_2)} E[dn_{\lambda_1}(\theta_1)dy_{\lambda_2}(\theta_2)] \label{eq:pf_4}\\
&= \frac{1}{D_{\lambda_2}(\theta_2)} E[dn_{\lambda_1}(\theta_1)
\left( G_{\lambda_2}(\theta_2) x_{\lambda_2}(\theta_2) d\theta_2 \nonumber \right. \\
&\hspace{2cm} \left.+ D_{\lambda_2}(\theta_2) d n_{\lambda_2}(\theta_2) \right)] \label{eq:pf_5} \\
&= E[dn_{\lambda_1}(\theta_1) dn_{\lambda_2}(\theta_2)] = 0 \label{eq:pf_6} \,.
\end{align}
To go from (\ref{eq:pf_1}) to (\ref{eq:pf_2}), we use that $\widetilde{x}_{\lambda_1 | \lambda_1}(\theta_1)$ is independent of $d e_{\lambda_2}(\theta_2)$, since $de_{\lambda_2}(\theta_2)$ is a linear combination on the observations $\{y(\partial T^s), s \in [0,\lambda_2]\}$.  We get (\ref{eq:pf_3}) using the definition of $de_{\lambda_2}(\theta_2)$ in (\ref{eq:innovation}).  To go from (\ref{eq:pf_3}) to (\ref{eq:pf_4}), we use that $dn_{\mu_1}(\theta_1)$ is independent of $\widehat{x}_{\lambda_2 | \lambda_2}(\theta_2)$ for $\lambda_1 > \lambda_2$.  We get (\ref{eq:pf_5}) using the equation for the observations in (\ref{eq:observations}).  To go from (\ref{eq:pf_5}) to (\ref{eq:pf_6}), we use the assumption that $n_{\lambda_1}(\theta_1)$ is independent of the GMRF $x_{\lambda}(\theta)$.  In a similar manner, we can get the result for $\lambda_1 < \lambda_2$.

For $\lambda_1 \ne \lambda_2$ and $\theta_1 \ne \theta_2$, $E[e_{\lambda_1}(\theta_1) e_{\lambda_2}(\theta_2)] = 0$ follows from similar computations as done in (\ref{eq:dd_1})$-$(\ref{eq:dd_4}).
\end{proof}

Lemma~\ref{lemma:innovation} says that the innovation has the properties as the noise observation $n_{\lambda}(\theta)$.  We now use the innovation to find a closed form expression for the filtered estimate $\widehat{x}_{\lambda|\lambda}(\theta)$.

\begin{lemma}
\label{lemma:f_s}
The filtered estimate $\widehat{x}_{\lambda|\lambda}(\theta)$ can be written in terms of the innovation as
\begin{align}
\widehat{x}_{\lambda|\lambda}(\theta) &= \int_0^{\lambda} \int_{\Theta}
g_{\lambda,\mu}(\theta,\alpha)
d e_{\mu}(\alpha) d \alpha \label{eq:x_hat}\\
g_{\lambda,\mu}(\theta,\alpha) &= \frac{\partial}{\partial \mu} E[x_{\lambda}(\theta) e_{\mu}(\alpha)] \label{eq:g_l_u}\,.
\end{align}
\end{lemma}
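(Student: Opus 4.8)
The plan is to run the innovations argument, exactly as one derives the Kalman--Bucy filter, but keeping the spatial ($\theta$) integrals from the telescoping representation along for the ride. The first ingredient is that the innovation field carries the observation filtration. Writing ${\cal F}^y_\lambda = \sigma\{y_\mu(\alpha) : 0 \le \mu \le \lambda,\ \alpha \in \Theta\}$ and ${\cal F}^e_\lambda = \sigma\{e_\mu(\alpha): 0\le\mu\le\lambda,\ \alpha\in\Theta\}$, relation (\ref{eq:innovation}) together with the ${\cal F}^y_\mu$-measurability of $\widehat{x}_{\mu|\mu}(\theta)$ shows $e_\lambda(\theta)$ is a causal functional of $y$, so ${\cal F}^e_\lambda \subseteq {\cal F}^y_\lambda$; conversely, $y\mapsto e$ is a causally invertible Volterra transformation, so ${\cal F}^y_\lambda \subseteq {\cal F}^e_\lambda$ as well (the innovations fact underlying Corollary~8.4.5 of \cite{Oksendal2005}, already invoked in Lemma~\ref{lemma:innovation}). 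Hence ${\cal F}^e_\lambda = {\cal F}^y_\lambda$ and $\widehat{x}_{\lambda|\lambda}(\theta) = E[x_\lambda(\theta) \mid {\cal F}^e_\lambda]$.

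Next I would obtain the integral representation (\ref{eq:x_hat}). Since $x_\lambda(\theta)$ and the field $\{e_\mu(\alpha)\}$ are jointly zero-mean Gaussian, $\widehat{x}_{\lambda|\lambda}(\theta)$ is the orthogonal projection of $x_\lambda(\theta)$ onto the Gaussian Hilbert space spanned by $\{e_\mu(\alpha): \mu \le \lambda,\ \alpha\in\Theta\}$. By Lemma~\ref{lemma:innovation}, for each fixed $\theta$ the map $\lambda\mapsto e_\lambda(\theta)$ is a standard Brownian motion with increments orthogonal across $(\mu,\alpha)$, so this Hilbert space is isometric to $L^2([0,\lambda]\times\Theta)$ under the corresponding (two-parameter) Wiener integral; every element therefore has a unique representation $\int_0^\lambda\int_\Theta g_{\lambda,\mu}(\theta,\alpha)\,de_\mu(\alpha)\,d\alpha$ for a deterministic kernel $g_{\lambda,\cdot}(\theta,\cdot)\in L^2$, which is (\ref{eq:x_hat}).

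To identify the kernel, fix $\nu \le \lambda$ and $\beta\in\Theta$ and split
\[ E[x_\lambda(\theta)\,e_\nu(\beta)] = E[\widehat{x}_{\lambda|\lambda}(\theta)\,e_\nu(\beta)] + E[\widetilde{x}_{\lambda|\lambda}(\theta)\,e_\nu(\beta)]. \]
The last term vanishes because $\widetilde{x}_{\lambda|\lambda}(\theta)\perp {\cal F}^y_\lambda = {\cal F}^e_\lambda$ while $e_\nu(\beta)\in{\cal F}^e_\nu\subseteq{\cal F}^e_\lambda$. Into the first term I substitute (\ref{eq:x_hat}), write $e_\nu(\beta) = \int_0^\nu\int_\Theta \delta(\gamma-\beta)\,de_\sigma(\gamma)\,d\gamma$, and apply the Ito isometry with the second-order structure $E[de_{\mu}(\alpha)\,de_{\sigma}(\gamma)] = \delta(\mu-\sigma)\,\delta(\alpha-\gamma)\,d\mu$ from Lemma~\ref{lemma:innovation}; since $\nu\le\lambda$ this yields $E[x_\lambda(\theta)\,e_\nu(\beta)] = \int_0^\nu g_{\lambda,\mu}(\theta,\beta)\,d\mu$. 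Differentiating in $\nu$ and renaming $\nu\to\mu$, $\beta\to\alpha$ gives (\ref{eq:g_l_u}).

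The main obstacle is the first step: justifying rigorously that the innovation field carries the full observation filtration (causal invertibility of $y\mapsto e$) and that the associated Gaussian space admits the two-parameter stochastic-integral representation used above. I would dispatch the former by the standard Volterra-resolvent construction (or simply by appealing to the innovations lemma already used for Lemma~\ref{lemma:innovation}) and the latter via the orthogonality and isometry properties of $e$ established in Lemma~\ref{lemma:innovation}; the remaining computations --- the Ito isometry and the differentiation in $\nu$ --- are routine.
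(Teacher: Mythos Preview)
Your proposal is correct and follows essentially the same route as the paper: establish the equivalence of the observation and innovation filtrations (the paper simply cites \cite{Oksendal2005,KailathSayed} for this), write $\widehat{x}_{\lambda|\lambda}(\theta)$ as a stochastic integral against $de_\mu(\alpha)$, use the orthogonality $\widetilde{x}_{\lambda|\lambda}(\theta)\perp e_\mu(\alpha)$ to get $E[x_\lambda(\theta)e_\mu(\alpha)]=E[\widehat{x}_{\lambda|\lambda}(\theta)e_\mu(\alpha)]$, apply the second-order structure of $e$ from Lemma~\ref{lemma:innovation} to reduce this to $\int_0^\mu g_{\lambda,r}(\theta,\alpha)\,dr$, and differentiate. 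Your write-up simply makes explicit the Hilbert-space and causal-invertibility justifications that the paper leaves to the cited references.
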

\begin{IEEEproof}
Using the methods in \cite{Oksendal2005} or \cite{KailathSayed}, we can establish the equivalence between the innovations and the observations.  Because of this equivalence, we can write the filtered estimate as in (\ref{eq:x_hat}).  
We now compute $g_{\lambda,\mu}(\theta,\alpha)$.  We know that
\[ \left(x_{\lambda}(\theta) - \widehat{x}_{\lambda | \lambda}(\theta)\right)
\perp e_{\mu}(\alpha) \,,\quad \mu \le \lambda, \alpha \in \Theta \,. \]
Thus, we have

\smallskip

\noindent
$E[x_{\lambda}(\theta) e_{\mu}(\alpha)]$
\begin{align}
 &= E[\widehat{x}_{\lambda|\lambda}(\theta) e_{\mu}(\alpha)] \\
&= \int_0^{\lambda} \int_{\Theta} g_{\lambda,s}(\theta,\beta) E[d e_{s}(\beta) e_{\mu}(\alpha)] d \beta \\
&= \int_0^{\lambda} \int_{\Theta} \int_0^{\mu} 
g_{\lambda,s}(\theta,\beta) E[d e_s(\beta) d e_r(\alpha)] d \beta \label{eq:pff_1}\\
&= \int_0^{\lambda}\int_0^{\mu} \int_{\Theta} g_{\lambda,s}(\theta,\beta)
\delta(s-r) \delta(\beta-\alpha) ds dr d \beta \label{eq:pff_2}\\
&= \int_0^{\mu} g_{\lambda,r}(\theta,\alpha) dr \,. \label{eq:pff_3}
\end{align}
To go from (\ref{eq:pff_1}) to (\ref{eq:pff_2}), we use Lemma~\ref{lemma:innovation}.  Differentiating (\ref{eq:pff_3}) with respect to $\mu$, we get the expression for $g_{\lambda,\mu}(\theta,\alpha)$ in (\ref{eq:g_l_u}).
\end{IEEEproof}

\noindent
\textbf{Step 2. [Formula for $x_{\lambda}(\theta)$]}  
Before deriving a closed form expression for $x_{\lambda}(\theta)$, we first need the following Lemma.

\begin{lemma}
\label{thm:property_ft}
For any function $\Psi_{\lambda,\gamma}(\theta_1,\theta_2)$ with $m-1$ normal derivatives, we have
\begin{equation}
\int_0^{\lambda} F_{\theta_1} [\Psi_{\lambda,\gamma}(\theta_1,\theta_2)] d\gamma = 
F_{\theta_1} \left[\int_0^{\lambda}  \Psi_{\lambda,\gamma}(\theta_1,\theta_2) d\gamma \right] \,. \label{eq:property_ft}
\end{equation}
\end{lemma}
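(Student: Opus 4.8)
\emph{Proof plan.} The plan is to unwind the definition of $F_{\theta_1}$ in (\ref{eq:f_theta_unit_disc}) and observe that it is a linear integro-differential operator acting on the first angular argument of its input, built from operations that do not involve the variable $\gamma$: a finite sum over $j=0,\ldots,m-1$; multiplication by the kernels $k_j^{\lambda}(\theta_1,\alpha) := \lim_{\mu\to\lambda^+}\frac{\partial}{\partial\mu}b_j((\mu,\theta_1),(\lambda,\alpha))$, which are smooth by Theorem~\ref{thm:prediction}; the weak normal derivatives $\frac{\partial^j}{\partial n^j}$ in the direction normal to $\partial T^{\lambda}$; and an integral over $\alpha\in\Theta$. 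Concretely, $F_{\theta_1}[\Psi_{\lambda,\gamma}(\theta_1,\theta_2)] = \sum_{j=0}^{m-1}\int_{\Theta} k_j^{\lambda}(\theta_1,\alpha)\,\frac{\partial^j}{\partial n^j}\Psi_{\lambda,\gamma}(\alpha,\theta_2)\,d\alpha$, so the asserted identity reduces to showing that $\int_0^{\lambda}(\cdot)\,d\gamma$ commutes with each of these operations. Commuting with the finite sum is immediate by linearity, and commuting with multiplication by $k_j^{\lambda}$ is immediate since the kernel does not depend on $\gamma$; it therefore remains only to justify the interchange of $\int_0^{\lambda}d\gamma$ with $\int_{\Theta}d\alpha$ and with $\frac{\partial^j}{\partial n^j}$.

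For the interchange with $\int_{\Theta}d\alpha$ I would invoke Fubini's theorem. Each $k_j^{\lambda}(\theta_1,\alpha)$ is continuous, hence bounded, on the compact set $\Theta$; the hypothesis that $\Psi_{\lambda,\gamma}(\cdot,\theta_2)$ possesses $m-1$ normal derivatives, together with the standing continuity assumptions A3--A4, makes $\frac{\partial^j}{\partial n^j}\Psi_{\lambda,\gamma}(\alpha,\theta_2)$ jointly continuous in $(\gamma,\alpha)$ on the compact set $[0,\lambda]\times\Theta$, hence bounded there. Thus the integrand of $\int_0^{\lambda}\!\int_{\Theta}k_j^{\lambda}(\theta_1,\alpha)\frac{\partial^j}{\partial n^j}\Psi_{\lambda,\gamma}(\alpha,\theta_2)\,d\alpha\,d\gamma$ is absolutely integrable, Fubini applies, and the $\gamma$-independent factor $k_j^{\lambda}(\theta_1,\alpha)$ may then be pulled out of $\int_0^{\lambda}d\gamma$.

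It then remains to show $\int_0^{\lambda}\frac{\partial^j}{\partial n^j}\Psi_{\lambda,\gamma}(\alpha,\theta_2)\,d\gamma = \frac{\partial^j}{\partial n^j}\int_0^{\lambda}\Psi_{\lambda,\gamma}(\alpha,\theta_2)\,d\gamma$, which I would argue through the weak definition (\ref{eq:normalized_deriv}): for every smooth test function $f$ on $\partial T^{\lambda}$, the quantity $\int_{\partial T^{\lambda}}f\,\frac{\partial^j}{\partial n^j}\Psi_{\lambda,\gamma}\,dl$ is the limit as $h\to 0$ of $j$-th order difference quotients of $h\mapsto\int_{\partial T^{\lambda}}f\,\Psi_{\lambda,\gamma}(\cdot+h\dot{s})\,dl$. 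These difference quotients converge for each $\gamma$ (the normal derivatives up to order $m-1$ exist by hypothesis) and, by the continuity estimates of the previous paragraph, are dominated uniformly in $\gamma$ on $[0,\lambda]$; dominated convergence then permits interchanging the limit in $h$ with $\int_0^{\lambda}d\gamma$, which is precisely the claimed commutation after testing against $f$. Reassembling the finite sum over $j$ and the integral over $\alpha$ yields (\ref{eq:property_ft}).

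I expect the only delicate point to be this last interchange: the normal derivative is itself a limit, so one must secure a bound on the relevant difference quotients that is uniform in $\gamma$, and this is exactly what the hypothesis ``$\Psi$ has $m-1$ normal derivatives'' --- read together with the ambient smoothness of the covariance structure in A3--A4 --- is there to provide. Everything else is linearity plus Fubini.
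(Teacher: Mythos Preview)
Your proposal is correct and follows essentially the same route as the paper: unfold the definition of $F_{\theta_1}$ via (\ref{eq:f_theta_unit_disc}) and the weak normal derivative (\ref{eq:normalized_deriv}), pull $\int_0^{\lambda}d\gamma$ through the finite sum, the $\gamma$-independent kernel, the $\alpha$-integral, and the $h$-limit, then refold. The paper's own proof is a purely formal three-line interchange with no explicit justification of the Fubini or dominated-convergence steps you spell out, so your write-up is in fact more careful than what appears in the paper.
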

\begin{IEEEproof}
Using the definition of $F_{\theta_1}$, we have

\smallskip

\noindent
$\int_0^{\lambda} F_{\theta_1} [\Psi_{\lambda,\gamma}(\theta_1,\theta_2)] d\gamma$
\begin{align}
&=\int_0^{\lambda} \sum_{j=0}^{m-1} \int_{\Theta} \lim_{\mu \rightarrow \lambda^+} \frac{\partial}{\partial \mu}
b_j((\mu,\theta),(\lambda,\alpha)) \nonumber \\
&\hspace{2cm}  \lim_{h \rightarrow 0} \frac{\partial^j}{\partial h^j} 
\Psi_{\lambda + h \dot{\lambda},\gamma}(\alpha + h \dot{\alpha},\theta_2) d\alpha d\gamma \\
&=
\sum_{j=0}^{m-1} \int_{\Theta} \lim_{\mu \rightarrow \lambda^+} \frac{\partial}{\partial \mu}
b_j((\mu,\theta),(\lambda,\alpha)) \nonumber \\
& \hspace{1cm} \lim_{h \rightarrow 0} \frac{\partial^j}{\partial h^j} \left[\int_0^{\lambda} 
\Psi_{\lambda + h \dot{\lambda},\gamma}(\alpha + h \dot{\alpha},\theta_2) d\gamma\right] d\alpha \\
&= F_{\theta_1} \left[\int_0^{\lambda}  \Psi_{\lambda,\gamma}(\theta_1,\theta_2) d\gamma \right] \,. 
\end{align}
\end{IEEEproof}

\begin{lemma}
\label{lemma:solution}
Using the telescoping representation for $x_{\lambda}(\theta)$, a solution for $x_{\lambda}(\theta)$ is given as follows:
\begin{align}
x_{\lambda}(\theta) &= \int_{\Theta}
\Phi_{\lambda,\mu}(\theta,\alpha) x_{\mu} (\alpha) d\alpha \nonumber\\
&\hspace{2cm}+ \int_{\Theta} \int_{\mu}^{\lambda} \Phi_{\lambda,\gamma}(\theta,\alpha) 
d w_{\gamma}(\alpha) d\alpha \,,
\label{eq:x_l_l}
\end{align}
\[
\frac{\partial}{\partial \lambda} \Phi_{\lambda,\mu}(\theta,\alpha)
= F_{\theta} \left[\Phi_{\lambda,\mu}(\theta,\alpha)\right] \,,
\Phi_{\lambda,\lambda} = \delta(\theta - \alpha) \,.
\]
\end{lemma}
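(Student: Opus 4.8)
\begin{IEEEproof}[Proof idea]
The plan is to read (\ref{eq:x_l_l}) as a variation-of-constants (Duhamel) formula for the linear stochastic differential equation (\ref{eq:telescoping_unit_disc}), with $\Phi_{\lambda,\mu}$ playing the role of the state-transition operator. First I would establish that the deterministic linear integro-differential equation $\partial\Phi_{\lambda,\mu}(\theta,\alpha)/\partial\lambda = F_{\theta}[\Phi_{\lambda,\mu}(\theta,\alpha)]$ with $\Phi_{\mu,\mu}(\theta,\alpha) = \delta(\theta-\alpha)$ has a unique solution: by (\ref{eq:f_theta_unit_disc}) and Theorem~\ref{thm:prediction}, $F_{\theta}$ is a linear integral operator with smooth kernel in the spatial variable, so a Picard/Neumann-series iteration in $\lambda$ converges and yields a kernel $\Phi_{\lambda,\gamma}(\theta,\alpha)$ that is jointly smooth and satisfies the semigroup identity $\int_{\Theta}\Phi_{\lambda,\gamma}(\theta,\beta)\Phi_{\gamma,\mu}(\beta,\alpha)\,d\beta = \Phi_{\lambda,\mu}(\theta,\alpha)$.

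Next, letting $y_{\lambda}(\theta)$ denote the right-hand side of (\ref{eq:x_l_l}), I would check that $y_{\lambda}(\theta)$ solves (\ref{eq:telescoping_unit_disc}) with the correct value at $\lambda=\mu$. At $\lambda=\mu$ the stochastic double integral vanishes and $\Phi_{\mu,\mu}=\delta(\theta-\alpha)$, so $y_{\mu}(\theta)=x_{\mu}(\theta)$. Differentiating $y_{\lambda}(\theta)$ in $\lambda$: the first term contributes $\int_{\Theta}F_{\theta}[\Phi_{\lambda,\mu}(\theta,\alpha)]x_{\mu}(\alpha)\,d\alpha\,d\lambda$, which equals $F_{\theta}\!\left[\int_{\Theta}\Phi_{\lambda,\mu}(\theta,\alpha)x_{\mu}(\alpha)\,d\alpha\right]d\lambda$ because $F_{\theta}$ is linear and acts by integration in $\alpha$. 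Differentiating the stochastic term by the stochastic Leibniz rule produces a boundary contribution at $\gamma=\lambda$, namely $\int_{\Theta}\Phi_{\lambda,\lambda}(\theta,\alpha)\,dw_{\lambda}(\alpha)\,d\alpha = dw_{\lambda}(\theta)$, which reproduces exactly the driving-noise term $B_{\lambda}(\theta)\,dw_{\lambda}(\theta)$ of (\ref{eq:telescoping_unit_disc}), plus an interior contribution $\int_{\Theta}\int_{\mu}^{\lambda}(\partial/\partial\lambda)\Phi_{\lambda,\gamma}(\theta,\alpha)\,dw_{\gamma}(\alpha)\,d\alpha\,d\lambda$. Using $\partial\Phi_{\lambda,\gamma}/\partial\lambda = F_{\theta}[\Phi_{\lambda,\gamma}]$ together with Lemma~\ref{thm:property_ft} --- which lets $F_{\theta}$ be pulled out of the $d\gamma$-integral, justified by a stochastic Fubini argument since $F_{\theta}$ is an integration in $\alpha$ --- this interior contribution equals $F_{\theta}\!\left[\int_{\Theta}\int_{\mu}^{\lambda}\Phi_{\lambda,\gamma}(\theta,\alpha)\,dw_{\gamma}(\alpha)\,d\alpha\right]d\lambda$. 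Summing the pieces gives $dy_{\lambda}(\theta) = F_{\theta}[y_{\lambda}(\theta)]\,d\lambda + B_{\lambda}(\theta)\,dw_{\lambda}(\theta)$. Since (\ref{eq:telescoping_unit_disc}) is linear with smooth coefficients, a Gronwall estimate yields pathwise uniqueness of its solution given the value at $\lambda=\mu$, so $y_{\lambda}(\theta)=x_{\lambda}(\theta)$ almost surely, which is (\ref{eq:x_l_l}).

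The step I expect to be the main obstacle is giving rigorous meaning to, and legitimately manipulating, the stochastic double integral $\int_{\Theta}\int_{\mu}^{\lambda}\Phi_{\lambda,\gamma}(\theta,\alpha)\,dw_{\gamma}(\alpha)\,d\alpha$. Because $w_{\lambda}(\theta)$ is a two-parameter object --- Brownian motion in $\lambda$ for each fixed $\theta$ by Theorem~\ref{thm:w_l_noise}, but with spatially correlated increments governed by $C_{\lambda}(\theta_1,\theta_2)$ --- one must define this quantity as an iterated It\^o integral, prove a stochastic Fubini theorem to exchange the $d\alpha$ and $dw_{\gamma}$ integrations, and justify the stochastic Leibniz differentiation used above. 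All of this rests on sufficient joint regularity of $\Phi_{\lambda,\gamma}(\theta,\alpha)$, which is inherited from the smoothness of the prediction kernels $b_j$ in Theorem~\ref{thm:prediction} but needs care because $F_{\theta}$ is built from the one-sided limits $\lim_{\mu\to\lambda^{+}}\partial_{\mu}b_j$ and from weak normal derivatives.
\end{IEEEproof}
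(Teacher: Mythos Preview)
Your approach is essentially the same as the paper's: verify by direct differentiation in $\lambda$ that the Duhamel formula satisfies the telescoping SDE, using $\partial_{\lambda}\Phi_{\lambda,\gamma}=F_{\theta}[\Phi_{\lambda,\gamma}]$, the initial condition $\Phi_{\lambda,\lambda}=\delta$, and Lemma~\ref{thm:property_ft} to pull $F_{\theta}$ outside the integrals. The paper's proof omits the existence/uniqueness discussion for $\Phi$, the Gronwall uniqueness step, and the stochastic-Fubini justification you flag --- your added rigor is welcome, though note that the boundary term you obtain is $dw_{\lambda}(\theta)$, not $B_{\lambda}(\theta)\,dw_{\lambda}(\theta)$, so the formula as stated (and as the paper proves it) really solves the SDE with $B\equiv 1$; a factor $B_{\gamma}(\alpha)$ is tacitly absorbed.
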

\begin{IEEEproof}
We show that (\ref{eq:x_l_l}) satisfies the differential equation in (\ref{eq:telescoping_arbit}).  Taking derivative of (\ref{eq:x_l_l}) with respect to $\lambda$, we have

\smallskip

\noindent
$d x_{\lambda}(\theta)$
\begin{align}
 &= \int_{\Theta} \frac{\partial}{d \lambda}
\Phi_{\lambda,\mu}(\theta,\alpha) x_{\mu}(\alpha) d\alpha d\lambda
+ \int_{\Theta} \Phi_{\lambda,\lambda}(\theta,\alpha) dw_{\lambda}(\alpha) \nonumber \\
&\hspace{2cm}+ \int_{\Theta} \int_{\mu}^{\lambda} \frac{\partial}{d \lambda}
\Phi_{\lambda,\gamma}(\theta,\alpha) dw_{\gamma}(\alpha) d \alpha \\
&= \int_{\Theta} F_{\theta} \left[\Phi_{\lambda,\mu}(\theta,\alpha)\right] x_{\mu}(\alpha)d\alpha d\lambda + dw_{\lambda}(\theta) \nonumber \\
&\hspace{2cm}+ \int_{\Theta} \int_{\mu}^{\lambda}F_{\theta}
\left[\Phi_{\lambda,\gamma}(\theta,\alpha)\right] dw_{\gamma}(\alpha) d\alpha \\
&= F_{\theta} \!\left[\! \int_{\Theta}\!
\Phi_{\lambda,\mu}(\theta,\alpha) x_{\mu} (\alpha) d\alpha
\!+ \!\!\int_{\Theta} \!\int_{\mu}^{\lambda}\! \Phi_{\lambda,\gamma}(\theta,\alpha) 
d b_{\gamma}(\alpha) d\alpha\right] \nonumber \\
& \hspace{6cm} + dw_{\lambda}(\theta) \label{eq:pfff_1}\\
&= F_{\theta} [x_{\lambda}(\theta)] + dw_{\lambda}(\theta) \,.
\end{align}
To get (\ref{eq:pfff_1}), we use Theorem \ref{thm:property_ft} to take the integral transform $F_{\theta}$ outside the integral. Since the $x_{\lambda}(\theta)$ in (\ref{eq:x_l_l}) satisfies (\ref{eq:telescoping_arbit}), it must be a solution.
\end{IEEEproof}

\noindent
\textbf{Step 3. [Equation for $\widehat{x}_{\lambda|\lambda}(\theta)$]}
Using (\ref{eq:innovation}) and (\ref{eq:x_l_l}), we can write $g_{\lambda,\mu}(\theta,\alpha)$ in (\ref{eq:g_l_u}) as

\noindent
$g_{\lambda,\mu}(\theta,\alpha)$
\begin{align}
 &= 
\frac{\partial}{\partial \mu} 
E\left\{x_{\lambda}(\theta) 
\left[ 
\int_0^{\mu} 
\frac{G_{\gamma}(\alpha)}{D_{\gamma}(\alpha)} \widetilde{x}_{\gamma|\gamma}(\alpha) d\gamma + n_{\mu}(\alpha)
\right]  \right\} \\
&= \frac{\partial}{\partial \mu} \left[ 
\int_0^u \frac{G_{\gamma}(\alpha)}{D_{\gamma}(\alpha)}
E[x_{\lambda}(\theta) \widetilde{x}_{\gamma| \gamma}(\alpha)] 
d\gamma \right] \\
&= \frac{\partial}{\partial \mu} 
\left\{
\int_0^u \frac{G_{\gamma}(\alpha)}{D_{\gamma}(\alpha)}
\int_{\Theta} \Phi_{\lambda,\gamma}(\theta,\beta) E[x_{\gamma}(\beta)\widetilde{x}_{\gamma| \gamma}(\alpha)] d \beta
d\gamma \right\} \\
&= \frac{G_{\mu}(\alpha)}{D_{\mu}(\alpha)}
\int_{\Theta} \Phi_{\lambda,\mu}(\theta,\beta) E[x_{\mu}(\beta)\widetilde{x}_{\mu | \mu}(\alpha)] d \beta \\
&= \frac{G_{\mu}(\alpha)}{D_{\mu}(\alpha)}
\int_{\Theta} \Phi_{\lambda,\mu}(\theta,\beta) 
S_{\mu}(\beta,\alpha)d \beta \,. \label{eq:g_l}
\end{align}
To get (\ref{eq:g_l}), we use the fact that 
$\widehat{x}_{\mu|\mu}(\alpha) \perp \widetilde{x}_{\mu| \mu}(\beta)$, so that
\[
E[x_{\mu}(\beta) \widetilde{x}_{\mu | \mu}(\alpha)]
= E[\widetilde{x}_{\mu|\mu}(\beta) \widetilde{x}_{\mu|\mu}(\alpha)] = S_{\mu}(\beta,\alpha) \,.
\]
Substituting (\ref{eq:g_l}) in the expression for $\widehat{x}_{\lambda|\lambda}(\theta)$ in (\ref{eq:g_l_u}) (Step~2), we get

\noindent
$\widehat{x}_{\lambda|\lambda}(\theta)$
\begin{equation}
 = 
\int_0^{\lambda} \int_{\Theta}
\frac{G_{\mu}(\alpha)}{D_{\mu}(\alpha)}
\left[\int_{\Theta} \Phi_{\lambda,\mu}(\theta,\beta) 
S_{\mu}(\beta,\alpha)d \beta \right]
d e_{\mu}(\alpha) d \alpha \label{eq:x_h_l} \,.
\end{equation}

\noindent
\textbf{Step 4. [Differential Equation for $\widehat{x}_{\lambda|\lambda}(\theta)$]}
Differentiating (\ref{eq:x_h_l}) with respect to $\lambda$, we get

\noindent
$d \widehat{x}_{\lambda|\lambda}(\theta)$
\begin{align}
 &= 
\int_{\Theta} \frac{G_{\lambda}(\alpha)}{D_{\lambda}(\alpha)} \,
S_{\mu}(\theta,\alpha) de_{\lambda}(\alpha) d \alpha \\
&+ \int_0^{\lambda} \int_{\Theta}
\frac{G_{\mu}(\alpha)}{D_{\mu}(\alpha)}
\left[\int_{\Theta} F_{\theta}[\Phi_{\lambda,\mu}(\theta,\beta)] 
S_{\mu}(\beta,\alpha)d \beta \right]
d e_{\mu}(\alpha) d \alpha \nonumber \\
&= \int_{\Theta} \frac{G_{\lambda}(\alpha)}{D_{\lambda}(\alpha)} \,
S_{\lambda}(\alpha,\theta) de_{\lambda}(\alpha) d \alpha
+ F_{\theta} [\widehat{x}_{\lambda | \lambda}(\theta)] d\lambda \\
&= F_{\theta} [\widehat{x}_{\lambda|\lambda}(\theta)]d\lambda + K_{\theta} [de_{\lambda}(\theta)] \,,
\end{align}
where $K_{\theta}$ is the integral transform defined as in (\ref{eq:k_int}).

\noindent
\textbf{Step 5. [Differential Equation for $S_{\lambda}(\alpha,\theta)$]}
The error covariance $S_{\lambda}(\alpha,\theta)$ can be written as
\begin{align}
S_{\lambda}(\alpha,\theta) &= 
E[\widetilde{x}_{\lambda | \lambda}(\alpha)\widetilde{x}_{\lambda | \lambda}(\theta)] \\
&= E[x_{\lambda}(\alpha)x_{\lambda}(\theta)] - 
E[\widehat{x}_{\lambda | \lambda}(\alpha)
\widehat{x}_{\lambda | \lambda}(\theta)] \label{eq:ss}\,.
\end{align}
Using expressions for $x_{\lambda}(\alpha)$ in (\ref{eq:x_l_l}), we can show that for $P_{\lambda}(\alpha,\beta) = E[x_{\lambda}(\alpha)x_{\lambda}(\theta)]$,
\begin{equation}
\frac{\partial P_{\lambda}(\alpha,\theta)}{d\lambda} = 
F_{\alpha}[P_{\lambda}(\alpha,\theta)] + F_{\theta} P_{\lambda}(\alpha,\theta) + C_{\lambda}(\alpha,\theta) \,.
\label{eq:p_l}
\end{equation}
Using the expression for $\widehat{x}_{\lambda|\lambda}(\alpha)$ in (\ref{eq:x_hat}), it can be shown that
\begin{align}
\frac{\partial E[\widehat{x}_{\lambda | \lambda}(\alpha)
\widehat{x}_{\lambda | \lambda}(\theta)]}{d \lambda} &= 
F_{\alpha}[E[\widehat{x}_{\lambda | \lambda}(\alpha)
\widehat{x}_{\lambda | \lambda}(\theta)]] \nonumber \\
&+ F_{\theta}[E[\widehat{x}_{\lambda | \lambda}(\alpha)
\widehat{x}_{\lambda | \lambda}(\theta)]] \label{eq:l_l}\\
&+ \int_{\Theta} \frac{G_{\lambda}^2(\beta)}{D_{\lambda}^2(\beta)} \, S_{\lambda}(\alpha,\beta) S_{\lambda}(\theta,\beta) d\beta \,.
\nonumber 
\end{align}
Differentiating (\ref{eq:ss}) and using (\ref{eq:p_l}) and (\ref{eq:l_l}), we get the desired equation:
\begin{align}
\frac{\partial}{\partial \lambda}S_{\lambda}(\alpha,\theta) &= 
F_{\alpha} [S_{\lambda}(\alpha,\theta)] + 
F_{\theta} [S_{\lambda}(\alpha,\theta)] + 
C_{\lambda}(\theta,\alpha) \nonumber \\
& -\int_{\Theta} \frac{G_{\lambda}^2(\beta)}{D_{\lambda}^2(\beta)} 
S_{\lambda}(\alpha,\beta) S_{\lambda}(\theta,\beta) d\beta
\,.
\end{align}

\section{Proof of Theorem~\ref{thm:recursive_smoother}: Recursive Smoother}
\label{app:proof_smoothing}
We now derive smoothing equations.  Using similar steps as in Lemma \ref{lemma:f_s}, we can show that
\begin{align}
\widehat{x}_{\lambda|T}(\theta) &= \int_{0}^{1}\int_{\Theta} g_{\lambda,\mu}(\theta,\alpha) d e_{\mu}(\alpha) d\alpha \,, \label{eq:sm_1}\\
g_{\lambda,\mu}(\theta,\alpha) &= \frac{\partial }{\partial \mu} E[x_{\lambda}(\theta) e_{\mu}(\alpha)] \,. \label{eq:sm_11}
\end{align}
Define the error covariance $S_{\lambda,\mu}(\theta,\alpha)$ as
\begin{equation}
S_{\lambda,\mu}(\theta,\alpha) = E[\widetilde{x}_{\lambda|\lambda}(\theta)
\widetilde{x}_{\mu|\mu}(\alpha)] \,.
\end{equation}
We have the following result for the smoother:
\begin{lemma}
The smoothed estimator $\widehat{x}_{\lambda|T}(\theta)$ is given by
\begin{equation}
\widehat{x}_{\lambda|T}(\theta) = \widehat{x}_{\lambda|\lambda}(\theta) + \int_{\lambda}^{1} \int_{\Theta} g_{\lambda,\mu}(\theta,\alpha) d e_{\mu}(\alpha) d\alpha \,, \label{eq:sm_2}
\end{equation}
where for $\mu \ge \lambda$,
\begin{equation}
g_{\lambda,\mu}(\theta,\alpha) = \frac{G_{\mu}(\alpha)}{D_{\mu}(\alpha)} S_{\lambda,\mu}(\theta,\alpha) \,. \label{eq:sm_3}
\end{equation}
\end{lemma}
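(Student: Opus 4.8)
The plan is to repeat the argument of Lemma~\ref{lemma:f_s} with the causal observation $\sigma$-algebra replaced by the full one, $\sigma\{y_\mu(\alpha):0\le\mu\le 1,\ \alpha\in\Theta\}$. First I would recall from Step~1 of the proof of Theorem~\ref{thm:recursive_filter} that, by Lemma~\ref{lemma:innovation} and the arguments of \cite{Oksendal2005,KailathSayed}, the innovation field $e_\mu(\alpha)$ is a white field over $[0,1]\times\Theta$ spanning the same closed subspace of $L^2$ as the observations $\{y_\mu(\alpha)\}$. Expanding the projection $\widehat{x}_{\lambda|T}(\theta)$ against this white field gives the representation (\ref{eq:sm_1})--(\ref{eq:sm_11}), exactly as in (\ref{eq:x_hat})--(\ref{eq:g_l_u}) but with the upper limit $\lambda$ replaced by $1$.

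Next I would split the $\mu$-integral in (\ref{eq:sm_1}) at $\mu=\lambda$: the part over $[0,\lambda]$ is, by Lemma~\ref{lemma:f_s}, exactly $\widehat{x}_{\lambda|\lambda}(\theta)$, which produces the first term of (\ref{eq:sm_2}), while the part over $[\lambda,1]$ is the smoothing correction. It then remains to evaluate $g_{\lambda,\mu}(\theta,\alpha)$ for $\mu\ge\lambda$. Using the innovation relation (\ref{eq:innovation}) I would write $E[x_\lambda(\theta)e_\mu(\alpha)] = \int_0^\mu \frac{G_\gamma(\alpha)}{D_\gamma(\alpha)} E[x_\lambda(\theta)\widetilde{x}_{\gamma|\gamma}(\alpha)]\,d\gamma + E[x_\lambda(\theta)n_\mu(\alpha)]$, and the second term vanishes since $n_\mu(\alpha)$ is independent of the GMRF. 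Differentiating in $\mu$ yields $g_{\lambda,\mu}(\theta,\alpha) = \frac{G_\mu(\alpha)}{D_\mu(\alpha)} E[x_\lambda(\theta)\widetilde{x}_{\mu|\mu}(\alpha)]$.

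Finally, for $\mu\ge\lambda$ the filtered estimate $\widehat{x}_{\lambda|\lambda}(\theta)$ lies in the span of $\{y_s(\beta):s\le\lambda\le\mu,\ \beta\in\Theta\}$ and is therefore orthogonal to $\widetilde{x}_{\mu|\mu}(\alpha)$; decomposing $x_\lambda(\theta)=\widehat{x}_{\lambda|\lambda}(\theta)+\widetilde{x}_{\lambda|\lambda}(\theta)$ then gives $E[x_\lambda(\theta)\widetilde{x}_{\mu|\mu}(\alpha)] = E[\widetilde{x}_{\lambda|\lambda}(\theta)\widetilde{x}_{\mu|\mu}(\alpha)] = S_{\lambda,\mu}(\theta,\alpha)$, which is (\ref{eq:sm_3}). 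The main obstacle I anticipate is the first step: justifying rigorously that the innovation field spans the same closed subspace as the observation field over the two-dimensional index set $[0,1]\times\Theta$, so that the smoothed projection may be expanded against $e_\mu(\alpha)$ with kernel (\ref{eq:sm_11}), together with the attendant interchange of expectation, differentiation in $\mu$, and the $\Theta$-integral; the remaining orthogonality computations are routine applications of the projection theorem and parallel those already carried out for the filter.
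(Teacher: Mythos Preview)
Your proposal is correct and follows essentially the same route as the paper: the paper also derives (\ref{eq:sm_2}) by splitting the innovation representation (\ref{eq:sm_1}) at $\mu=\lambda$ and invoking Lemma~\ref{lemma:f_s} for the lower piece, and obtains (\ref{eq:sm_3}) by substituting the innovation relation (\ref{eq:innovation}) into (\ref{eq:sm_11}) and differentiating. Your added orthogonality step $E[x_\lambda(\theta)\widetilde{x}_{\mu|\mu}(\alpha)] = S_{\lambda,\mu}(\theta,\alpha)$ for $\mu\ge\lambda$ makes explicit a detail the paper leaves implicit, and the obstacle you flag about the innovation--observation equivalence is handled in the paper exactly as you anticipate, by reference to \cite{Oksendal2005,KailathSayed}.
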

\begin{IEEEproof}
Equation (\ref{eq:sm_2}) immediately follows from (\ref{eq:sm_1}) and Lemma \ref{lemma:f_s}. Equation (\ref{eq:sm_3}) follows by using (\ref{eq:innovation}) to compute $g_{\lambda,\mu}(\theta,\alpha)$ in (\ref{eq:sm_11}).
\end{IEEEproof}

We now want to characterize the error covariance $S_{\lambda,\mu}(\theta,\alpha)$.  Subtracting the telescoping representation in (\ref{eq:telescoping_arbit}) and the filtering equation in (\ref{eq:filter_field}), we get the following equation for the filtering error covariance:
\begin{equation}
d \widetilde{x}_{\mu|\mu}(\alpha) = \widetilde{F}_{\theta}[\widetilde{x}_{\mu|\mu}(\alpha)] d\mu
+ B_{\mu}(\alpha)dw_{\mu}(\alpha) - K_{\alpha}[dn_{\mu}(\alpha)] \,, \label{eq:error_covariance_eqn}
\end{equation}
where $\widetilde{F}_{\alpha}$ is the integral transform
\begin{equation}
\widetilde{F}_{\alpha}[\widetilde{x}_{\mu|\mu}(\alpha)] = F_{\alpha}[\widetilde{x}_{\mu|\mu}(\alpha)] - K_{\alpha}\left[\frac{G_{\mu}(\alpha)}{D_{\mu}(\alpha)}\widetilde{x}_{\mu|\mu}(\alpha) \right] \,.
\end{equation}
Just like we did in Lemma~\ref{lemma:solution}, we can write a solution to (\ref{eq:error_covariance_eqn}) as
\begin{align}
&\widetilde{x}_{\mu|\mu}(\alpha) = 
\int_{\Theta} \widetilde{\Phi}_{\mu,\lambda}(\alpha,\theta) 
\widetilde{x}_{\lambda|\lambda}(\theta) d\theta \nonumber \\
&+
\int_{\Theta} \int_{\lambda}^{\mu} \widetilde{\Phi}_{\mu,\gamma}(\alpha,\theta)[B_{\gamma}(\theta)dw_{\gamma}(\theta) - K_{\theta}[dn_{\gamma}(\theta)]] d\theta 
\label{eq:solution_x_tilde}
\end{align}
\begin{align}
\frac{\partial}{\partial \mu} \widetilde{\Phi}_{\mu,\lambda}(\alpha,\theta) = \widetilde{F}_{\alpha} \widetilde{\Phi}_{\mu,\lambda}(\alpha,\theta) 
\;{ \text{and} } \;
\widetilde{\Phi}_{\mu,\mu}(\alpha,\theta) = \delta(\alpha-\theta) \,.
\end{align}
Differentiating (\ref{eq:solution_x_tilde}) with respect to $\lambda$, we can show that
\begin{equation}
\int_{\Theta} \frac{\partial }{\partial \lambda} \widetilde{\Phi}_{\mu,\lambda}(\alpha,\beta) \widetilde{x}_{\lambda|\lambda}(\beta) d\beta \!=\!  -\!\int_{\Theta}
\widetilde{\Phi}_{\mu,\lambda}(\alpha,\beta) \widetilde{F}_{\beta} 
\widetilde{x}_{\lambda|\lambda}(\beta) d\beta \,. \label{eq:ddd}
\end{equation}
Substituting (\ref{eq:solution_x_tilde}) in (\ref{eq:error_covariance_eqn}), we have the following relationship:
\begin{equation}
S_{\lambda,\mu}(\theta,\alpha) = \int_{\Theta} \widetilde{\Phi}_{\mu,\lambda}(\alpha,\beta) S_{\lambda}(\beta,\theta) d\beta \,. \label{eq:ddd1}
\end{equation}

Substituting (\ref{eq:ddd1}) in (\ref{eq:sm_2}) and (\ref{eq:sm_3}), differentiating (\ref{eq:sm_2}) and using (\ref{eq:ddd}) and (\ref{eq:ricatti_s}), we get the following equation:
\begin{align}
&d \widehat{x}_{\lambda|T}(\theta) = F_{\theta} [\widehat{x}_{\lambda|T}(\theta)] d\lambda \nonumber \\
&+ \int_{\lambda}^{1}\int_{\Theta} \frac{G_u(\alpha)}{D_u(\alpha)}
\int_{\Theta} \widetilde{\Phi}_{\mu,\lambda}(\alpha,\beta) C_{\lambda}(\beta,\theta) d\beta d e_{\mu}(\alpha) d\alpha \,.
\label{eq:dc_1}
\end{align}
Assuming $S_{\lambda}(\theta,\theta) > 0$, we get smoother equations using the following calculations:
\begin{align}
&d \widehat{x}_{\lambda|T}(\beta)\delta(\theta-\beta) = F_{\beta} [\widehat{x}_{\lambda|T}(\beta)] \delta(\theta-\beta)d\lambda \label{eq:dc_2} \\
&\hspace{1cm} + \int_{\lambda}^{1}\int_{\Theta} \frac{G_u(\alpha)}{D_u(\alpha)}
\widetilde{\Phi}_{\mu,\lambda}(\alpha,\beta) C_{\lambda}(\beta,\theta) d e_{\mu}(\alpha) d\alpha \nonumber\\
&\frac{S_{\lambda}(\beta,\theta)}{C_{\lambda}(\beta,\theta)}
d \widehat{x}_{\lambda|T}(\beta)\delta(\theta-\beta) \nonumber \\
&= \frac{S_{\lambda}(\beta,\theta)}{C_{\lambda}(\beta,\theta)}
F_{\beta} [\widehat{x}_{\lambda|T}(\beta)] \delta(\theta-\beta)d\lambda \nonumber \\
&+ \int_{\lambda}^{1}\int_{\Theta} \frac{G_u(\alpha)}{D_u(\alpha)}
\widetilde{\Phi}_{\mu,\lambda}(\alpha,\beta) S_{\lambda}(\beta,\theta) d e_{\mu}(\alpha) d\alpha \label{eq:dc_3}\\
&\frac{S_{\lambda}(\theta,\theta)}{C_{\lambda}(\theta,\theta)}
d \widehat{x}_{\lambda|T}(\theta) = 
\frac{S_{\lambda}(\theta,\theta)}{C_{\lambda}(\theta,\theta)}
F_{\theta} [\widehat{x}_{\lambda|T}(\theta)] d\lambda \nonumber \\
&+ \quad \int_{\lambda}^{1}\int_{\Theta} \frac{G_u(\alpha)}{D_u(\alpha)}
\left(\int_{\Theta} \widetilde{\Phi}_{\mu,\lambda}(\alpha,\beta) S_{\lambda}(\beta,\theta) d\beta \right) d e_{\mu}(\alpha) d\alpha
\label{eq:dc_4} \\
&d \widehat{x}_{\lambda|T}(\theta) = 
F_{\theta} [\widehat{x}_{\lambda|T}(\theta)] d\lambda \nonumber \\
&\hspace{0.7cm}+\frac{C_{\lambda}(\theta,\theta)}{S_{\lambda}(\theta,\theta)}\int_{\lambda}^{1}\int_{\Theta} \frac{G_u(\alpha)}{D_u(\alpha)}
S_{\lambda,\mu}(\theta,\alpha) d e_{\mu}(\alpha) d\alpha \label{eq:dc_5}\\
&d \widehat{x}_{\lambda|T}(\theta) = 
F_{\theta} [\widehat{x}_{\lambda|T}(\theta)] d\lambda + 
\frac{C_{\lambda}(\theta,\theta)}{S_{\lambda}(\theta,\theta)}
[\widehat{x}_{\lambda|T}(\theta) -\widehat{x}_{\lambda|\lambda}(\theta)] \,. \label{eq:dc_6}
\end{align}
Equation (\ref{eq:dc_1}) is equivalent to (\ref{eq:dc_2}).  We multiply (\ref{eq:dc_2}) by $S_{\lambda}(\beta,\theta)/C_{\lambda}(\beta,\theta)$ to get (\ref{eq:dc_3}).  We integrate (\ref{eq:dc_3}) for all $\beta$ to get (\ref{eq:dc_4}).  To go from (\ref{eq:dc_4}) to (\ref{eq:dc_5}), we use (\ref{eq:ddd1}).  Equation (\ref{eq:dc_5}) follows from (\ref{eq:sm_2}).

To derive a differential equation for $S_{\lambda|T}(\alpha,\theta)$, we first note that
\begin{align}
S_{\lambda|T}(\alpha,\theta) &= 
E[\widetilde{x}_{\lambda|T}(\alpha)\widetilde{x}_{\lambda|T}(\theta)] \\
&= E[x_{\lambda}(\alpha) x_{\lambda}(\theta)] - 
E[\widehat{x}_{\lambda|T}(\alpha)\widehat{x}_{\lambda|T}(\theta)] \,.
\end{align}
Using (\ref{eq:sm_2}) to compute $E[\widehat{x}_{\lambda|T}(\alpha)\widehat{x}_{\lambda|T}(\theta)]$, we can find an expression for $S_{\lambda|T}(\alpha,\theta)$ as
\begin{align}
S_{\lambda|T}(\alpha,\theta) &= S_{\lambda}(\alpha,\theta) \nonumber \\
&\hspace{-1cm} -\int_{\lambda}^{1} \int_{\Theta} 
\frac{G_{\mu}^2(\alpha)}{D_{\mu}^2(\alpha)}
S_{\mu,\lambda}(\alpha_1,\alpha) S_{\mu,\lambda}(\alpha_1,\theta) d\mu d\alpha_1 \,. \label{eq:last}
\end{align}
Taking derivative of (\ref{eq:last}), we get (\ref{eq:smoother_cov}).

\section*{Acknowledgment}
The authors would like to thank the anonymous reviewers for their comments and suggestions, which greatly improved the quality and presentation of the paper.



\begin{IEEEbiographynophoto}{Divyanshu Vats}
(S'03) received the B.S. degree in electrical engineering and mathematics from The University of Texas at Austin in 2006.  He is currently working towards a Ph.D. in electrical and computer engineering at Carnegie Mellon University.

His research interests include detection-estimation Theory, probability and stochastic processes, information theory,
control theory, graphical models, and machine learning.
\end{IEEEbiographynophoto}

\begin{IEEEbiographynophoto}{Jos\'{e} M. F. Moura}
(S'71--M'75--SM'90--F'94) received degrees from Instituto Superior T\'ecnico (IST),
Lisbon, Portugal and from the Massachusetts Institute of Technology (MIT), Cambridge, MA. 
He  is University Professor at Carnegie Mellon University (CMU), having been on the faculty of IST and having held visiting faculty appointments at MIT. He manages a large education and research program between CMU and Portugal, www.icti.cmu.edu.
His research interests include statistical and algebraic signal and image processing, distributed inference, and network science. He published over  400  technical Journal and Conference papers, is the co-editor of two books, holds eight patents, and has given numerous invited seminars at international conferences, US and European Universities, and industrial and government Laboratories.

Dr.~Moura is \emph{Division Director Elect} (2011) of the IEEE, was the \emph{President} (2008-09) of the \emph{IEEE Signal Processing Society}(SPS),  \emph{Editor in Chief} for the {\em IEEE Transactions in Signal Processing}, interim \emph{Editor in Chief} for the \emph{IEEE Signal Processing Letters}, and was on the Editorial Board of several Journals, including the \emph{IEEE Proceedings}, the \emph{IEEE Signal Processing Magazine}, and the ACM \emph{Transactions on Sensor Networks}. He was on the steering and technical committees of several Conferences. Dr.~Moura is a \emph{Fellow} of the \emph{IEEE}, a \emph{Fellow} of the \emph{American Association for the Advancement of Science} (AAAS), and a corresponding member of the {\em Academy of Sciences of Portugal} (Section of Sciences). He was awarded the \emph{IEEE Signal Processing Society Meritorious Service Award}, the \emph{IEEE Millennium Medal}, an IBM Faculty Award, the CMU's College of Engineering \emph{Outstanding Research Award}, and the CMU Philip L.~Dowd Fellowship Award for Contributions to Engineering Education. In 2010, he was elected University Professor.
\end{IEEEbiographynophoto}

\end{document}